\title[Random Graph Matching with Improved Noise Robustness]{Random Graph Matching with Improved Noise Robustness}
\newcommand{\N}{\mathbb{N}}
\newcommand{\R}{\mathbb{R}}
\newcommand{\Exp}{\mathbb{E}}
\newcommand{\Event}{\mathcal{E}}
\def\Prob{{\mathbb P}}
\newcommand{\sign}{\mathbf{Sign}}
\def\neigh{{\mathcal N}}
\newcommand{\1}{\mathbbm{1}}
\newcommand{\E}{\mathbb{E}}
\newcommand{\p}{\mathbb{P}}
\newcommand{\cE}{\mathcal{E}}
\newcommand{\cI}{\mathcal{I}}
\newcommand{\cJ}{\mathcal{J}}
\newcommand{\Bin}{\mathrm{Binomial}}
\newcommand{\Ber}{\mathrm{Bernoulli}}
\newcommand{\SA}{\mathbf{A}}
\newcommand{\SB}{\mathbf{B}}
\newcommand{\SC}{\mathbf{C}}
\begin{document}

\maketitle

\begin{abstract}%
  Graph matching, also known as network alignment, refers to finding a bijection between the vertex sets of two given graphs so as to maximally align their edges. This fundamental computational problem arises frequently in multiple fields such as computer vision and biology. Recently, there has been a plethora of work studying efficient algorithms for graph matching under probabilistic models. In this work, we propose a new algorithm for graph matching: Our algorithm associates each vertex with a signature vector using a multistage procedure and then matches a pair of vertices from the two graphs if their signature vectors are close to each other. We show that, for two Erd\H{o}s--R\'enyi graphs with edge correlation $1-\alpha$, our algorithm recovers the underlying matching exactly with high probability when $\alpha \le 1 / (\log \log n)^C$, where $n$ is the number of vertices in each graph and $C$ denotes a positive universal constant. This improves the condition $\alpha \le 1 / (\log n)^C$ achieved in previous work.%
\end{abstract}

\begin{keywords}%
  Graph matching, network alignment, correlated Erd\H{o}s--R\'enyi graphs, permutations%
\end{keywords}


\section{Introduction}
\label{sec:intro}

The problem of \emph{graph matching} or \emph{network alignment} consists in finding a bijection between the vertex sets of two given graphs in order to maximally align their edges. 
As a practical problem, graph matching has been studied in pattern recognition for decades \citep{conte2004thirty,emmert2016fifty} and finds applications in many other areas including network security \citep{narayanan2008robust,narayanan2009anonymizing} and computational biology \citep{singh2008global,kazemi2016proper}. 
Mathematically, graph matching can be formulated as a \emph{quadratic assignment} problem
$ \max_{\Pi} \langle A, \Pi B \Pi^\top \rangle , $
where $A$ and $B$ denote the adjacency matrices of the two given graphs respectively, and $\Pi$ is maximized over the set of permutation matrices. 
However, the quadratic assignment problem is known to be NP-hard to solve or approximate in the worst case \citep{Pardalos94thequadratic,burkard1998quadratic,makarychev2010maximum}. 
Even when the two given graphs are isomorphic, in which case graph matching reduces to the \emph{graph isomorphism} problem, the computational complexity is not settled \citep{babai2016graph}.

Fortunately, real-world networks can often be represented by average-case models which circumvent worst-case computational hardness. 
As a result, there has been a plethora of work in the literature studying \emph{random} graph matching. 
In the noiseless setting, a graph isomorphism between Erd\H{o}s--R\'enyi graphs can be found in linear time with high probability in the information-theoretically possible regime
\citep{babai1980random,bollobas1982distinguishing,czajka2008improved}. 
More recently, a model for correlated Erd\H{o}s--R\'enyi graphs has been proposed for graph matching by \cite{pedarsani2011privacy}. 
Since then, there has been a surge of interest in studying related models and algorithms in the literature \citep{yartseva2013performance,lyzinski2014seeded,kazemi2015growing,feizi2016spectral,cullina2016improved,shirani2017seeded,DMWX18,barak2019nearly,bozorg2019seedless,cullina2019partial,dai2019analysis,mossel2020seeded,fan2020spectral,ganassali20a,hall2020partial,racz2020correlated}. 
Our work falls into this category. 

\subsection{Correlated Erd\H{o}s--R\'enyi graph model}
\label{sec:model}

Let us formally state the correlated Erd\H{o}s--R\'enyi graph model \citep{pedarsani2011privacy} before further discussion. 
Let $n$ denote the number of vertices of each graph in consideration. 
Fix $p\in (0,1)$, $\alpha \in [0, 1-p]$ and a positive integer $n$. Let $G_0$ be a $G(n, \frac{p}{1-\alpha})$ Erd\H{o}s--R\'enyi graph, which is called the \emph{parent} graph.
From the graph $G_0$, a subgraph $G$ is obtained by removing every edge of $G_0$ independently with probability $\alpha$. 
Next, another subgraph $G'$ of $G_0$ is obtained in the same fashion, independent from $G$ conditionally on $G_0$. 
Then $G$ and $G'$ are marginally both $G(n, p)$,
and for every pair of distinct indices $i$ and $j$ in $\{1, 2, \dots, n\}$,
$$
\Prob\big\{\mbox{$i$ is adjacent to $j$ in $G$}\,|\,\mbox{$i$ is adjacent to $j$ in $G'$}\big\}
=1-\alpha.
$$
Therefore, $\alpha$ is the noise level in the model. 
Fix an unknown permutation $\pi : [n] \to [n]$. Let $G^{\pi}$ denote the graph obtained from permuting the vertices of $G$ by $\pi$. In other words, $i$ is adjacent to $j$ in $G$ if and only if $\pi(i)$ is adjacent to $\pi(j)$ in $G^{\pi}$. 
The latent permutation $\pi$ represents the unknown matching between the vertices of the two graphs. 
Observing the graphs $G^{\pi}$ and $G'$, we aim to recover the permutation $\pi$. 

\subsection{Previous results and our improvement}

To ease the discussion, we use the standard asymptotic notation $O(\cdot)$, $o(\cdot)$, $\Omega(\cdot)$, $\Theta(\cdot)$, \dots with the understanding that $n$ is growing. We use $\tilde O(\cdot)$ to hide a polylogarithmic factor in $n$. Moreover, we use $C, C', c, c'$, possibly with subscripts, to denote universal positive constants that may change at each appearance. 

For the above model, \cite{cullina2016improved} studied the information-theoretic threshold for the exact recovery of $\pi$. 
In particular, it was shown that $\pi$ can be recovered with high probability if $np(1-\alpha) = \Omega( \log n )$, but no polynomial-time algorithm is known to achieve such a condition. 
Under the condition $np \ge n^{o(1)}$ and $\alpha \le 1 - (\log n)^{-o(1)}$ where $o(1)$ denotes a sub-constant quantity, \cite{barak2019nearly} provided an algorithm with time complexity $n^{O(\log n)}$ that recovers $\pi$ with high probability. 
Furthermore, under the condition $np \ge (\log n)^{C_1}$ and $\alpha \le (\log n)^{-C_2}$, \cite{DMWX18} proposed a polynomial-time algorithm based on degree profiles, and \cite{FMWX19a,FMWX19b} introduced a spectral method, both of which achieved exact recovery of $\pi$ with high probability. 
In addition, for sparse graphs with $C_1 \log n \le np \le e^{(\log \log n)^{C_1}}$, the method proposed by  \cite{DMWX18} recovers $\pi$ in the regime $\alpha \le (\log \log n)^{-C_2}$. 

Note that, in the regime where the average degree satisfies $np \ge (\log n)^{C_1}$, the computationally inefficient algorithms can tolerate a noise level $\alpha$ close to $1$, while the existing polynomial-time algorithms succeed only if $\alpha \le (\log n)^{-C_2}$. 
This prompts us to ask whether this noise condition can be improved. 
In this work, we propose a new efficient algorithm that recovers $\pi$ with high probability under the milder condition $\alpha \le (\log \log n)^{-C}$ for a constant $C>0$, which in particular extends the condition required by \cite{DMWX18} to all sparsity levels of the graphs. 
Moreover, our algorithm has time complexity $\tilde O(n^2)$ which is faster than the aforementioned existing methods. 

The high-level idea of our algorithm is to, in three stage, construct a matching between finer and finer \emph{partitions} of the vertex sets of the two graphs. 
At each stage of the algorithm, we obtain a partition $Q_1 \cup \cdots \cup Q_k$ of the vertex set of $G^\pi$ and a partition $Q'_1 \cup \cdots \cup Q'_k$ of the vertex set of $G'$, in a way that $Q_i$ is ``matched'' to $Q'_i$ for each $i = 1, \dots, k$. 
We have $k = \Theta(\log \log n)$, $k = \Theta(\log n)$, and $k = n$ at the three stages of the algorithm respectively. 
In other words, the matching obtained at the beginning is very coarse, while at the end we have a full matching between individual vertices of the two graphs. The algorithm is motivated and described in detail in Section~\ref{sec:algo-res}. 

\subsection{Other related work and theoretical significance}
There are also a few variants of the graph matching problem considered here. 
For example, in the regime where the average degree is $O(1)$, \cite{ganassali20a,hall2020partial} studied partial recovery of the matching $\pi$. 
Moreover, in the setting where we are given a few correctly matched pairs of vertices known as seeds, the problem is known as seeded graph matching and has been studied as well \citep{kazemi2015growing,mossel2020seeded,yu2020graph}. 
There have also been recent studies on the associated testing problem for correlations between unlabeled random graphs \citep{barak2019nearly,wu2020testing}. 
Finally, other correlated random graph models have been considered, such as a general Wigner model \citep{FMWX19b} and preferential attachment models \citep{korula2014efficient,racz2020correlated}. 
In many of these models, a constant noise level $\alpha$ appears to be a major bottleneck for computationally efficient algorithms. 
However, for exact recovery in the correlated Erd\H{o}s--R\'enyi graph model that we consider, neither an upper bound nor a lower bound of constant order is known. Our work makes one step towards understanding the threshold of the noise level. 


Furthermore, in recent years, there have been intensive efforts in studying average-case matrix models with a latent permutation as the planted signal. 
Besides graph matching, other examples of such models include noisy sorting \citep{braverman2008noisy,mao2018minimax}, 
random assignment \citep{aldous2001zeta,moharrami2019planted}, and hidden nearest neighbor graphs \citep{cai2017detection,ding2020consistent}. 
Unlike models having a planted low-rank signal (such as community detection models \citep{abbe2017community} and spiked random matrix models \citep{perry2018optimality}), models with a planted permutation are far less well-understood, as the combinatorial nature of the latent permutation brings significant algorithmic challenges. 
Graph matching, being one of the simplest and most generic models in this class, therefore deserves further study. 
Our work shrinks the gap between the information-theoretic threshold and guarantees for polynomial-time algorithms, contributing to the understanding of average-case matrix models with planted permutations.

\paragraph{Notation}
For any positive integer $k$, let $[k]$ be the set of integers $\{1,2,\dots,k\}$. Let $\N_+$ denote the set of positive integers and $\N_0$ the set of nonnegative integers. 
Let $\land$ and $\lor$ denote the $\min$ and the $\max$ operator for two real numbers, respectively.
For two positive functions $f$ and $g$ depending on $n$, we will write $f\ll g$ if $f=o_n(g)$ when $n\to\infty$.

For a graph $G$ with vertex set $[n]$ and $i \in [n]$, let $\deg_G(i)$ denote the degree of $i$ in $G$. For a subset $S \subset [n]$, let $\neigh_G(i; S)$ denote the set of neighbors of $i$ within $S$. Let $G(S)$ denote the subgraph of $G$ induced by $S$. 
For $n \in \N_0$ and $p\in(0,1)$, we denote by $F_{n,p}$ the cumulative distribution function of $\Bin(n,p)$. 
Let $\sign : \R \to \{-1,1\}$ denote the sign function defined by $\sign(x) = -1$ if $x<0$ and $\sign(x) = 1$ if $x\ge 0$.


\section{Algorithm and result}
\label{sec:algo-res}

We first motivate and sketch the main steps of a simplified version of our algorithm in Section~\ref{sec:algo-simple}. Along the way, we explain at a high level why the algorithm attains improved noise robustness. We describe our algorithm in full detail in Section~\ref{sec:algo} and state the theoretical guarantee for exact recovery of the matching in Section~\ref{sec:guarantee}. 

\subsection{Sketch of a simplified algorithm and heuristics}
\label{sec:algo-simple}

Recall that we are given graphs $G^\pi$ and $G'$ according to the correlated Erd\H{o}s--R\'enyi graph model defined in Section~\ref{sec:model}, where $G^\pi, G' \sim G(n, p)$ marginally and the noise level $\alpha$ is assumed to be in $[0, (\log \log n)^{-C}]$ for a constant $C > 0$. 
The (simplified) algorithm consists of three stages. 
%
%
%

\paragraph{First-generation partitions}
Consider vertex $i$ of $G^\pi$ and vertex $j$ of $G'$ such that $i = \pi(j)$, that is, $i$ and $j$ should be matched. Because of the correlation between the two graphs, the degrees of $i$ and $j$ are correlated. 
However, when the noise level $\alpha$ is as large as $(\log \log n)^{-C}$, it is impossible to match the vertices using their degrees directly, because the stochasticity in degrees is too large. 


Instead of matching vertices of the two graphs directly, we will match \emph{partitions of the vertex sets} as a first step. More specifically, for a positive integer $m = \Theta( \log \log n )$, we partition the vertex set $[n]$ of $G^\pi$ into subsets $Q_1, \dots, Q_m$ by defining
$$
Q_\ell := \big\{ i \in [n] :\;F_{n,p} ( \deg_{G^{\pi}} (i) ) \in \big((\ell-1)/m, \ell/m\big]\big\} .
$$
In other words, vertices of $G^\pi$ with similar degrees are grouped together, and the precise cutoffs are determined by the $m$-quantiles of the $\Bin(n,p)$ distribution. 
Similarly, we partition the vertex set $[n]$ of $G'$ into subsets $Q'_1, \dots, Q'_m$. 
We refer to $\{Q_1, \dots, Q_m\}$ and $\{Q'_1, \dots, Q'_m\}$ as the \emph{first-generation partitions}. 
Using the correlation between the two graphs, we can show that 
$Q_\ell \approx \pi(Q'_\ell)$ for each $\ell \in [m]$ with high probability. 
That is, the set $Q_\ell$ of vertices of $G^\pi$ is ``matched'' to the set $Q'_\ell$ of vertices of $G'$. 

\paragraph{Second-generation partitions}
The next step is to use the matching between $m$ pairs of sets of vertices to obtain a finer matching between $2^m$ pairs of sets of vertices as follows. 
Namely, we partition the vertex set $[n]$ of $G^\pi$ into subsets $R_s$, $s \in \{-1, 1\}^m$, by defining  
$$
R_s := \big\{i\in [n]:\; \sign \big( |\neigh_{G^{\pi}}(i; Q_\ell)|- p |Q_\ell| \big) = s_\ell \mbox{ for all $\ell \in [m]$}\big\} . 
$$
Similarly, we partition the vertex set $[n]$ of $G'$ into subsets $R'_s$ for $s \in \{-1, 1\}^m$. 
We refer to $\{R_s : s \in \{-1, 1\}^m\}$ and $\{R'_s : s \in \{-1, 1\}^m\}$ as the second-generation partitions. 

If $(i, j)$ is a pair of vertices that should be matched, that is, $i = \pi(j)$, then the number of neighbors of $i$ in $Q_\ell$ in the graph $G^\pi$ is correlated with the number of neighbors of $j$ in $Q'_\ell$ in the graph $G'$. 
As a result, it is more likely that $i$ and $j$ belong to sets $R_s$ and $R'_s$ respectively in view of the above definitions. 
Therefore, the set $R_s$ of vertices of $G^\pi$ is matched to the set $R'_s$ of vertices of $G'$ in the sense that $R_s$ and $\pi(R'_s)$ have a significant overlap.

\paragraph{Vertex signatures}
Given the matched $2^m$ pairs $(R_s, R'_s)$ of sets of vertices, we are ready to define the final matching between vertices of the two graphs. 
The method is very similar to how the second-generation sets were obtained from the first. 
To be more precise, for each vertex $i$ of the graph $G^\pi$, we define a vector $f(i) \in \{-1,1\}^{2^m}$ by
$$
f(i)_s := \sign \big( |\neigh_{G^{\pi}}(i; R_s)|-p |R_s| \big) . 
$$
Similarly, we define such a vector $f'(j) \in \{-1,1\}^{\omega}$ for each vertex $j$ of $G'$. 
We refer to $f(i)$ and $f'(j)$ as the \emph{signatures} of $i$ and $j$ in the two graphs respectively. 

If $i = \pi(j)$, in view of the overlap between $R_s$ and $\pi(R'_s)$, we may show that the signs $f(i)_s$ and $f'(j)_s$ are correlated. 
Using this correlation for all $s \in \{-1,1\}^m$, we can get that, with high probability, the vectors $f(i)$ and $f'(j)$ are sufficiently ``close'' to each other if and only if $i = \pi(j)$. 
The failure probability turns out to be roughly $\exp( - 2^m )$, which is polynomially small in $n$ given that $m = \Theta( \log \log n )$. Consequently, all pairs of vertices of the two graphs can be correctly matched based on their signatures.

\bigskip

In summary, at a noise level $\alpha = (\log \log n)^{-C}$, we can first obtain a matching between $m = \Theta(\log \log n)$ pairs of first-generation sets of vertices in the two graphs. From there, we then get a refined matching between $2^m$ pairs of second-generation sets of vertices. Finally, another refinement yields the full matching between all $n$ pairs of vertices of the two graphs, where crucially $n \ll 2^{2^m}$. 

To reiterate the high-level heuristics, let us instead consider the case of a lower noise level $\alpha = (\log n)^{-C}$ for a sufficiently large constant $C>0$. 
In this case, we may instead take $m = \Theta(\log n)$ and still obtain a matching between $m$ pairs of first-generation sets of vertices. 
If, for example, $m \ge 2 \log_2 n$, then we already have $n \ll 2^m$. Hence, the second step of the algorithm can be skipped, and the signatures of vertices can be defined directly based on the first-generation sets to yield the full matching. 

\bigskip
The above outline of our algorithm omitted a few intricacies. 
Most importantly, the multiple stages of the algorithm result in probabilistic dependencies that are difficult to understand. 
To mitigate this issue, we will partition the vertex set $[n]$ of each graph into three groups, and run the three steps of the algorithm on these three groups of vertices respectively. 
This way, in particular, we are able to condition on a realization of the first-generation sets
to study the structure of the second generation, and then condition on both the first and the second generation to study the vertex signatures. 
Another issue is that, even with the above modification, the entries of a signature $f(i)$ are not independent, and we need an extra ``sparsification'' step to weaken the dependency across its entries. 
More specifically, this sparsification step entails choosing a uniform random subset $\cI \subset \{-1,1\}^{m}$ of suitable cardinality and using $\{R_s, R'_s : s \in \cI\}$ to construct vertex signatures, rather than employing all second-generation sets. 
The full algorithm is described in the following subsection.

\subsection{The full algorithm}
\label{sec:algo}

We propose the following algorithm, where $\beta$, $m$ and $\omega$ are parameters to be specified in Theorem~\ref{thm:main}.

\begin{enumerate}[leftmargin=*,label={\arabic*.}]
\item 
Partition the vertex set $[n]$ of the first graph $G^{\pi}$ uniformly at random into sets $\SA$, $\SB$ and $\SC$ of cardinalities $|\SA|= n - \beta n$ and $|\SB| = |\SC| = 0.5 \beta n$. 
Similarly, and independently from $(\SA,\SB,\SC)$, partition the vertex set $[n]$ of the second graph $G'$ into $\SA'$, $\SB'$ and $\SC'$ of the same corresponding cardinalities. 

\item 
Partition the set $\SA$ into subsets $Q_1, \dots, Q_m$ according to the degrees of vertices in the induced graph $G^{\pi}(\SA)$. More precisely, for each $\ell \in [m]$, define 
\begin{align}
Q_\ell:=\big\{ i \in \SA :\;F_{|\SA|,p} ( \deg_{G^{\pi}(\SA)} (i) ) \in \big((\ell-1)/m, \ell/m\big]\big\}  .
\label{eq:first-gen}
\end{align}
Similarly, partition the set $\SA'$ into subsets $Q'_1,\dots,Q'_m$ according to $G'$.

\item 
Partition the set $\SB$ into $2^m$ subsets $R_s$ for $s \in \{-1,1\}^m$ according to the number of neighbors of $i \in \SB$ within $Q_1, \dots, Q_m$ in the graph $G^{\pi}$. More precisely, for each $s\in\{-1,1\}^m$, let
\begin{align}
R_s := \big\{i\in \SB:\; \sign \big( |\neigh_{G^{\pi}}(i; Q_\ell)|- p |Q_\ell| \big) = s_\ell \mbox{ for all $\ell \in [m]$}\big\} . 
\label{eq:sec-gen}
\end{align}
Similarly, partition the set $\SB'$ into $2^m$ subsets $R'_s$ where $s \in \{-1,1\}^m$ according to $G'$. 

\item 
Choose a uniform random subset $\cI \subset \{-1,1\}^{m}$ of cardinality $\omega$. 
For each vertex $i\in \SC$, define a vector $f(i) \in \{-1,1\}^{\omega}$ by
\begin{align}
f(i)_s := \sign \big( |\neigh_{G^{\pi}}(i; R_s)|-p |R_s| \big) ,\quad s\in\cI.
\label{eq:sig}
\end{align}
Similarly, define a vector $f'(i) \in \{-1,1\}^{\omega}$ for each $i \in \SC'$ according to $G'$.

\item 
We say that two vertices $i\in \SC$ in $G^{\pi}$ and $j\in \SC'$ in $G'$ are \emph{potentially} matched if
\begin{align}
\sum_{s \in \cI} \1 \big\{ f(i)_s = f'(j)_s \big\} > \frac{\omega}{2} \Big( 1 + \frac{\beta}{\log\log n} \Big)
\label{eq:decide}
\end{align}
and are \emph{potentially} not matched otherwise. 

\item
Repeat Steps 1--5 $(\log \log n)^2/\beta^4$ times. For two vertices $i \in [n]$ in $G^{\pi}$ and $j \in [n]$ in $G'$, if \emph{every time}
that $\SC$ contains $i$ and $\SC'$ contains $j$, the two vertices $i$ and $j$ are potentially matched, then we match $i$ and $j$, that is, we define
$\hat \pi(j) := i$.
If a well-defined permutation $\hat \pi$ is obtained from considering all pairs of vertices $(i, j) \in [n]^2$, then return $\hat \pi$ as the estimator of $\pi$; otherwise, return ``error".
\end{enumerate}

Again, we refer to the class of sets of vertices $\{Q_1, \dots, Q_m\}$ and the counterpart $\{Q'_1, \dots, Q'_m\}$ as the first generation of sets. Similarly, we refer to $\{R_s : s \in \{-1,1\}^m \}$ and $\{R'_s : s \in \{-1,1\}^m \}$ as the second generation of sets. 
In addition, we refer to the vector $f(i)$ as the signature of $i$ in $G^{\pi}$ and
$f'(i)$ as the signature of $i$ in $G'$.

\subsection{Theoretical guarantee}
\label{sec:guarantee}

It is not difficult to see that the running time of Steps~1 -- 4 is at the order of the number of edges in the graph, while Step~5 has time complexity $O(n^2)$. Therefore, the overall time complexity of the algorithm is $\tilde O(n^2)$. 
Our main result is the following guarantee for the algorithm. 
\begin{theorem}
\label{thm:main}
For $\delta \in(0, 0.1)$, there exists a universal constant $n_0 = n_0(\delta) > 0$ such that the following holds. 
Fix $n \ge n_0$, $p \in (0, 1/2]$ and $\alpha \in [0, 1-p]$ such that 
\begin{align}
np \ge (\log n)^{13} , \quad
\alpha \le (\log \log n)^{-6-\delta} . 
\label{eq:conditions}
\end{align}
For a fixed unknown permutation $\pi : [n] \to [n]$, let $G^\pi$ and $G'$ be given by the correlated Erd\H{o}s--R\'enyi graph model with parameters $n, p$ and $\alpha$ defined in Section~\ref{sec:model}. 
Let us define
\begin{align}
\beta &:= \min \big\{ \beta' \ge (\log \log n)^{-6-\delta} : 0.5 \beta' n \text{ is an integer} \big\} , \label{eq:def-beta} \\
m &:= \lfloor 6 \log_2 \log n \rfloor , \label{eq:def-m} \\
\omega &:= \lfloor (\log n)^{(1 + 2 \delta)} \rfloor . \label{eq:def-omega} 
\end{align}
Then the algorithm described in Section~\ref{sec:algo} with parameters $\beta$, $m$ and $\omega$ has time complexity $\tilde O(n^2)$ and returns the exact matching $\hat \pi = \pi$ with probability at least $1 - n^{-(\log n)^{\delta/4}}.$
\end{theorem}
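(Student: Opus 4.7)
The plan is to analyze the algorithm generation by generation, exploiting the disjointness of $\SA,\SB,\SC$ (and $\SA',\SB',\SC'$) to render each stage conditionally independent of the randomness used in the preceding ones. \textbf{First generation.} With $\SA,\SA'$ and $\pi$ fixed, $\deg_{G^\pi(\SA)}(i)$ and $\deg_{G'(\SA')}(\pi^{-1}(i))$ are sums of $|\SA|-1$ pairs of $(1-\alpha)$-correlated Bernoulli$(p)$ variables. A Berry--Esseen estimate, combined with the fact that the $m$-quantiles of $F_{|\SA|,p}$ are spaced at about $\sqrt{|\SA|p(1-p)}/m$, shows that the two degrees fall in different buckets with probability $O(m\sqrt{\alpha})$. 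Summing over vertices and applying Markov's inequality yields, with high probability, the uniform alignment $|Q_\ell \triangle \pi(Q'_\ell)| \le \eta_1 |Q_\ell|$ for all $\ell \in [m]$, with $\eta_1 = o(1)$ since $\alpha \le (\log\log n)^{-6-\delta}$ and $m = O(\log\log n)$.

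\textbf{Second generation.} Conditioning on $\SA,\SA'$ and on $G_0$ restricted to $\SA\cup\SA'$, I analyze the bits $\sigma^\pi_\ell(i) := \sign(|\neigh_{G^\pi}(i;Q_\ell)|-p|Q_\ell|)$ for $i\in\SB$, and the analogue for $j\in\SB'$. Because $\SB$ is disjoint from $\SA$, these bits depend on fresh edge indicators. A second Berry--Esseen estimate, using $|Q_\ell\cap\pi(Q'_\ell)|\ge(1-\eta_1)|Q_\ell|$ and the $(1-\alpha)$-coupling, shows $\Prob(\sigma^\pi_\ell(i)=\sigma'_\ell(j))\ge\tfrac{1}{2}+c_1$ for matched pairs $i=\pi(j)$, with $c_1$ bounded below by an absolute positive constant. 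Combining conditional independence across $\ell$ with Bernstein, I will verify that with high probability $|R_s|\asymp 2^{-m}|\SB|$ and $|R_s\cap\pi(R'_s)|\asymp\beta|R_s|$ uniformly over all $s\in\{-1,1\}^m$.

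\textbf{Signatures (the main obstacle).} The critical estimate is that the per-coordinate bias for matched pairs must exceed the algorithmic threshold $\beta/\log\log n$, while unmatched pairs stay safely below it. Conditioning on the first two generations and on $\cI$, for a matched pair $i=\pi(j)$ with $i\in\SC,\,j\in\SC'$ and a good $s\in\cI$, the correlation between $|\neigh_{G^\pi}(i;R_s)|$ and $|\neigh_{G'}(j;R'_s)|$ equals
\[
\rho \;=\; \frac{|R_s\cap\pi(R'_s)|\,(1-\alpha)}{\sqrt{|R_s||R'_s|}}\;=\;\Theta(\beta),
\]
by the Stage~2 overlap estimate. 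A Berry--Esseen expansion then gives $\Prob(f(i)_s=f'(j)_s)\ge\tfrac{1}{2}+c_2\beta$ for an absolute $c_2>0$. For an unmatched pair $i\ne\pi(j)$, the bits depend on disjoint families of parent-graph edge indicators except for at most one shared edge pair (which contributes at most $O(|R_s|^{-1/2})=o(\beta/\log\log n)$ to the bias), so the two bits are effectively independent and agree with probability $\tfrac{1}{2}+o(\beta/\log\log n)$. Once the $R_s$ are fixed, the random sparsification $\cI$ together with the disjointness of the $R_s$'s ensures that the $\omega$ coordinates $\{f(i)_s\}_{s\in\cI}$ depend on disjoint edge indicators and are conditionally independent; Hoeffding's inequality then bounds the per-pair failure probability by $\exp(-c\omega\beta^2)=\exp\bigl(-c(\log n)^{1+2\delta}(\log\log n)^{-12-2\delta}\bigr)$, vastly smaller than $n^{-2}$.

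\textbf{Iteration and union bound.} On each iteration $\Prob(i\in\SC,\,j\in\SC')=\Theta(\beta^2)$, so over $T=(\log\log n)^2/\beta^4$ iterations any given pair is sampled at least once with probability $1-\exp(-\Theta((\log\log n)^2/\beta^2))$, which is super-polynomially close to $1$. A union bound over the $n^2$ pairs and the $T$ iterations, using the per-iteration signature-level success probability from the preceding stage, then yields the overall exact-recovery probability $1-n^{-(\log n)^{\delta/4}}$ claimed in Theorem~\ref{thm:main}.
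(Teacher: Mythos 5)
Your overall architecture matches the paper's (three generations, sparsification, signature comparison, repetition), but there is a quantitative gap at Stage 2 that breaks the downstream argument. You claim the per-coordinate sign agreement for a matched pair is only $\Prob(\sigma^\pi_\ell(i)=\sigma'_\ell(j))\ge \tfrac12+c_1$ for an absolute constant $c_1$, and then conclude $|R_s\cap\pi(R'_s)|\asymp\beta|R_s|$. These two statements are incompatible: since each sign has marginal probability $\approx\tfrac12$, an agreement probability $q=\tfrac12+c_1$ gives $\Prob\{i\in R_s\cap\pi(R'_s)\mid i\in\SB\cap\SB'\}\approx (q/2)^m$, so $|R_s\cap\pi(R'_s)|$ is smaller than $\beta|R_s|\asymp\beta^2n/2^m$ by a factor $q^m=(\log n)^{-\Theta(1)}$. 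Your Stage-3 correlation then becomes $\rho=\Theta(\beta(\log n)^{-c})$, which falls \emph{below} the decision threshold $\beta/\log\log n$ in \eqref{eq:decide}, and matched pairs are no longer separated from unmatched ones. What is actually needed (and what the paper proves in Lemma~\ref{lem:sign-same}) is agreement probability $1-O(\sqrt{m}\,\beta^{1/4}\log(1/\beta))=1-o(1)$, close enough to $1$ that a constant fraction of $\SB\cap\SB'$ survives all $m$ sign tests simultaneously; this follows because the \emph{difference} of the two centered neighborhood counts has standard deviation $O(\sqrt{\alpha pn/m}+\sqrt{p|Q_\ell\triangle Q'_\ell|})$, much smaller than the $\Theta(\sqrt{pn/m})$ scale of each count, not merely because the counts are positively correlated. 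The multistage design hinges on this asymmetry: Stage 2 needs near-certain per-coordinate agreement (intersection of $m$ events), while Stage 3 only needs bias $\Theta(\beta)$ per coordinate (sum over $\omega$ coordinates).

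Two further points. First, your claim that sparsification makes the $\omega$ signature coordinates "depend on disjoint edge indicators" is not correct: the sets $R_s$ are pairwise disjoint and so are the $R'_t$, but $R_s\cap R'_t$ for $s\ne t$ is generally nonempty, and for a matched pair the edges from $i$ into $R_s\cap R'_t$ enter both $f(i)_s$ and $f'(i)_t$. Sparsification only makes these cross-intersections \emph{small}; one must still condition on the edges into them and show the induced shifts $\eta(i)_s$ in \eqref{eq:eta-s} are $o(\sqrt{p|R_s|})$ (Lemma~\ref{lem:spa-2}), after which the coordinates become conditionally independent. Second, Markov's inequality at Stage 1 only yields an in-expectation bound on $\sum_\ell|Q_\ell\triangle\pi(Q'_\ell)|$, hence a constant failure probability; since the theorem requires failure probability $n^{-(\log n)^{\delta/2}}$ per iteration (to survive the union bound over $n^2$ pairs and $(\log n)^2/\beta^4$ iterations), you need a concentration argument that handles the dependence among degrees of different vertices, e.g.\ the decoupling-by-blocks device of Lemmas~\ref{l: emp vs binom} and~\ref{lem:inclusion-2}.
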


A few questions remain open: 
\begin{itemize}
\item
The assumption on the average degree $np > (\log n)^{13}$ can potentially be improved. The current bottleneck is \eqref{eq: aux 20920498}, where we study the correlation between the second-generation partitions: In short, if the average degree is not sufficiently large, then some of the second-generation sets can be too small with a nontrivial probability, which becomes an issue in the following steps of the algorithm. 

\item
If the parameter $p$ unknown, our algorithm can be run with an empirical estimate of $p$. For example, if we define $\hat p$ to be the total number of edges in a graph divided by $\binom{n}{2}$, then the error $\hat p - p$ is of order $\sqrt{p}/n$, which is much smaller than $p$. We can potentially show that the modified algorithm with parameter $\hat p$ works under a comparable set of conditions. However, this intricacy is left out in the current work.

\item
It would be interesting to know whether the same guarantee can be achieved if we simplify the algorithm by omitting the step of splitting the vertex set into three parts and the step of sparsification. 

\item
Our algorithm runs in three stages so as to refine the estimated matching between partitions of the vertex sets. 
If we consider more generations of partitions in an iterative fashion, can some multistage algorithm achieve exact recovery of the underlying matching for a noise level $\alpha = (\log \cdots \log n)^{-C}$ or even better? 

\item
Finally, as discussed in the introduction, a constant noise level appears to be a major bottleneck for computationally efficient algorithms in random graph matching. 
While our work makes one step towards a constant $\alpha$, the problem deserves further investigation. 
\end{itemize}


\section{Proofs}
\label{sec:proof}

In this section, we present the key steps of the proof of Theorem~\ref{thm:main}, with additional details left to Appendix~\ref{sec:add-pf}. 
The analysis of the algorithm proceeds as follows. 
In Section~\ref{sec:first-gen}, we establish that the first-generation sets have size roughly $n/m$ each, and that the corresponding
sets $Q_\ell$ and $\pi(Q'_\ell)$ have large intersections with high probability. 
In Section~\ref{sec:sec-gen}, we show that the intersection between the second-generation sets $R_s$ and $\pi(R'_s)$ is sufficiently large for most indices $s\in\{-1,1\}^m$ with high probability. 
In Section~\ref{sec:sparse}, we perform sparsification to extract collections of second-generation sets
$\{R_s : s\in\cI\}$ and $\{R_s' : s\in\cI\}$ which have negligible intersections $R_s
\cap \pi(R_{\tilde s}')$ for distinct $s,\tilde s\in\cI \subset \{-1,1\}^m$.
In Section~\ref{sec:ver-sig}, we compare the vertex signatures of the two graphs. 
In Section~\ref{1874601847608}, we put together the pieces and prove Theorem~\ref{thm:main}.

To simplify the notation, we assume without loss of generality that the latent matching $\pi$ is the identity and thus $G = G^{\pi}$ throughout the proof. 

\subsection{First-generation partitions}
\label{sec:first-gen}

The subsets $\SA$ and $\SA'$ generated at Step~1 of the algorithm have cardinalities very close to $n$, so the sets $\SA$ and $\SA'$ have a large intersection. 
This, together with the correlation between the edges of $G$ and $G'$, implies that the degrees of corresponding
vertices of $G$ in $\SA$ and of $G'$ in $\SA'$ should be close to each other with high probability. 
Therefore, the corresponding elements of the
partitions of $\SA$ into $Q_1, \dots, Q_m$
and of $\SA'$ into $Q_1', \dots, Q_m'$ have large intersections with high probability. 
This is formally stated in the following lemma. 

\begin{lemma}[Correlation of first-generation sets]
\label{lem:sym-diff-1}
For any $\delta>0$ there is $n_0(\delta) \in \N_+$ with the following property. Let $n\geq n_0$, and assume \eqref{eq:conditions}, \eqref{eq:def-beta} and \eqref{eq:def-m}.
With probability at least $1 - 5 n^3 \exp(-\beta^3 \sqrt{pn} )$, the first generation of sets defined by \eqref{eq:first-gen} satisfies
\begin{align}
|Q_\ell \cap Q'_\ell| \ge n/m - C n \sqrt{\beta \log(1/\beta)} ,
\quad
|Q_\ell \triangle Q'_\ell| \le C n \sqrt{\beta \log(1/\beta)}, 
\label{eq:cor-1}
\end{align}
for all $\ell \in [m]$, where $C>0$ is a universal constant, and $\triangle$ denotes the symmetric difference. 
\end{lemma}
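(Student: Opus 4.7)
The plan is to bound $|Q_\ell \triangle Q'_\ell|$ by splitting the contributions according to whether a vertex lies in $\SA\cap\SA'$ or in $\SA\triangle\SA'$, and handling each piece. Vertices outside $\SA\cap\SA'$ contribute at most $|\SA\triangle\SA'|$, which by a standard hypergeometric tail bound is sharply concentrated around $2\beta(1-\beta) n$ and hence contributes only $O(\beta n) \ll n\sqrt{\beta\log(1/\beta)}$. The main work is therefore to control the misclassification of vertices $i\in\SA\cap\SA'$.

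Set $D(i) := \deg_{G(\SA)}(i)$, $D'(i) := \deg_{G'(\SA')}(i)$, $X(i) := F_{|\SA|,p}(D(i))$, and $Y(i) := F_{|\SA'|,p}(D'(i))$, so that $Q_\ell\cap\SA\cap\SA'=\{i\in\SA\cap\SA':X(i)\in J_\ell\}$ for $J_\ell:=((\ell-1)/m,\ell/m]$, and analogously for $Q'_\ell$. A vertex $i\in\SA\cap\SA'$ can lie in $Q_\ell\triangle Q'_\ell$ only if either (a) $|X(i)-Y(i)|>\eta$ for a threshold $\eta$ to be chosen, or (b) $X(i)$ lies within distance $\eta$ of one of the endpoints $(\ell-1)/m,\ell/m$. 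I would bound the expected number in each category and then concentrate.

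For (a), condition on $(\SA,\SA')$ on the event $|\SA\cap\SA'|\approx (1-\beta)^2 n$, and decompose
\begin{equation*}
D(i) - D'(i) = S_1(i) + S_2(i) - S_3(i),
\end{equation*}
where $S_1(i):=\sum_{j\in(\SA\cap\SA')\setminus\{i\}}(\1\{(i,j)\in G\}-\1\{(i,j)\in G'\})$, and $S_2(i)$, $S_3(i)$ are the centered counts of the edges of $G$ from $i$ into $\SA\setminus\SA'$ and of $G'$ from $i$ into $\SA'\setminus\SA$, respectively. Conditional on $(\SA,\SA')$, each of these is a sum of independent bounded centered summands whose total variance is $O(\alpha pn + \beta pn) = O(\beta pn)$, using $\alpha \le \beta$ from \eqref{eq:conditions} and \eqref{eq:def-beta}. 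Bernstein's inequality thus gives $\Prob(|D(i)-D'(i)|>t)\le 2\exp(-ct^2/(\beta pn))$. The Stirling estimate $\max_k\Prob(\Bin(|\SA|,p)=k)\le C/\sqrt{pn}$ implies $|X(i)-Y(i)|\le C|D(i)-D'(i)|/\sqrt{pn}$, whence $\Prob(|X(i)-Y(i)|>\eta)\le 2\exp(-c\eta^2/\beta)$. For (b), the same density bound makes $X(i)$ approximately uniform on $[0,1]$, so the probability that $X(i)$ lies within $\eta$ of $\{(\ell-1)/m,\ell/m\}$ is at most $3\eta$. Taking $\eta=\sqrt{\beta\log(1/\beta)}$ makes both per-vertex bounds $O(\sqrt{\beta\log(1/\beta)})$, and in expectation $|Q_\ell\triangle Q'_\ell|\le Cn\sqrt{\beta\log(1/\beta)}$.

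The main obstacle is upgrading this expectation to the high-probability statement $1-5n^3\exp(-\beta^3\sqrt{pn})$. The indicators $\phi(i):=\1\{i\in Q_\ell\triangle Q'_\ell\}$ are dependent across $i$, since $\phi(i_1)$ and $\phi(i_2)$ share the edge $(i_1,i_2)$ in both $G$ and $G'$; however, this dependence is weak. I expect to close the argument by viewing $\sum_i\phi(i)$ as a function of the independent edge indicators of $(G,G')$ and the independent vertex-partition variables defining $(\SA,\SA')$, and applying a sharpened Freedman/Talagrand-type martingale inequality that exploits the fact that flipping a single edge changes at most two summands and that a single vertex reassignment changes at most one, combined with the sub-Gaussian tails of the individual $\phi(i)$. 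A union bound over $\ell\in[m]$ then controls $|Q_\ell\triangle Q'_\ell|$ uniformly, while the companion bound $|Q_\ell\cap Q'_\ell|\ge n/m-Cn\sqrt{\beta\log(1/\beta)}$ follows from $|Q_\ell\cap Q'_\ell|\ge|\{i\in\SA\cap\SA':X(i)\in J_\ell\}|-|Q_\ell\triangle Q'_\ell|$ together with a further (easier) concentration of the first count around $|\SA\cap\SA'|/m\ge n/m-O(\beta n)$.
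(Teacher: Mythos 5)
Your high-level route is the same as the paper's: show that a matched vertex has nearly the same degree in $G(\SA)$ and $G'(\SA')$ (your event (a), which is essentially Lemma~\ref{lem:neigh-diff} applied with $J=\SA$, $J'=\SA'$), and show that few vertices have degrees near the quantile cutoffs (your event (b), the content of Lemma~\ref{lem:non-conc}). Your per-vertex estimates are sound up to constants (you would need to take $\eta=C'\sqrt{\beta\log(1/\beta)}$ with $C'$ large so that the Bernstein exponent $c\eta^2/\beta=cC'^2\log(1/\beta)$ actually beats $\tfrac12\log(1/\beta)$), and the reductions — the deterministic bound $|\SA\triangle\SA'|\le 2\beta n$, and the implication ``misclassified $\Rightarrow$ (a) or (b)'' — are correct.

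The genuine gap is the concentration step, which you defer to an unspecified ``sharpened Freedman/Talagrand-type'' inequality. This is not a routine appeal: it is the crux of the lemma, and the crude versions of these inequalities demonstrably fail here. In the edge-exposure filtration there are $\Theta(n^2)$ independent coordinates each with worst-case influence $2$ on $\sum_i\phi(i)$, so Azuma/McDiarmid concentrates only at scale $n$, which is useless against the target deviation $n\sqrt{\beta\log(1/\beta)}$; in the vertex-exposure filtration a single step can shift every degree by $1$ and hence flip up to $n$ indicators. Moreover the required failure probability $\exp(-\beta^3\sqrt{pn})$ is far smaller than what a ``typical bounded differences'' patch would naturally produce, and the same unproved concentration is needed in three places (your events (a) and (b), and the lower bound $|Q_\ell|\ge n/m-O(n\sqrt\beta)$). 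The paper resolves exactly this issue by decoupling rather than by a stronger martingale inequality: it partitions the vertices into blocks $I_j$ of size $\approx\tau$, replaces $\deg_{G(\SA)}(i)$ by $|\neigh_G(i;\SA\setminus I_j)|$ (an error of at most $|I_j|$, absorbed by widening the degree windows by $3\tau$), so that the indicators within one block depend on disjoint edge sets and are genuinely independent; Hoeffding within each block and a union bound over the $\approx n/\tau$ blocks then give the stated probability. This is the content of Lemmas~\ref{l: emp vs binom}, \ref{lem:non-conc}, \ref{lem:card-1} and~\ref{lem:inclusion-2}, and without it (or a worked-out substitute) your argument only establishes the bound in expectation.
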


By our choice of $m$ and $\beta$, the lemma guarantees that $|Q_\ell \triangle Q'_\ell| \ll |Q_\ell \cap Q'_\ell|$ for all $\ell \in [m]$ with high probability. In other words, $Q_\ell$ and $Q'_\ell$ almost coincide, so the first-generation partitions are well matched.  
The proof of Lemma~\ref{lem:sym-diff-1} can be found in Section~\ref{subs: fg corr} and is based on preliminary results established in Section~\ref{subs: er aux lem}.

\subsection{Second-generation partitions}
\label{sec:sec-gen}

We now turn to the second-generation partitions.  
Recall that $\SB$ and $\SB'$ are independent random sets of cardinality $0.5 \beta n$ each. 
We first show that with high probability, the cardinality of each second-generation set $R_s$ or $R'_s$ is of order $\beta n/2^m$, and the intersection $R_s \cap \SB'$ has cardinality of order $\beta^2 n/2^m$. 
Furthermore, using the correlation between the two graphs, we establish that the intersection $R_s \cap R'_s$ also has cardinality of order $\beta^2 n/2^m$ for most $s \in \{-1, 1\}^m$ with high probability. 
Therefore, $R_s \cap R'_s$ occupies a significant portion of $R_s \cap \SB'$ for most $s$, which is the basis for constructing correlated vertex signatures later. 
These estimates are made precise in the following two lemmas. 


\begin{lemma}[Cardinalities of second-generation sets]
\label{lem:card-2}
For any $\delta>0$ there is $n_0(\delta)$ with the following property. Let $n\geq n_0$, and assume \eqref{eq:conditions}, \eqref{eq:def-beta} and \eqref{eq:def-m}. The following statements hold: 
\begin{itemize}
\item 
We have
\begin{align}
\frac{\beta^2 n}8 \le |\SB \cap \SB'| \le \frac{ \beta^2 n}2 
\label{eq:b-size}
\end{align}
with probability at least $1- 2 \exp( - \beta^2 n /40 )$. 

\item 
Conditional on any realization of $\{\SA,\SA',G(\SA),G(\SA'),\SB,\SB'\}$ such that \eqref{eq:cor-1} and \eqref{eq:b-size} hold, 
we have
\begin{align}
\frac{\beta n}{ 2^{m+3} } \le |R_s|, |R'_s| \le \frac{\beta n}{ 2^{m-1} },\quad
\frac{\beta^2 n}{ 2^{m+5} } \le |R_s \cap \SB'| \le \frac{\beta^2 n}{ 2^{m-1} },\quad
s \in \{-1,1\}^m
\label{eq:card-bd}
\end{align}
with conditional probability at least $1 - 2^{m+2} \exp( - \beta^2 n / 2^{m+8})$.
\end{itemize}
\end{lemma}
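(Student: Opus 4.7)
The plan is to handle the two bullets of the lemma separately, with the second bullet carrying the bulk of the work. For the first bullet, I will note that conditional on $\SB'$ the cardinality $|\SB\cap\SB'|$ is hypergeometric with parameters $(n,0.5\beta n,0.5\beta n)$ and mean $\beta^2 n/4$, and then apply a multiplicative Chernoff tail bound for the hypergeometric to obtain deviation probability at most $2\exp(-c\beta^2 n)$; tracking the numerical constant will show $c=1/40$ is enough for the claimed interval $[\beta^2 n/8,\beta^2 n/2]$. Since the bound is uniform in $\SB'$, it will also hold unconditionally.

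For the second bullet, I will fix a realization of $\{\SA,\SA',G(\SA),G(\SA'),\SB,\SB'\}$ satisfying \eqref{eq:cor-1} and \eqref{eq:b-size}, thereby pinning down the partition $Q_1,\dots,Q_m$; by \eqref{eq:cor-1} each $|Q_\ell|$ will be within a constant factor of $n/m$. The key observation is that the edges of $G$ joining $\SB$ to $\SA$ have not yet been exposed by this conditioning, so they remain mutually independent $\Ber(p)$ both across vertices $i\in\SB$ and across the disjoint target sets $Q_1,\dots,Q_m$. Hence $|\neigh_G(i;Q_\ell)|\sim\Bin(|Q_\ell|,p)$, independent across $\ell$. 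A Berry--Esseen estimate for the Binomial will then give
\[
\Prob\bigl(\sign(|\neigh_G(i;Q_\ell)|-p|Q_\ell|)=s_\ell\bigr)=\tfrac{1}{2}+O\bigl((|Q_\ell|p)^{-1/2}\bigr),
\]
and multiplying across $\ell\in[m]$ should yield $\Prob(i\in R_s)=2^{-m}(1+o(1))$ uniformly in $s\in\{-1,1\}^m$. This will place $\Exp|R_s|$ within a constant factor of $\beta n/2^{m+1}$ and, invoking \eqref{eq:b-size}, $\Exp|R_s\cap\SB'|$ within a constant factor of $\beta^2 n/2^{m+2}$.

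For concentration, the indicators $\{\1\{i\in R_s\}\}_{i\in\SB}$ are mutually independent because they depend on disjoint edge sets, and the same holds for the sub-collection indexed by $i\in\SB\cap\SB'$. I will apply a multiplicative Chernoff bound to $|R_s|$ and $|R_s\cap\SB'|$ separately; the $|R_s\cap\SB'|$ bound is the binding one and gives failure probability at most $2\exp(-c\beta^2 n/2^{m+2})$ per $s$. A union bound over the $2^m$ signatures and the two quantities will then produce the claimed $1-2^{m+2}\exp(-\beta^2 n/2^{m+8})$.

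The main obstacle I anticipate is not a deep one but a bookkeeping one: ensuring that $\Prob(i\in R_s)=2^{-m}(1+o(1))$ holds uniformly in $s$, since a multiplicative error of order $1/m$ in each of the $m$ factors could blow up to a constant or worse when the factors are multiplied. This is precisely where the hypothesis $np\ge(\log n)^{13}$ pays off: the per-factor error $O((|Q_\ell|p)^{-1/2})=O(\sqrt{m/(np)})$ aggregates to a total relative error $O(m^{3/2}/\sqrt{np})$, which is $o(1)$ given $m=O(\log\log n)$. Everything else --- the Chernoff steps, the verification of independence, and the final union bound --- is routine.
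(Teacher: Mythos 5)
Your proposal is correct and follows essentially the same route as the paper: the first bullet is handled via a tail bound for sampling without replacement (the paper uses Bernstein for sampling without replacement, equivalent to your hypergeometric Chernoff), and the second bullet conditions on the first generation, shows $\Prob(i\in R_s)=2^{-m}(1+o(1))$ by bounding each sign probability by $\tfrac12+O(\sqrt{m/(np)})$ and controlling the aggregated error $O(m^{3/2}/\sqrt{np})$ exactly as you describe, then concentrates $|R_s|$ and $|R_s\cap\SB'|$ as sums of independent indicators and takes a union bound over $s$. The paper obtains the per-coordinate estimate from the median of the binomial plus a local bound on the binomial c.d.f.\ rather than Berry--Esseen, but this is an immaterial difference.
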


\begin{lemma}[Correlation of second-generation sets]
\label{lem:cor-2}
For any $\delta>0$ there is $n_0(\delta)$ with the following property. Let $n\geq n_0$, and assume \eqref{eq:conditions}, \eqref{eq:def-beta} and \eqref{eq:def-m}.
Then, conditional on any realization of $\{\SA,\SA',G(\SA),G(\SA'),\SB,\SB'\}$
such that \eqref{eq:cor-1}, \eqref{eq:b-size} 
hold, the second generation of partitions satisfies 
\begin{equation}
\label{eq:cor-bd}
\begin{split}
\Big| \Big\{ s \in \{-1, 1\}^m : |R_s \cap R'_s| \le \beta^2 n / 2^{m+6} \Big\} \Big| \le C \, 2^{m} m^{3/2} \, \beta^{1/4} \log(1/\beta)
\end{split}
\end{equation}
with conditional probability at least 
$1 - \exp ( - \beta^{9/4}n/2^m )$.
\end{lemma}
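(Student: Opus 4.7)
Augment the conditioning in the statement by the event \eqref{eq:card-bd} from Lemma~\ref{lem:card-2}; its failure probability $2^{m+2}\exp(-\beta^2 n/2^{m+8})$ is negligible compared with $\exp(-\beta^{9/4}n/2^m)$ under \eqref{eq:conditions}--\eqref{eq:def-m}. For $i\in\SB\cap\SB'$ and $\ell\in[m]$, set $X_\ell(i):=\sign(|\neigh_G(i;Q_\ell)|-p|Q_\ell|)$ and $Y_\ell(i):=\sign(|\neigh_{G'}(i;Q_\ell')|-p|Q_\ell'|)$, and write $X(i),Y(i)\in\{-1,1\}^m$. Since $R_s\cap R_s'=\{i\in\SB\cap\SB':X(i)=Y(i)=s\}$ and $R_s\cap\SB'=\{i\in\SB\cap\SB':X(i)=s\}$, the per-$s$ deficit $d_s:=|R_s\cap\SB'|-|R_s\cap R_s'|$ equals the number of $i\in\SB\cap\SB'$ with $X(i)=s$ but $Y(i)\ne s$. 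Combined with $|R_s\cap\SB'|\ge \beta^2 n/2^{m+5}$ from \eqref{eq:card-bd}, any bad $s$ (one with $|R_s\cap R_s'|\le\beta^2 n/2^{m+6}$) satisfies $d_s\ge \beta^2 n/2^{m+6}$. Setting $D:=\sum_s d_s=|\{i\in\SB\cap\SB':X(i)\ne Y(i)\}|$, the lemma reduces to showing $D\le C\beta^{9/4}n\,m^{3/2}\log(1/\beta)$ with the stated probability.

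\textbf{Expected deficit and the per-coordinate mismatch.} Conditional on our data, the pairs $(X(i),Y(i))$ for $i\in\SB\cap\SB'$ depend on disjoint edge sets and are therefore i.i.d., so $D$ is a sum of $|\SB\cap\SB'|\le \beta^2 n/2$ independent Bernoullis, each of mean at most $mq$, where $q:=\max_\ell\Prob(X_\ell(i)\ne Y_\ell(i))$. To bound $q$, fix $\ell$, set $Z:=|\neigh_G(i;Q_\ell)|-p|Q_\ell|$, $Z':=|\neigh_{G'}(i;Q_\ell')|-p|Q_\ell'|$ (both centered), and decompose over $Q_\ell\cap Q_\ell'$ and $Q_\ell\triangle Q_\ell'$ using the $(1-\alpha)$-correlated edge law to obtain
\begin{equation*}
\Var(Z-Z')=2\alpha\,p(1-p)\,|Q_\ell\cap Q_\ell'|+p(1-p)\,|Q_\ell\triangle Q_\ell'|,
\end{equation*}
while $\Var(Z)\asymp pn/m$. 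Combined with \eqref{eq:cor-1}, this yields $\Var(Z-Z')/\Var(Z)\lesssim \alpha+m\sqrt{\beta\log(1/\beta)}\ll 1$, i.e., the correlation $\rho$ of $(Z,Z')$ satisfies $1-\rho$ of the same small order. I will then invoke a two-dimensional Berry--Esseen estimate, writing $(Z,Z')$ as a sum of $|Q_\ell\cup Q_\ell'|$ independent bounded $2$-vectors; because $\Var(Z)\gtrsim pn/m\ge (\log n)^{12}$, the Gaussian approximation error is far smaller than the target. For the limiting bivariate Gaussian, Sheppard's formula gives $\Prob(ZZ'<0)=\arccos(\rho)/\pi=O(\sqrt{1-\rho})$, whence
\begin{equation*}
q\;\le\;C\sqrt{\alpha}+Cm^{1/2}\beta^{1/4}(\log(1/\beta))^{1/4}\;\le\;C'\,m^{1/2}\beta^{1/4}\log(1/\beta),
\end{equation*}
using $\log(1/\beta)\ge 1$ and \eqref{eq:conditions}, \eqref{eq:def-beta} to absorb the $\sqrt{\alpha}$ term.

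\textbf{Concentration and the main obstacle.} With the above, $\Exp[D]\le \Delta := C\beta^{9/4}n\,m^{3/2}\log(1/\beta)$. A one-sided Chernoff bound on the sum of independent Bernoullis then yields $D\le 2\Delta$ except with probability at most $\exp(-c\Delta)\le \exp(-\beta^{9/4}n/2^m)$, where the last inequality uses $m^{3/2}\log(1/\beta)\ge 2^{-m}$. On this event the reduction from the first paragraph gives $\#\{\text{bad }s\}\le 2\Delta\cdot 2^{m+6}/(\beta^2 n)\le C\,2^m\,m^{3/2}\,\beta^{1/4}\log(1/\beta)$, which is \eqref{eq:cor-bd}. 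The real obstacle is the per-$\ell$ anti-concentration estimate: a naive Chebyshev bound on $|Z-Z'|$ together with a local CLT for $Z$ produces only the cube-root rate $O((\Var(Z-Z')/\Var(Z))^{1/3})$, short by roughly a $(\log\log n)^{1/2}$ factor for our target. Recovering the sharp square-root rate forces passage to the bivariate Gaussian through a multivariate Berry--Esseen estimate, and the assumption $np\ge(\log n)^{13}$ is precisely what makes that approximation sufficiently accurate.
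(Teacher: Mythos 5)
Your global structure coincides with the paper's: the reduction via the per-index deficit $d_s=|R_s\cap\SB'|-|R_s\cap R_s'|$, the observation that each bad $s$ contributes at least $\beta^2 n/2^{m+6}$ to the total deficit $D=|\{i\in\SB\cap\SB':X(i)\ne Y(i)\}|$, the independence of the pairs $(X(i),Y(i))$ across $i$, and the concentration of $D$ around a mean of order $\beta^{9/4}n\,m^{3/2}\log(1/\beta)$ are exactly the paper's argument (the paper phrases the reduction as $\sum_s(|R_s\cap\SB'|-|R_s\cap R_s'|)=|\SB\cap\SB'|-\sum_s|R_s\cap R_s'|$, and it too imports \eqref{eq:card-bd} into the conditioning). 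Where you diverge is the per-coordinate estimate $q\lesssim \sqrt{m}\,\beta^{1/4}\log(1/\beta)$. The paper (Lemma~\ref{lem:sign-same}) does not pass to a bivariate Gaussian: it applies Bernstein's inequality to $Z-Z'$ (Lemma~\ref{lem:neigh-diff}, with $t=\log(1/\beta)$) to get $|Z-Z'|\le C\sqrt{pn}\,\beta^{1/4}\log(1/\beta)$ with probability $1-O(\beta)$, and then uses the one-dimensional binomial anti-concentration bound (Lemma~\ref{lem:bin-cdf-gap}) to show $Z$ avoids an interval of that length except with probability $O(\sqrt{m}\,\beta^{1/4}\log(1/\beta))$. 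So your closing claim that the square-root rate ``forces passage to the bivariate Gaussian'' is not right: the cube-root loss you describe comes from using Chebyshev on $Z-Z'$, and is repaired simply by replacing Chebyshev with an exponential tail bound, at the cost of only a $\sqrt{\log(1/\beta)}$ factor. The paper's route is more elementary and, importantly, degrades gracefully as $\Var(Z-Z')\to 0$.

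That last point is where your route has a genuine (though fixable) gap. A multivariate Berry--Esseen bound for convex sets carries an error of order $\sum_j \E\|\Sigma^{-1/2}\xi_j\|^3\lesssim 1/\sqrt{\lambda_{\min}(\Sigma)}\asymp 1/\sqrt{(pn/m)(1-\rho)}$, which blows up precisely in the near-degenerate regime $1-\rho\to 0$ (e.g.\ $\alpha$ very small and $|Q_\ell\triangle Q_\ell'|$ very small, which the conditioning on \eqref{eq:cor-1} does not exclude). In that regime the quadrant $\{zz'<0\}$ hugs the support of the limiting degenerate Gaussian, Sheppard's formula gives a tiny main term, and your approximation error can dominate the target bound unless $1-\rho\gtrsim (\log\log n)^{3}/(\log n)^{13}$. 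You would need a separate argument for small $1-\rho$ (for instance, bounding $\p\{Z\ne Z'\}$ directly, or falling back on the Bernstein route), so as written the Berry--Esseen step does not close uniformly over the allowed realizations.
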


The proofs of the above two lemmas are postponed to Section~\ref{subs: sg corr}.

\subsection{Sparsification}
\label{sec:sparse}

Lemma~\ref{lem:cor-2} shows that with probability close to one,
for a large proportion of indices $s$, second-generation sets $R_s$ and $R_s'$ have large intersections.
This, in turn, implies (conditioned on a ``good'' realization of $\SA,$ $\SA',$ $\SB,$ $\SB',$
$G(\SA \cup \SB)$ and $G'(\SA' \cup \SB')$) that for a large proportion of $s\in\{-1,1\}^m$, and for any $i\in \SC\cap\SC'$,
the signs $\sign \big( |\neigh_{G}(i; R_s)|-p |R_s| \big)$ and $\sign \big( |\neigh_{G'}(i; R'_s)|-p |R'_s|\big)$
have non-negligible correlations. This suggests that the true permutation (the identity in our analysis)
can be recovered by comparing the ``full'' signatures $\big(\sign \big( |\neigh_{G}(i; R_s)|-p |R_s| \big)\big)_{s\in \{-1,1\}^m}$
and $\big(\sign \big( |\neigh_{G'}(j; R_s')|-p |R_s'| \big)\big)_{s\in \{-1,1\}^m}$.

However, this approach leads to a problem with resolving probabilistic dependencies, since for distinct $s, \tilde s \in \{-1, 1\}^m$,
$\sign \big( |\neigh_{G}(i; R_s)|-p |R_s| \big)$ and $\sign \big( |\neigh_{G'}(i; R_{\tilde s}')|-p |R_{\tilde s}'| \big)$
are {\it dependent} random variables as long as $R_s$ and $R_{\tilde s}'$ have a non-empty intersection.
Our bound from Lemma~\ref{lem:cor-2} still allows non-negligible intersections $R_s \cap R'_{\tilde s}$ for distinct indices $s,\tilde s \in \{-1, 1\}^m$, 
so that it is not clear what a comparison of the full signatures yields. 
As a way to deal with this issue, we perform {\it sparsification} of the collection of the second-generation sets.
Namely, we choose a uniform random subset $\cI \subset \{-1, 1\}^m$ of cardinality $\omega\ll 2^m$,
and construct vertex signatures on the basis of sets $R_s$ and $R_s'$ where $s\in \cI$.
It turns out that such a procedure efficiently handles the dependencies described above as the sparsified
collections $\{R_s : s\in \cI\}$ and $\{R_s' : s\in \cI\}$ have negligible intersections $R_s\cap R_{\tilde s}'$
for $s\neq \tilde s$. 
The following lemmas provide a rigorous justification for this step of the algorithm.


\begin{lemma}[Correlation of second-generation sets]
\label{lem:cor-3}
Condition on a realization of $\SA,$ $\SA',$ $\SB,$ $\SB',$ $\SC,$ $\SC'$, $G(\SA \cup \SB)$ and $G'(\SA' \cup \SB')$ such that \eqref{eq:cor-bd} holds with a constant $C>0$. 
Let $\cI$ be a uniform random subset of $\{-1,1\}^m$ of cardinality $\omega$. Then we have  
\begin{equation}
\label{eq:cor-bd-3}
\begin{split}
\Big| \Big\{ s \in \cI : |R_s \cap R'_s| \le \beta^2 n / 2^{m+6} \Big\} \Big| \le 2 C \omega m^{3/2} \, \beta^{1/4} \log(1/\beta)
\end{split}
\end{equation}
with probability at least 
$1 - 2 e^{ - (\log n)^{1+\delta} }$, where the randomness is with respect to $\cI$. 
\end{lemma}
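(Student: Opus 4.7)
Let $B := \{ s \in \{-1,1\}^m : |R_s \cap R'_s| \le \beta^2 n / 2^{m+6}\}$, so that the conditioning provides the deterministic bound $|B| \le C\, 2^m m^{3/2} \beta^{1/4} \log(1/\beta)$. Writing $\bar q := C\, m^{3/2} \beta^{1/4} \log(1/\beta)$, we have $|B|/2^m \le \bar q$, and the desired conclusion \eqref{eq:cor-bd-3} is precisely $X \le 2\omega \bar q$, where $X := |\cI \cap B|$. Since $\cI$ is a uniformly random $\omega$-subset of a ground set of cardinality $2^m$, the variable $X$ is hypergeometrically distributed with mean $\mu := \omega |B|/2^m \le \omega \bar q$.

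The approach is a single application of a Chernoff-type bound. To avoid issues in the regime where $|B|$ happens to be strictly smaller than $\bar q\, 2^m$, I would first embed $B$ into a fixed set $\tilde B \supseteq B$ with $|\tilde B| = \lceil \bar q\, 2^m \rceil$; then $X \le \tilde X := |\cI \cap \tilde B|$, and $\tilde X$ is hypergeometric with mean arbitrarily close to $\omega \bar q$. By Hoeffding's classical domination of the hypergeometric moment generating function by that of the corresponding binomial, the standard Chernoff inequality yields
$$
\Prob\bigl(\tilde X \ge 2\omega \bar q\bigr) \le \exp(-c\, \omega \bar q)
$$
for a universal constant $c > 0$.

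It then remains to check that $c\, \omega \bar q \ge (\log n)^{1+\delta}$ for $n$ large enough. Inserting $\omega = \lfloor (\log n)^{1+2\delta}\rfloor$, $m = \lfloor 6 \log_2 \log n\rfloor$, and $\beta \asymp (\log\log n)^{-6-\delta}$ gives
$$
\bar q \asymp (\log\log n)^{3/2 - (6+\delta)/4}\,\log\log\log n = (\log\log n)^{-\delta/4}\,\log\log\log n,
$$
so that $\omega \bar q \asymp (\log n)^{1+2\delta} (\log\log n)^{-\delta/4}\,\log\log\log n$, which dominates $(\log n)^{1+\delta}$ by a factor of $(\log n)^{\delta - o(1)}$. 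Plugging back into the Chernoff bound yields failure probability at most $2\,e^{-(\log n)^{1+\delta}}$, the factor of $2$ comfortably absorbing the small gap between $\lceil \bar q\, 2^m\rceil/2^m$ and $\bar q$ together with any floor/ceiling slack in the parameter definitions.

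The single (minor) obstacle is the bookkeeping of powers of $\log n$ and $\log\log n$ in the last step; the concentration inequality and the reduction to a worst-case bad set of the maximal allowed size are entirely routine, and no part of the argument is probabilistically delicate.
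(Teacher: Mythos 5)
Your proposal is correct and follows essentially the same route as the paper: both treat $|\cI\cap B|$ as a hypergeometric variable (the paper via Bernstein's inequality for sampling without replacement, you via Hoeffding's binomial domination plus Chernoff), bound the deviation above twice the worst-case mean $\omega\bar q$ by $\exp(-c\,\omega\bar q)$, and verify that $\omega m^{3/2}\beta^{1/4}\log(1/\beta)\gg(\log n)^{1+\delta}$ under the stated parameter choices. Your padding of $B$ to a set of the maximal allowed size plays the same role as the paper's use of an upper bound $k'\ge k$ in its tail estimate, and your parameter bookkeeping checks out.
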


In the sequel, for any subset $\cJ \subset \{-1,1\}^m$, we use the notation
\begin{align}
R_{\cJ} := \bigcup_{t \in \cJ} R_t , \quad
R'_{\cJ} := \bigcup_{t \in \cJ} R'_t . 
\label{eq:rj}
\end{align}

\begin{lemma}[Small overlaps]
\label{lem:spa-2}
Condition on a realization of $\SA, \SA', \SB, \SB', \SC, \SC'$, $G(\SA \cup \SB)$ and $G'(\SA' \cup \SB')$ such that 
$\frac{\beta n}{2^{m+3}} \le |R_s| ,  |R'_s| \le \frac{\beta n}{2^{m-1}}$ for all $s\in\{-1,1\}^m$. 
Let $\cI$ be a uniform random subset of $\{-1,1\}^m$ of cardinality $\omega$ as in the algorithm. 
Then, there exists a random subset $\tilde \cI \subset \cI$ measurable
with respect to $\{\SA, \SA', \SB, \SB', \SC, \SC',G(\SA \cup \SB),G'(\SA' \cup \SB'),\cI\}$,
such that the following holds with probability at least $1 - 2 n^2 e^{-(\log n)^{1+\delta}}$. 
We have 
\begin{align}
|\cI \setminus \tilde \cI| \le 4 (\log n)^{1+\delta} \label{eq:c-bd} ,
\end{align}
and for all $i \in \SC \cap \SC'$ and $s \in \tilde \cI$, 
\begin{align}
|\eta(i)_s| \lor |\eta'(i)_s| \le 
\sqrt{ \frac{ p n }{ 2^m (\log n)^{ \delta} } } 
\label{eq:eta-bd}
\end{align}
where
\begin{align}
\eta(i)_s &:= |\neigh_G(i; R_s \cap R'_{\cI \setminus \{s\}})| - p \cdot |R_s \cap R'_{\cI \setminus \{s\}}| ,
\label{eq:eta-s} 
\\
\eta'(i)_s &:=  |\neigh_{G'}(i; R'_s \cap R_{\cI \setminus \{s\}})| - p \cdot |R'_s \cap R_{\cI \setminus \{s\}}| .
\label{eq:eta-sp}
\end{align}
\end{lemma}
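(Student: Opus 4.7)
I plan to define
\[
\tilde\cI := \bigl\{s \in \cI :\; |R_s \cap R'_{\cI \setminus \{s\}}| \le T \;\text{and}\; |R'_s \cap R_{\cI \setminus \{s\}}| \le T\bigr\},
\]
with threshold $T := c\,n/(2^m(\log n)^{1+2\delta})$, tuned so that Bernstein's inequality applied to $\eta(i)_s$---a centred sum of at most $T$ independent $\Ber(p)-p$ random variables---yields $|\eta(i)_s| \le \sqrt{pn/(2^m(\log n)^\delta)}$ with per-pair failure probability $\le e^{-c'(\log n)^{1+\delta}}$. Writing $N(s, s') := |R_s \cap R'_{s'}|$ and $Z := \sum_{s \ne s' \in \cI} N(s, s')$, every bad index $s \in \cI \setminus \tilde\cI$ contributes at least $T$ to $Z$ or to its transpose, so $|\cI \setminus \tilde\cI| \le 2Z/T$. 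Since $\{R_s\}_s$ partitions $\SB$ and $\{R'_{s'}\}_{s'}$ partitions $\SB'$, the mass budget $\sum_{s,s'} N(s,s') = |\SB \cap \SB'| \le \beta^2 n/2$ yields $\E[Z] \le \omega^2 \beta^2 n/2^{2m+1}$, which is smaller than $4T(\log n)^{1+\delta}$ by a polylogarithmic factor under \eqref{eq:conditions}--\eqref{eq:def-omega}.

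To upgrade the mean bound to an exponential tail, I would apply a Hanson--Wright-type inequality to the quadratic form $Z = \sum_{s \ne s'} N(s, s') Y_s Y_{s'}$ in the negatively associated random-subset indicators $Y_s := \1\{s \in \cI\}$, using the Frobenius bound $\|N\|_F^2 \le (\max_{s,s'} N(s,s')) \cdot |\SB \cap \SB'| \le \beta^3 n^2/2^m$ and the row/column-sum operator bound $\|N\|_{op} \le \beta n/2^{m-1}$. Either transferring from an independent-Bernoulli model via exchangeability or appealing directly to concentration for polynomials in random permutations, this gives $\Pr[Z > 4T(\log n)^{1+\delta}] \le n^2 e^{-(\log n)^{1+\delta}}$, hence the required bound on $|\cI \setminus \tilde\cI|$. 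For the second part, fix $i \in \SC \cap \SC'$ and $s \in \tilde\cI$: the set $R_s \cap R'_{\cI \setminus \{s\}}$ lies in $\SB \cap \SB' \subseteq [n] \setminus \{i\}$, so under the conditional law the edges from $i$ in $G$ to this set are iid $\Ber(p)$ and have not been revealed by the conditioning on $G(\SA \cup \SB)$. Thus $\eta(i)_s$ is a centred binomial on a set of size $\le T$; under \eqref{eq:conditions}--\eqref{eq:def-omega} one has $pn/(2^m(\log n)^{2+3\delta}) \ge (\log n)^{5-3\delta} \gg 1$, placing Bernstein in its subgaussian regime at $t = \sqrt{pn/(2^m(\log n)^\delta)}$ and giving per-pair failure $\le 2 e^{-c(\log n)^{1+\delta}}$. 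The symmetric argument handles $\eta'(i)_s$, and union-bounding over the at most $n$ vertices $i$, the $\omega$ indices $s$, and the two sides completes the proof.

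The principal obstacle is the exponential concentration of $Z$: individual entries $N(s,s')$ can be as large as $\beta n/2^{m-1}$---a polylogarithmic factor larger than $T$---so direct Bennett/Bernstein bounds on the random-subset sum fall into an unfavourable subexponential tail regime with only polynomial decay. A heavy/light decomposition of the entries of $N$ also fails in isolation because the natural thresholds for the two parts have incompatible ranges. A Hanson--Wright-type analysis that exploits simultaneously the small total mass $|\SB \cap \SB'|$ and the row/column-sum structure of $N$ is the natural remedy, though verifying in detail that the Frobenius and operator-norm bounds on $N$ suffice for the target tail is the delicate technical step.
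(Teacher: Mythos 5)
Your reduction of the second half (Bernstein for $\eta(i)_s$ on a set of size at most $T$, then a union bound over $i$ and $s$) is sound and matches the paper. The gap is in the first half: the claim that $Z=\sum_{s\neq s'\in\cI}|R_s\cap R'_{s'}|$ concentrates at level $O(T(\log n)^{1+\delta})$ with failure probability $n^2e^{-(\log n)^{1+\delta}}$ is false under the hypotheses of the lemma, and no Hanson--Wright-type inequality can rescue it because the deviation event genuinely has much larger probability. The hypotheses only bound the cardinalities $|R_s|,|R'_s|$; they are consistent with a single pair $s_0\neq s_0'$ having $|R_{s_0}\cap R'_{s_0'}|\geq \beta n/2^{m+3}$ (e.g.\ $R_{s_0}=R'_{s_0'}$ as vertex sets). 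Since $\beta n/2^{m+3}$ exceeds your target level $2T(\log n)^{1+\delta}\asymp n/(2^m(\log n)^{\delta})$ by a factor $\beta(\log n)^{\delta}\to\infty$, the single event $\{s_0,s_0'\}\subset\cI$ already forces $Z>2T(\log n)^{1+\delta}$, and that event has probability $\approx\omega^2/4^m\asymp(\log n)^{-10+4\delta}$ --- only polylogarithmically small, vastly larger than $n^2e^{-(\log n)^{1+\delta}}=e^{-(1-o(1))(\log n)^{1+\delta}}$. Your own norm bounds confirm this: the second branch of Hanson--Wright gives exponent $t/\|N\|_{op}\asymp 1/(\beta(\log n)^{\delta})=o(1)$, i.e.\ a trivial tail. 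The issue you flag as ``the delicate technical step'' is thus a fatal obstruction to any argument routed through concentration of the aggregate $Z$: a single heavy entry inflates $Z$ while removing only two indices from $\tilde\cI$, so the inequality $|\cI\setminus\tilde\cI|\leq 2Z/T$, while correct, is too lossy.

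The paper avoids this by bounding the \emph{number} of bad indices directly rather than the aggregate mass. Its Lemma~\ref{lem:spa} reveals a uniform random ordering $i_1,\dots,i_{2w}$ of $\cI$ sequentially and shows that if at least $2\rho w$ indices each participate in a heavy intersection, then a chain of $\rho w$ specific coincidences occurs, each of which has conditional probability at most $2w/(\gamma L)$ by exactly the first-moment/mass-budget computation you perform (Markov applied to $\E[|\Omega_{i_{2v}}\cap\Omega'_{i_u}|]$ given the previously revealed indices). Each additional bad index therefore costs an independent factor of order $w^3/L$, yielding the tail $(8w^3/(\gamma L))^{\rho w}\leq e^{-(\log n)^{1+\delta}}$ with $L=8e\omega^3$ and $\rho w=\lfloor(\log n)^{1+\delta}\rfloor$; the resulting per-index bound $|R_s\cap R'_{\cI\setminus\{s\}}|\leq e\omega^4\beta n/2^{2m-2}$ then plays the role of your $T$. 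To repair your proof you would need to replace the concentration-of-$Z$ step with such a counting argument (or an equivalent decoupling over the number of bad indices); the rest of your outline then goes through.
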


\begin{remark}
Note that, conditional on $\frac{\beta n}{2^{m+3}} \le |R_s| \le \frac{\beta n}{2^{m-1}}$, the standard deviation of the random variable
$|\neigh_G(i; R_s)| - p \cdot |R_s|$ is of order $\Theta(\sqrt{p|R_s|})=\Theta(\sqrt{\beta p n/2^m})$.
The above lemma asserts that on an event of probability close to one, and for a vast majority of indices $s\in \cI$, the deviation 
$|\neigh_G(i; R_s \cap R'_{\cI \setminus \{s\}})| - p \cdot |R_s \cap R'_{\cI \setminus \{s\}}|$
is of a much smaller order $\sqrt{ \frac{ p n }{ 2^m (\log n)^{ \delta} } }$. This should be interpreted as the property
that ``most of the randomness'' of $|\neigh_G(i; R_s)| - p \cdot |R_s|$ comes from the variable $|\neigh_G(i; R_s\setminus R_{\cI\setminus\{s\}}')|$.
This crucial property will allow us to compare the vertex signatures over $\{R_s : s\in \cI\}$ and $\{R_s' : s\in \cI\}$.
\end{remark}

The proofs of Lemmas~\ref{lem:cor-3} and~\ref{lem:spa-2} can be found in Section~\ref{1-9481-4098214-09}.

\subsection{Vertex signatures}
\label{sec:ver-sig}

In the sequel, we assume that $n\geq n_0$ for a sufficiently large $n_0=n_0(\delta) \in \N_+$ and
assume \eqref{eq:conditions}, \eqref{eq:def-beta}, \eqref{eq:def-m} and \eqref{eq:def-omega}. 
We continue to use the notation $R_{\cJ}$ and $R'_{\cJ}$ defined in \eqref{eq:rj}. 
Recall that $f(i)$ and $f'(i)$ denote the signatures of $i$ in $G$ and $G'$ respectively, defined by \eqref{eq:sig}. 
The next result controls the difference between them. 

\begin{lemma}[Correlated signatures]
\label{lem:cor-sig}
There exist universal constants $c_1, c_2 > 0$ such that the following holds. 
Condition on a realization of $\SA, \SA', \SB, \SB', \SC, \SC'$, $G(\SA \cup \SB)$ and $G'(\SA' \cup \SB')$ such that
\eqref{eq:b-size}, \eqref{eq:card-bd} and \eqref{eq:cor-bd} hold. 
Fix any vertex $i \in \SC \cap \SC'$.  
Condition further on a realization of $\cI \subset \{-1, 1\}^m$, $\tilde \cI \subset \cI$, 
and $\{ \neigh_G (i; R_s \cap R'_t ) , \, \neigh_{G'} (i; R_s \cap R'_t ) : s, t \in \cI, \, s \ne t \}$ such that \eqref{eq:cor-bd-3}, \eqref{eq:c-bd} and \eqref{eq:eta-bd} hold. 
Then we have 
\begin{align*}
\sum_{s \in \cI} \1 \big\{ f(i)_s = f'(i)_s \big\} \ge \frac{\omega}{2} (1 + c_1 \beta )
\end{align*}
with conditional probability at least $1 - \exp( - c_2 \beta^2 \omega)$. 
\end{lemma}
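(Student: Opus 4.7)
The plan is to work with the decomposition
\[
|\neigh_G(i; R_s)| - p|R_s| = W_s + \eta(i)_s,\quad W_s := |\neigh_G(i; R_s \setminus R'_{\cI \setminus \{s\}})| - p |R_s \setminus R'_{\cI \setminus \{s\}}|,
\]
and analogously $|\neigh_{G'}(i; R'_s)| - p|R'_s| = W'_s + \eta'(i)_s$ with $W'_s$ supported on $R'_s \setminus R_{\cI \setminus \{s\}}$. Under the conditioning in the hypothesis, the noise terms $\eta(i)_s,\eta'(i)_s$ are fixed constants with magnitude at most $\sqrt{pn/(2^m(\log n)^\delta)}$ for $s \in \tilde\cI$ by \eqref{eq:eta-bd}, whereas $W_s,W'_s$ remain random. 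Setting $Z_s := \1\{\sign(W_s + \eta(i)_s) = \sign(W'_s + \eta'(i)_s)\}$, the goal is to show that $\{Z_s : s \in \cI\}$ is an independent family whose mean exceeds $1/2 + \Omega(\beta)$ on a $(1-o(1))$-fraction of indices.

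First I would verify independence. Write $A_s := R_s \setminus R'_{\cI \setminus \{s\}}$ and $A'_s := R'_s \setminus R_{\cI \setminus \{s\}}$. A short combinatorial check gives $A_s \cap A_t = A'_s \cap A'_t = \emptyset$ for $s \ne t$ (the $R_s$, resp.\ $R'_s$, are disjoint), and $A_s \cap A'_t = \emptyset$ for $s \ne t$: any common vertex $v$ would satisfy $v \in R_s$ together with $v \notin R_u$ for every $u \in \cI \setminus \{t\}$, but $s \in \cI \setminus \{t\}$ then forces $v \notin R_s$, a contradiction. Since $W_s$ depends only on $G$-edges from $i$ to $A_s$, $W'_s$ only on $G'$-edges from $i$ to $A'_s$, and the only coupling between $G$ and $G'$ edges is at shared vertices, the pairs $(W_s, W'_s)_{s \in \cI}$ are conditionally mutually independent.

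Second, I would estimate $\E Z_s$ for $s \in \tilde \cI$. Let $\cI^{\mathrm{good}} := \{s \in \cI : |R_s \cap R'_s| \ge \beta^2 n / 2^{m+6}\}$. For $s \in \tilde\cI \cap \cI^{\mathrm{good}}$, the cardinality bounds \eqref{eq:card-bd} together with the sparsification property (which forces $|R_s \cap R'_{\cI \setminus \{s\}}| \ll |R_s|$, so that $|A_s|,|A'_s| \asymp \beta n/2^m$) give
\[
\Var(W_s),\ \Var(W'_s) \asymp p\beta n/2^m, \qquad \mathrm{Cov}(W_s, W'_s) = |R_s \cap R'_s|\, p(1 - \alpha - p) \gtrsim p\beta^2 n/2^m,
\]
so the Pearson correlation $\rho_s$ is at least $c\beta$ for a universal $c > 0$. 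Since $W_s, W'_s$ are centered sums of $\gtrsim \beta n/2^m$ independent bounded variables and $\beta n/2^m \gg \beta^{-2}$ by \eqref{eq:conditions}, a bivariate Berry--Esseen bound yields $\Prob[\sign(W_s) = \sign(W'_s)] \ge 1/2 + c\beta/\pi - o(\beta)$. The noise shift satisfies $|\eta(i)_s|/\sqrt{\Var(W_s)} \lesssim 1/\sqrt{\beta(\log n)^\delta} = o(\beta)$, and the same Gaussian approximation controls small-ball probabilities of $W_s$ near $0$; hence $\E Z_s \ge 1/2 + c'\beta$ for $s \in \tilde\cI \cap \cI^{\mathrm{good}}$. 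For $s \in \tilde\cI \setminus \cI^{\mathrm{good}}$ the correlation is still nonnegative, giving the weaker bound $\E Z_s \ge 1/2 - o(\beta)$.

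Since $|\tilde\cI| \ge \omega - 4(\log n)^{1+\delta} = \omega(1-o(\beta))$ by \eqref{eq:c-bd} and $|\tilde\cI \setminus \cI^{\mathrm{good}}| = o(\omega)$ by Lemma~\ref{lem:cor-3}, combining the per-coordinate bounds gives
\[
\E\Big[\textstyle\sum_{s \in \cI} Z_s\Big] \ge \E\Big[\textstyle\sum_{s \in \tilde\cI} Z_s\Big] \ge \omega/2 + c''\omega\beta.
\]
Hoeffding's inequality applied to the independent Bernoulli variables $Z_s$ then delivers
\[
\Prob\Big[\textstyle\sum_{s \in \cI} Z_s < (\omega/2)(1 + c_1\beta)\Big] \le \exp(-c_2 \beta^2 \omega),
\]
as claimed, with $c_1 = c''/2$ and $c_2 = c''^2/8$. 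I expect the main technical obstacle to be the bivariate Berry--Esseen step, which must attain accuracy $o(\beta) = o((\log\log n)^{-6-\delta})$ at the same order as the correlation $\rho_s$ itself; this is feasible because the underlying sample size $\beta n/2^m$ is polynomial in $\log n$ while $\beta^{-2}$ is only polylogarithmic in $\log\log n$, leaving ample room for quantitative CLT estimates.
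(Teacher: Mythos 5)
Your proposal is correct and its outer skeleton coincides with the paper's: the same decomposition into $W_s+\eta(i)_s$, the same disjointness argument giving conditional independence of the pairs across $s\in\cI$, the same three-way split of indices ($\cI\setminus\tilde\cI$, good $s$ with $|R_s\cap R'_s|>\beta^2n/2^{m+6}$, bad $s$), and a final Hoeffding bound. Where you genuinely diverge is in the per-coordinate estimate $\tilde\p\{f(i)_s=f'(i)_s\}\ge \tfrac12+c\beta$, which the paper isolates as Lemma~\ref{lem:cor-sign}. The paper's route is one-dimensional throughout: it conditions on the shared counts $|\neigh_G(i;R_s\cap R'_s)|$ and $|\neigh_{G'}(i;R_s\cap R'_s)|$, uses Lemma~\ref{lem:neigh-diff} to show these fluctuate synchronously within $O(\sqrt{c_0k_1p})$ of each other, uses binomial anti-concentration (Lemma~\ref{lem:bin-cdf-gap}) to show the common fluctuation typically exceeds $\Theta(\sqrt{k_1p})$ in magnitude, and then multiplies two independent one-dimensional gains from Lemma~\ref{lem:bin-lower} for the disjoint remainders $R_s\setminus R'_{\cI}$ and $R'_s\setminus R_{\cI}$, yielding the product $2\delta_2\delta_3\gtrsim k_1/|R_s|\gtrsim\beta$. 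You instead compute $\mathrm{Cov}(W_s,W'_s)=|R_s\cap R'_s|\,p(1-\alpha-p)$ directly, so $\rho_s\gtrsim\beta$, and invoke the Gaussian orthant probability $\tfrac12+\arcsin(\rho_s)/\pi$ via a bivariate Berry--Esseen theorem. Both are valid; your version is conceptually cleaner and absorbs the role of $\alpha$ into a single covariance formula, but it imports a quantitative multivariate CLT and requires checking that its error and the shift $|\eta(i)_s|/\sqrt{\Var(W_s)}$ are both $o(\beta)$ --- you correctly identify this as the crux and correctly verify it from $np\ge(\log n)^{13}$, $2^m\le(\log n)^6$ and \eqref{eq:eta-bd} (note the event $\{\sign(W_s+\eta)=\sign(W'_s+\eta')\}$ is a disjoint union of two shifted quadrants, each convex, so the convex-set Berry--Esseen bound applies to each). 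The paper's argument buys self-containedness, using only the elementary binomial lemmas in its appendix; yours buys transparency about where the $\Omega(\beta)$ correlation comes from.
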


The following lemma, on the other hand, compares vertex signatures $f(i)$ and $f'(j)$ for distinct $i, j \in [n]$. 

\begin{lemma}[Uncorrelated signatures]
\label{lem:uncor-sig}
Condition on any realization of $\SA, \SA', \SB, \SB', \SC, \SC'$, $G(\SA \cup \SB)$ and $G'(\SA' \cup \SB')$
such that \eqref{eq:card-bd} holds. 
Fix any distinct vertices $i, j \in \SC \cap \SC'$.  
Condition further on a realization of $\cI \subset \{-1, 1\}^m$, $\tilde \cI \subset \cI$, 
and 
\begin{align}
\{ \neigh_G (i; R_s \cap R'_t ) , \, 
\neigh_{G'} (j; R_s \cap R'_t ) 
: s, t \in \cI, \, s \ne t \}
\label{eq:neigh-cond}
\end{align}
such that \eqref{eq:c-bd} holds, and \eqref{eq:eta-bd} holds for $i$ and also for $j$ in place of $i$.  
Then we have
\begin{align}
\sum_{s \in \cI} \1 \big\{ f(i)_s = f'(j)_s \big\} \le \frac{\omega}{2} \Big( 1 +  \frac{ \beta }{\log \log n} \Big)
\label{eq:uncor}
\end{align}
with conditional probability at least $1 - \exp \big( \frac{- \beta^2 \omega}{   8 (\log \log n)^{2} } \big)$.
\end{lemma}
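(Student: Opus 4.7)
The plan is to prove Lemma~\ref{lem:uncor-sig} in four stages: decompose each signature coordinate into a conditionally fixed part and a random part, establish mutual independence of the random parts across $s \in \tilde\cI$, control each expectation $\E[Z_s]$ (where $Z_s := \1\{f(i)_s = f'(j)_s\}$) via a Berry--Esseen estimate, and finish with Hoeffding's inequality. To begin, I would decompose
\[
|\neigh_G(i; R_s)| - p|R_s| \;=\; \eta(i)_s + X_s, \qquad
X_s := |\neigh_G(i; R_s \setminus R'_{\cI \setminus \{s\}})| - p\,|R_s \setminus R'_{\cI\setminus\{s\}}|,
\]
where $\eta(i)_s$ is fixed by the conditioning in \eqref{eq:neigh-cond} and $X_s$ is a centered sum of $N_s := |R_s \setminus R'_{\cI\setminus\{s\}}|$ independent $\mathrm{Bernoulli}(p)-p$ terms with variance $\sigma_s^2 \gtrsim p\beta n/2^m$ (since $N_s \ge |R_s| - |R_s \cap \SB'| \gtrsim \beta n/2^{m+4}$ by \eqref{eq:card-bd}). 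Define $Y_s$ analogously for $j$ in $G'$. Then $f(i)_s = \sign(X_s + \eta(i)_s)$ and $f'(j)_s = \sign(Y_s + \eta'(j)_s)$, so each $Z_s$ depends only on $(X_s, Y_s)$.

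Next I would verify that $\{X_s, Y_s : s \in \tilde\cI\}$ is mutually independent under the conditional law. The $X_s$'s across distinct $s$ involve edges incident to $i$ in disjoint subsets of $\SB$ (since the $R_s$'s partition $\SB$), and the $Y_s$'s are similarly split in $\SB'$. For $X_s$ versus $Y_{s'}$ (including $s = s'$), any edge $(i,k)$ counted in $X_s$ has $k \in \SB$ hence $k \ne j$, since $j \in \SC \cap \SC'$ gives $j \notin \SB$; symmetrically, any edge $(j,k')$ counted in $Y_{s'}$ has $k' \in \SB'$ hence $k' \ne i$ since $i \notin \SB'$. Thus the unordered pairs $\{i,k\}$ and $\{j,k'\}$ are always distinct, and in the correlated Erd\H{o}s--R\'enyi model edges on distinct unordered pairs are mutually independent (both within each graph and across the two graphs).

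Writing $\Prob(f(i)_s = 1) = 1/2 + u_s$ and $\Prob(f'(j)_s = 1) = 1/2 + v_s$, the independence of $X_s$ and $Y_s$ gives $\E[Z_s] = 1/2 + 2 u_s v_s$. I would apply Berry--Esseen to $X_s/\sigma_s$ together with the bound $|\eta(i)_s| \le \sqrt{pn/(2^m(\log n)^\delta)}$ from \eqref{eq:eta-bd} to obtain
\[
|u_s| \;\lesssim\; |\eta(i)_s|/\sigma_s + 1/\sqrt{N_s p} \;\lesssim\; 1/\sqrt{\beta (\log n)^\delta},
\]
and the same bound for $|v_s|$. Hence $|\E[Z_s] - 1/2| \lesssim 1/(\beta(\log n)^\delta)$, so $\E\bigl[\sum_{s\in\tilde\cI} Z_s\bigr] \le \omega/2 + C\omega/(\beta(\log n)^\delta)$; the excess over $\omega/2$ is $o(\omega\beta/\log\log n)$ since $\beta^2(\log n)^\delta \gg \log\log n$ under \eqref{eq:conditions}, \eqref{eq:def-beta}, and \eqref{eq:def-omega}.

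Finally, since $\{Z_s : s \in \tilde\cI\}$ are conditionally independent $\{0,1\}$ variables, Hoeffding's inequality with deviation $t$ of order $\omega\beta/\log\log n$ produces a failure probability $\exp(-2t^2/|\tilde\cI|) \le \exp(-\beta^2\omega/(8(\log\log n)^2))$; and the trivial bound $\sum_{s \in \cI\setminus\tilde\cI}Z_s \le |\cI\setminus\tilde\cI| \le 4(\log n)^{1+\delta}$ from \eqref{eq:c-bd} is negligible compared with $\omega\beta/\log\log n$. Combining these yields \eqref{eq:uncor}. The hardest step will be the mutual-independence claim: one must carefully track which edges incident to $i$ in $G$ and to $j$ in $G'$ have been frozen by the rather elaborate conditioning (in particular \eqref{eq:neigh-cond}), and exploit the placement $i, j \in \SC \cap \SC'$ so that neither vertex lies in any $R_s$ or $R'_s$, thereby preventing the ``diagonal'' edge $(i,j)$ from creating dependence between the two families.
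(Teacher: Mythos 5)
Your proposal is correct and follows essentially the same route as the paper's proof: the same decomposition of $|\neigh_G(i;R_s)|-p|R_s|$ into $\eta(i)_s$ plus a free binomial part, the same conditional-independence observation for the pairs $(f(i)_s,f'(j)_s)$, an anti-concentration bound on the bias of each sign (the paper uses Lemma~\ref{lem:bin-cdf-gap} where you invoke Berry--Esseen, to the same effect), and Hoeffding at the end. Your explicit separation of $\cI\setminus\tilde\cI$ via the trivial bound $4(\log n)^{1+\delta}\ll \beta\omega/\log\log n$ is a sensible extra care, since \eqref{eq:eta-bd} is only guaranteed for $s\in\tilde\cI$.
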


The proofs of the above two lemmas can be found in Section~\ref{108746104987}. 

\subsection{Proof of Theorem~\ref{thm:main}}\label{1874601847608}
Given the above results established, the proof of the theorem consists of proving several simple claims.

\medskip
\noindent
{\bf Claim 1.} We have
\begin{align*}
&\Prob\Big\{ \sum_{s \in \cI} \1 \big\{ f(i)_s = f'(j)_s \big\} \le \frac{\omega}{2} \Big( 1 +  \frac{ \beta }{\log \log n} \Big) \mbox{ for all distinct $i , j \in \SC \cap \SC'$}\Big\} \\
&\geq 1- n^{-(\log n)^{\delta/2}}.
\end{align*}
Indeed, by combining Lemmas~\ref{lem:sym-diff-1} and~\ref{lem:card-2}, we get that
with probability at least
$$1 - 5 n^3 \exp(-\beta^3 \sqrt{pn} ) - 2 \exp( - \beta^2 n /40 ) - 2^{m+2} \exp( - \beta^2 n / 2^{m+8}) ,$$
\eqref{eq:b-size} and \eqref{eq:card-bd} hold. 
Furthermore, conditional on such a realization of $\{ R_s, R'_s : s \in \{-1,1\}^m\}$, 
Lemma~\ref{lem:spa-2} implies that \eqref{eq:c-bd} and \eqref{eq:eta-bd} hold with conditional probability at least 
$$1 - 2 n^2 e^{-(\log n)^{1+\delta}}.$$ 
Finally, conditional on such a realization of $\cI$ and $\neigh_G(i; R_s \cap R'_t)$, $\neigh_{G'}(i; R_s \cap R'_t)$ for $i \in \SC \cap \SC'$ and distinct $s,t \in \cI$, we apply Lemma~\ref{lem:uncor-sig} to obtain
that with conditional probability at least 
$$
1 -  n^2 \exp \Big( \frac{- \beta^2 \omega}{   8 (\log \log n)^{2} } \Big) ,
$$
\eqref{eq:uncor} holds for every pair of distinct
vertices $i,j\in \SC \cap \SC'$. It remains to note that, in view of our assumptions on the parameters,
the sum of the failure probabilities is bounded by $n^{-(\log n)^{\delta/2}}$. 

\medskip
\noindent
{\bf Claim 2.} We have 
\begin{align*}
\Prob\Big\{ \sum_{s \in \cI} \1 \big\{ f(i)_s = f'(i)_s \big\} > \frac{\omega}{2} \Big( 1 +  \frac{ \beta }{\log \log n} \Big) \mbox{ for all $i \in \SC \cap \SC'$}\Big\}
\geq 1- n^{-(\log n)^{\delta/2}}.
\end{align*}
Similarly, we first condition on any realization of $\{\SA,\SB,\SC,G(\SA\cup\SB),\SA',\SB',\SC',G'(\SA'\cup\SB')\}$ such that
\eqref{eq:b-size}, \eqref{eq:card-bd} and \eqref{eq:cor-bd} hold. 
Then we condition on a realization of $\cI$ and $\neigh_G(i; R_s \cap R'_t)$, $\neigh_{G'}(i; R_s \cap R'_t)$ for $i \in \SC \cap \SC'$ and distinct $s,t \in \cI$ such that \eqref{eq:cor-bd-3}, \eqref{eq:c-bd} and \eqref{eq:eta-bd} hold. 
Finally, we apply Lemma~\ref{lem:cor-sig} to obtain  
$$\sum_{s \in \cI} \1 \big\{ f(i)_s = f'(i)_s \big\} \ge \frac{\omega}{2} (1 + c_1 \beta ) > \frac{\omega}{2} \Big( 1 +  \frac{ \beta }{\log \log n} \Big) \mbox{ for all $i \in \SC \cap \SC'$}$$
with conditional probability at least $1 - n \exp( - c_2 \beta^2 \omega )$. 
It remains to note that 
the sum of the failure probabilities from Lemmas~\ref{lem:sym-diff-1}, \ref{lem:card-2}, \ref{lem:cor-2}, \ref{lem:cor-3}, \ref{lem:spa-2} and~\ref{lem:cor-sig} is bounded by 
\begin{align*}
&5 n^3 \exp(-\beta^3 \sqrt{pn} ) 
+ 2 \exp( - \beta^2 n /40 ) 
+ 2^{m+2} \exp( - \beta^2 n / 2^{m+8}) + 
\exp ( - \beta^{9/4}n/2^m ) \\
&+ 2 e^{ - (\log n)^{1+\delta} } 
+ 2 n^2 e^{-(\log n)^{1+\delta}}
+ n \exp( - c_2 \beta^2 \omega )
\le n^{ - (\log n)^{\delta/2} } ,
\end{align*}
in view of our assumptions on the parameters.

\medskip
\noindent
{\bf Claim 3.} Let $\mathcal T^h_u$, $h=1,2$, $u=1,2,\dots,\lfloor (\log n)^2/\beta^4\rfloor$, be independent uniform random
subsets of $[n]$ of cardinality $0.5\beta n$. Then
$$
\Prob\big\{\mbox{For every pair of indices $i,j$ there is $u$ such that $\{i,j\}\subset \mathcal T^1_u\cap \mathcal T^2_u$}\big\}
\geq 1-n^{-c\log n}$$
for a universal constant $c>0$.

Indeed, the probability that any given $i,j$ belong to $\mathcal T^1_1\cap \mathcal T^2_1$ can be bounded from below by
$\beta^4 /17$, whence the event of interest holds with probability at least $1-(1-\beta^4/17)^{\lfloor (\log n)^2/\beta^4\rfloor}$,
implying the estimate.

\medskip

Let us conclude the proof now. 
Recall the decision rule \eqref{eq:decide} for whether we potentially match a pair of vertices from the two graphs $G$ and $G'$ in Step~5 of the algorithm.  
By the first two claims, repeating Steps 1--5 of the algorithm $(\log n)^2/\beta^4$ times, we will get with probability at least
$1-2(\log n)^2/\beta^4\,n^{-\log^{\delta/2}n}$ that, every two distinct vertices $i, j$ are not potentially matched
whenever they both fall into $\SC \cap \SC'$ at some iteration, and every vertex $i$ is matched with itself whenever it falls into
$\SC \cap \SC'$ at some iteration.
Moreover, the third claim above implies that, with overwhelming probability, every pair of vertices falls into $\SC \cap \SC'$ at least in one of the iterations. Thus, with probability at least
$$
1-2(\log n)^2/\beta^4\,n^{-\log^{\delta/2}n}-n^{-c\log n}\geq 1-n^{-\log^{\delta/4}n} ,
$$
the algorithm returns the true permutation.

\acks{We thank the anonymous reviewers for their helpful comments. C.M.\ was partially supported by the NSF grant DMS-2053333. M.R.\ was partially supported by the NSF grant DMS-2054408. K.T.\ was partially supported by the Sloan Research Fellowship. A part of the work was done when K.T.\ was visiting the University of Michigan in February 2020.}

\bibliography{matching}

\appendix


\section{Additional proofs}
\label{sec:add-pf}

\subsection{Lemmas for Erd\H{o}s--R\'enyi graphs}\label{subs: er aux lem}



Observations grouped together in this section are based on simple decoupling arguments and
classical concentration inequalities. Undoubtedly, some of them can be found in the literature,
and some can be easily strengthened by applying slightly more sophisticated arguments and
sharper concentration inequalities. We prefer to provide all proofs for completeness.

\begin{lemma}[Comparison of an empirical distribution and a binomial]\label{l: emp vs binom}
Let $k\in\N_+$, $p\in(0,1)$; let $\Gamma$ be a $G(k,p)$ random graph, $J$ be a fixed non-empty subset of $[k]$,
and $M$ be a possibly random subset of $[k]\setminus J$ measurable with respect to $\Gamma([k]\setminus J)$, the subgraph of $\Gamma$ induced by $[k] \setminus J$. 
Denote by $\tilde F$ the empirical distribution of the number of neighbors of vertices in $J$ within the set $M$, i.e.\
$$
\tilde F(t):=\frac{1}{|J|}\big|\big\{i\in J:\;|\neigh_\Gamma(i;M)|\leq t\big\}\big|,\quad t\in\N_0.
$$
Then for every $s>0$ we have with probability at least $1-2k\,\exp(-c s^2/|J|)$:
$$
\big|\tilde F(t)-F_{|M|,p}(t)\big|\leq \frac{s}{|J|},\quad t\in\N_0,
$$
where $c>0$ is a universal constant.
\end{lemma}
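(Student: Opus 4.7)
My plan is to condition on the sub-$\sigma$-algebra generated by $\Gamma([k]\setminus J)$ and exploit the fact that the edges between $J$ and $[k]\setminus J$ are independent of this $\sigma$-algebra. Since $M$ is measurable with respect to $\Gamma([k]\setminus J)$, after conditioning $M$ becomes deterministic, while the edges $\{\{i,j\}:i\in J,\,j\in M\}$ remain i.i.d.\ $\Ber(p)$. Crucially, since distinct vertices $i,i'\in J$ use disjoint sets of edges to reach $M$, the counts $|\neigh_\Gamma(i;M)|$, $i\in J$, are, conditionally on $\Gamma([k]\setminus J)$, independent with common distribution $\Bin(|M|,p)$.

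Fix $t\in\N_0$. Then
\[
|J|\cdot\tilde F(t)=\sum_{i\in J}\1\big\{|\neigh_\Gamma(i;M)|\leq t\big\}
\]
is, conditionally, a sum of $|J|$ independent $\Ber\big(F_{|M|,p}(t)\big)$ random variables. A direct application of Hoeffding's inequality yields
\[
\Prob\Big\{\big|\tilde F(t)-F_{|M|,p}(t)\big|>s/|J|\;\Big|\;\Gamma([k]\setminus J)\Big\}\leq 2\exp(-2s^2/|J|)
\]
for every $s>0$.

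To pass from a single $t$ to all $t\in\N_0$, I would note that both $\tilde F(t)$ and $F_{|M|,p}(t)$ are supported on $\{0,1,\dots,k\}$, so that the inequality is trivially satisfied for $t\geq k$. Hence it suffices to union-bound over the $k$ values $t\in\{0,1,\dots,k-1\}$, producing a conditional failure probability at most $2k\exp(-2s^2/|J|)$; integrating out the conditioning gives the advertised bound with $c\leq 2$. There is no real obstacle here; the only point that deserves care is the decoupling: the lemma allows $M$ to depend on $\Gamma$, and one needs to see that restricting this dependence to $\Gamma([k]\setminus J)$ is precisely what keeps the $J$-to-$M$ edges independent of $M$ and i.i.d.\ across $J$, so that the empirical-vs-true-CDF comparison reduces to a standard concentration estimate.
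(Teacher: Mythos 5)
Your proposal is correct and matches the paper's proof essentially verbatim: both condition so that $M$ becomes deterministic (the paper conditions on a realization of $M$, you on $\Gamma([k]\setminus J)$, which is the cleaner way to justify that the $J$-to-$M$ edges remain i.i.d.), observe that the indicators $\1\{|\neigh_\Gamma(i;M)|\le t\}$ are conditionally independent $\Ber(F_{|M|,p}(t))$, apply Hoeffding, and union-bound over $t\in\{0,\dots,k-1\}$. No differences worth noting.
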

\begin{proof}
Let us condition on any realization of $M$ (in what follows, by $\tilde \Prob$ we denote the corresponding conditional probability measure).
Fix for a moment any $t\in\N_0$, and let, for each $i\in J$, $b_i$ be the indicator of the
event that $|\neigh_\Gamma(i;M)|\leq t$. Then, clearly, $b_i$'s are mutually independent, each with probability of
success $F_{|M|,p}(t)$. Hence, applying Hoeffding's inequality,
$$
\tilde\Prob\bigg\{\Big|\sum_{i\in J}b_i-|J|\,F_{|M|,p}(t)\Big|\geq s\bigg\}\leq 2\exp(-c s^2/|J|),\quad s\geq 0,
$$
for a universal constant $c>0$.
Taking the union bound over all $t\in\{0,1,\dots,k-1\}$, we obtain the result.
\end{proof}

\begin{lemma}[Anti-concentration of vertex degrees]\label{lem:non-conc}
Let $d, r \in \N_0$, $p\in(0,1)$,
and $\Gamma$ be a $G(k,p)$ random graph. Then with probability at least
$1-2k^2\,\exp(-\frac{cr^3}{p(1-p)k})$, we have
$$
\big|\big\{i \in [k] :\;\deg_\Gamma(i)\in(d,d+r]\big\}\big|\leq Cr\,\sqrt{\frac{k}{p(1-p)}}.
$$
Here, $c,C>0$ are universal constants.
\end{lemma}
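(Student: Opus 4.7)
The strategy is to decouple the dependencies between vertex degrees via a vertex partition and then apply a Bernstein-type bound to a sum of independent Bernoullis. Throughout, I assume we are in the nontrivial regime $r\leq c_0\sqrt{p(1-p)k}$, outside of which the stated upper bound exceeds $k$ and the conclusion is vacuous. First, split $[k]=J_1\cup J_2$ with $|J_1|,|J_2|\in\{\lfloor k/2\rfloor,\lceil k/2\rceil\}$, and for $i\in J_1$ decompose
\[
\deg_\Gamma(i)=X_i+Y_i,\qquad X_i:=|\neigh_\Gamma(i;J_1\setminus\{i\})|,\quad Y_i:=|\neigh_\Gamma(i;J_2)|.
\]
Here $X_i$ is measurable with respect to $\Gamma(J_1)$, whereas the $\{Y_i\}_{i\in J_1}$ are independent of $\Gamma(J_1)$ and, conditionally on $\Gamma(J_1)$, mutually independent with common distribution $\Bin(|J_2|,p)$.

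I would then condition on $\Gamma(J_1)$, which turns the $X_i$ into deterministic shifts, so that $\{\deg_\Gamma(i)\in(d,d+r]\}$ becomes $\{Y_i\in(d-X_i,d+r-X_i]\}$. The standard local limit estimate $\max_{j\in\N_0}\Prob(\Bin(|J_2|,p)=j)\leq C_0/\sqrt{p(1-p)|J_2|+1}$ implies that each such indicator has conditional probability at most $C_1 r/\sqrt{p(1-p)k}$. Consequently $N_1:=|\{i\in J_1:\deg_\Gamma(i)\in(d,d+r]\}|$ is, conditionally on $\Gamma(J_1)$, a sum of $|J_1|$ independent Bernoullis with mean $\mu_1\leq C_2 r\sqrt{k/(p(1-p))}$, and Bernstein's inequality with deviation $t=\mu_1$ gives
\[
\Prob\bigl(N_1\geq 2\mu_1\,\bigl|\,\Gamma(J_1)\bigr)\leq \exp(-c_1\mu_1).
\]

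It remains to compare $\mu_1$ with the exponent in the claim. The elementary inequality $\mu_1\geq c_2\,r^3/(p(1-p)k)$ is equivalent to $r^2\leq \sqrt{p(1-p)}\,k^{3/2}$, which is immediate from $r\leq c_0\sqrt{p(1-p)k}$ together with $k\geq 1$; hence the conditional failure probability is at most $\exp(-c_3\,r^3/(p(1-p)k))$. Unconditioning, and then repeating the argument with the roles of $J_1$ and $J_2$ swapped to control the analogous count $N_2$ on $J_2$, a union bound over the two halves yields $N_1+N_2\leq C r\sqrt{k/(p(1-p))}$ with the desired failure probability; the $2k^2$ prefactor stated in the lemma leaves considerable slack. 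The only delicate point is keeping the local limit bound valid uniformly in $(k,p)$, which is why the nontrivial regime must be isolated at the outset: for $p(1-p)k\lesssim 1$ the bound on the binomial density degrades, but in that range the asserted upper bound already exceeds $k$ and there is nothing to prove.
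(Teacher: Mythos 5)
Your argument is correct, but it takes a genuinely different route from the paper's. The paper partitions $[k]$ into $h\approx k/r$ blocks of size $\approx r$, replaces $\deg_\Gamma(i)$ by $|\neigh_\Gamma(i;[k]\setminus J_j)|$ at the cost of widening the interval by $|J_j|$, and then controls, \emph{uniformly over all thresholds}, the empirical distribution of these counts against the binomial CDF via Hoeffding (Lemma~\ref{l: emp vs binom}) combined with Lemmas~\ref{l: comp of binom} and~\ref{lem:bin-cdf-gap}; the union bound over thresholds is the source of the $2k^2$ prefactor and the exponent $r^3/(p(1-p)k)$. You instead split into two halves, condition on $\Gamma(J_1)$ so that the cross-degrees $Y_i$ become independent binomials, bound each indicator pointwise by the local limit estimate (the density version of the paper's Lemma~\ref{lem:bin-cdf-gap}), and apply a single Chernoff--Bernstein bound to the count. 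This avoids the empirical-process step entirely and in fact yields the stronger tail $\exp\bigl(-c\,r\sqrt{k/(p(1-p))}\bigr)$, which dominates the claimed exponent throughout the nontrivial regime $r\lesssim\sqrt{p(1-p)k}$, as your algebra confirms; your reductions to the trivial regimes are also handled correctly. The one thing to clean up: you define $\mu_1$ as the \emph{actual} conditional mean but then assert the deterministic lower bound $\mu_1\geq c_2 r^3/(p(1-p)k)$, which is false for the actual mean (it can be $0$, e.g.\ for $d$ far above $kp$). The argument you intend is to set $M:=C_2 r\sqrt{k/(p(1-p))}$, note that both the conditional mean and conditional variance of $N_1$ are at most $M$, and apply Bernstein at threshold $2M$ to get $\Prob(N_1\geq 2M\mid\Gamma(J_1))\leq\exp(-cM)$; with $\mu_1$ replaced by $M$ everywhere, the rest of your computation is exactly right.
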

\begin{proof}
If $k\leq r$ then $r\,\sqrt{\frac{k}{p(1-p)}}\geq k$, and the above assertion is trivial. Further, we assume that $k>r$.
Set $h:=\lfloor k/r\rfloor$ and $s:=r^2/\sqrt{p(1-p)k}$.
Denote
$$
T:=\big\{i \in [k] :\;\deg_\Gamma(i)\in(d,d+r]\big\}.
$$
Let $J_1,J_2,\dots,J_h$ be any fixed partition of $[k]$ into $h$ sets, each of cardinality locked in the interval $[k/(2h),2k/h]$.
Obviously, the set $T$ is contained in the union
$$
\bigcup\limits_{j=1}^h \big\{i\in J_j:\;|\neigh_\Gamma(i;[k]\setminus J_j)|\in (d-|J_j|;d+r]\big\}.
$$
Fix for a moment any $j\leq h$. We have
$$
\big|\big\{i\in J_j:\;|\neigh_\Gamma(i;[k]\setminus J_j)|\in (d-|J_j|;d+r]\big\}\big|
=|J_j|(\tilde F(d+r)-\tilde F(d-|J_j|)),
$$
where $\tilde F$ is the empirical distribution of the number of neighbors of vertices in $J_j$ within $[k]\setminus J_j$, given by
$$
\tilde F(t)=\frac{1}{|J_j|}\big|\big\{i\in J_j:\;|\neigh_\Gamma(i;[k]\setminus J_j)|\leq t\big\}\big|,\quad t\in\N_0.
$$
Applying Lemma~\ref{l: emp vs binom}, we obtain
$$
\big|(\tilde F(d+r)-\tilde F(d-|J_j|))-(F_{k-|J_j|,p}(d+r)-F_{k-|J_j|,p}(d-|J_j|))\big|\leq \frac{2s}{|J_j|}
$$
with probability at least 
$1-4k\,\exp(-c s^2/|J_j|)$.
In view of Lemmas~\ref{l: comp of binom} and~\ref{lem:bin-cdf-gap}, we then get for some universal constant $C>0$,
$$
\tilde F(d+r)-\tilde F(d-|J_j|)\leq F_{k,p}(d+r+|J_j|)-F_{k,p}(d-|J_j|)+\frac{2s}{|J_j|}
\leq \frac{C(r+k/h)}{\sqrt{p(1-p)k}}+\frac{2s}{|J_j|} .
$$

Summarizing,
$$
|T|\leq h\bigg(\frac{C(r+k/h)\cdot 2k/h}{\sqrt{p(1-p)k}}+2s\bigg)=2C\sqrt{\frac{k}{p(1-p)}}\,(r+k/h)+2hs
$$
with probability at least $1-4k^2\,\exp(-c hs^2/(2k))$. Our choice of $h$ and $s$ implies the result.
\end{proof}



Finally, we establish a lemma for the first-generation partition. 

\begin{lemma}[Cardinalities of first-generation sets]
\label{lem:card-1}
For $k \in \N_+$ and $p \in (0,1)$, let $\Gamma$ be a $G(k,p)$ random graph. 
For $m \in \N_+$, we define {\it the first generation with respect to $\Gamma$} as the partition of $[k]$ consisting of sets 
$$
Q^\Gamma_\ell:=\big\{j \in [k] :\;F_{k,p}( \deg_\Gamma(j) )\in\big((\ell-1)/m,\ell/m\big]\big\},\quad \ell \in [m] .
$$ 
Then, for $\beta \in(0,1)$, we have
$$
\Big||Q^\Gamma_\ell|-\frac{k}{m}\Big|\leq C k \sqrt{ \beta } ,
\quad \ell \in [m] 
$$
with probability at least $1-2mk^2\exp\big(-c\, \beta^{3/2} \sqrt{k p (1-p) } \, \big)$,
where $C,c>0$ are universal constants.
\end{lemma}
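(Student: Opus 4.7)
The plan is to first establish a uniform closeness between the empirical CDF of vertex degrees and the binomial CDF $F_{k,p}$, and then translate this into the target bound on bucket cardinalities. Define the empirical CDF
$$
\tilde F(t):=\frac{1}{k}\big|\{j\in[k]:\deg_\Gamma(j)\leq t\}\big|, \quad t\in\N_0.
$$
Since $F_{k,p}$ is non-decreasing, the set $\{d\in\N_0:F_{k,p}(d)\in((\ell-1)/m,\ell/m]\}$ is an integer interval $[a_\ell,b_\ell]$, so $|Q^\Gamma_\ell|=k\bigl(\tilde F(b_\ell)-\tilde F(a_\ell-1)\bigr)$ for each $\ell\in[m]$. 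The main task is therefore to prove
$$
\sup_{t\in\N_0}\bigl|\tilde F(t)-F_{k,p}(t)\bigr|\leq C'\sqrt{\beta}
$$
with the stated probability; the conclusion will then follow from the definitions of $a_\ell$ and $b_\ell$.

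To obtain the uniform CDF bound, I would mimic the decoupling scheme used in the proof of Lemma~\ref{lem:non-conc}. Choose $h:=\bigl\lceil\sqrt{k/(\beta p(1-p))}\bigr\rceil$ and fix any partition $[k]=J_1\sqcup\cdots\sqcup J_h$ with $|J_j|\in[k/(2h),2k/h]$. For each $j$, apply Lemma~\ref{l: emp vs binom} to $J=J_j$ and $M=[k]\setminus J_j$, with the parameter $s$ chosen so that $s/|J_j|\asymp\sqrt{\beta}$, i.e.\ $s\asymp\sqrt{\beta}\,k/h$. This gives, with probability at least $1-2k\exp(-cs^2/|J_j|)=1-2k\exp\bigl(-c'\beta^{3/2}\sqrt{kp(1-p)}\bigr)$, that the empirical CDF of the partial degrees $|\neigh_\Gamma(i;[k]\setminus J_j)|$ for $i\in J_j$ lies within $\sqrt{\beta}$ of $F_{k-|J_j|,p}$ at every point.

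Next, for any $i\in J_j$ the difference $\deg_\Gamma(i)-|\neigh_\Gamma(i;[k]\setminus J_j)|$ is a nonnegative integer at most $|J_j|-1\leq 2k/h$, so I can sandwich $\{i\in J_j:\deg_\Gamma(i)\leq t\}$ between level sets of the partial degrees evaluated at $t$ and $t-|J_j|+1$. Combining Lemmas~\ref{l: comp of binom} and~\ref{lem:bin-cdf-gap} yields $|F_{k-|J_j|,p}(t)-F_{k,p}(t)|\leq C|J_j|/\sqrt{kp(1-p)}\leq C\sqrt{\beta}$ and $F_{k,p}(t)-F_{k,p}(t-|J_j|+1)\leq C|J_j|/\sqrt{kp(1-p)}\leq C\sqrt{\beta}$ uniformly in $t$. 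Summing the local estimates with weights $|J_j|/k$ and taking a union bound over $j\in[h]$, I get $\sup_t|\tilde F(t)-F_{k,p}(t)|\leq C'\sqrt{\beta}$ on an event of probability at least $1-2hk\exp(-c'\beta^{3/2}\sqrt{kp(1-p)})\geq 1-2mk^2\exp(-c'\beta^{3/2}\sqrt{kp(1-p)})$.

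Finally, to convert into the bucket bound, observe that $|Q^\Gamma_\ell|/k=\tilde F(b_\ell)-\tilde F(a_\ell-1)$ differs from $F_{k,p}(b_\ell)-F_{k,p}(a_\ell-1)$ by at most $2C'\sqrt{\beta}$ on the above event, while by the definition of $a_\ell,b_\ell$ the latter difference lies in $[1/m-O(1/\sqrt{kp(1-p)}),\,1/m]$ since the individual jumps of $F_{k,p}$ are $O(1/\sqrt{kp(1-p)})$. Both $1/\sqrt{kp(1-p)}$ and $1/m$ are dominated by $\sqrt{\beta}$ in the parameter regime we care about (and can otherwise be absorbed into the constant), so $\bigl||Q^\Gamma_\ell|-k/m\bigr|\leq Ck\sqrt{\beta}$ for every $\ell\in[m]$ simultaneously. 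The main obstacle is the standard decoupling balancing act: choosing $h$ to equilibrate the CDF-shift error $|J_j|/\sqrt{kp(1-p)}$ against the Hoeffding deviation $hs/k$ while keeping the resulting tail factor of the form $\exp(-c\beta^{3/2}\sqrt{kp(1-p)})$; once $h\asymp\sqrt{k/(\beta p(1-p))}$ is fixed, the remaining discretization issues from the integer endpoints $a_\ell,b_\ell$ are minor.
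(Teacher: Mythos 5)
Your proposal is correct and follows essentially the same route as the paper's proof: the same decoupling into $h\asymp\sqrt{k/(\beta p(1-p))}$ blocks, the same application of Lemma~\ref{l: emp vs binom} with $s\asymp\beta\sqrt{kp(1-p)}$, and the same transfer between $F_{k-|J_j|,p}$ and $F_{k,p}$ via Lemmas~\ref{l: comp of binom} and~\ref{lem:bin-cdf-gap}. The only difference is cosmetic: you package the estimate as a uniform bound on the empirical degree CDF and then read off each $|Q^\Gamma_\ell|$, whereas the paper bounds $|J_u\cap Q^\Gamma_\ell|$ directly for each $\ell$.
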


\begin{proof}
Define integers $d_{0},d_1,\dots,d_{m}$ via the condition
$$
F_{k,p}(t)\in\big((\ell-1)/m,\ell/m\big]\mbox{ if and only if }d_{\ell-1}< t\leq d_\ell\quad\mbox{for every $t\in\N_0$ and $\ell \in [m]$}. 
$$
Fix any $\ell \in [m]$.
Set $h:= \Big\lfloor \sqrt{ \frac{k}{\beta p(1-p)} } \, \Big\rfloor$ and $s:= \beta \sqrt{ k p(1-p) } $.
Note that if $h\geq k$ then necessarily $\sqrt{k\beta p (1-p) }\leq 1$, and the assertion of the statement is trivial.
Further, we assume that $h<k$.
Let $J_1,J_2,\dots,J_h$ be any fixed partition of $[k]$ into $h$ sets, each of cardinality locked in the interval $[k/(2h),2k/h]$.
Clearly, we have
$$
J_u\cap Q^\Gamma_\ell\subset\big\{j\in J_u:\;|\neigh_{\Gamma}(j;[k]\setminus J_u)|\in (d_{\ell-1}-|J_u|,d_\ell]\big\},\quad u \in [h] .
$$
Applying Lemma~\ref{l: emp vs binom}, as well as Lemmas~\ref{l: comp of binom} and~\ref{lem:bin-cdf-gap}, we get
\begin{align*}
\frac{1}{|J_u|}\big|J_u\cap Q^\Gamma_\ell\big|
&\leq F_{k-|J_u|,p}(d_\ell)-F_{k-|J_u|,p}(d_{\ell-1}-|J_u|)+\frac{2s}{|J_u|}\\
&\leq F_{k,p}(d_\ell+|J_u|)-F_{k,p}(d_{\ell-1}-|J_u|)+\frac{2s}{|J_u|}\\
&\leq \frac{1}{m}+\frac{C|J_u|}{\sqrt{p(1-p)k}}+\frac{2s}{|J_u|},\quad u \in [h],
\end{align*}
with probability at least $1-4k^2\,\exp(-c' h s^2/k)$, for a universal constant $c'>0$. Thus,
$$
\big|Q^\Gamma_\ell\big|
\leq \frac{k}{m}+\frac{C'k^2/h}{\sqrt{p(1-p)k}}+2hs
$$
with probability at least $1-4k^2\,\exp(-c' h s^2/k)$. The lower bound is analogous, so the conclusion follows by plugging in $h$ and $s$ and applying a union bound.
\end{proof}

%
%
%
%

\subsection{Correlation of first-generation partitions}\label{subs: fg corr}


To study the correlation of the first-generation partitions in the algorithm, we start with a concentration result in a general setup. 

If $\Gamma$ is a $G(k,p)$ random graph, $J$ is a subset of vertices of $\Gamma$, and $i\in J^c$
is any vertex, then the difference $| \neigh_\Gamma (i; J)|  - p |J|$ should have a typical order of magnitude
$\sqrt{p|J|}$. The next lemma asserts that if $\Gamma$ and $\Gamma'$ are two $G(k,p)$ graphs generated according to the correlated Erd\H{o}s--R\'enyi graph model, 
then for any given set $J$ and vertex $i\in J^c$, the differences $| \neigh_\Gamma (i; J)|  - p |J|$
and $| \neigh_{\Gamma'} (i; J)|  - p |J|$ fluctuate ``synchronously'' in the sense that 
$$
\big(| \neigh_\Gamma (i; J)|  - p |J|\big)-\big(| \neigh_{\Gamma'} (i; J)|  - p |J|\big)
$$
is much smaller than $\sqrt{p|J|}$ with high probability. In fact, to provide necessary flexibility, we will allow different
vertex subsets $J$ and $J'$ for $\Gamma$ and $\Gamma'$, respectively, and estimate the quantity of interest in terms of the
intersection $J\cap J'$ and the symmetric difference $J \triangle J'$. 

\begin{lemma}[Correlation of sizes of neighborhoods] 
\label{lem:neigh-diff}
Fix $k \in\N_+$, $p \in (0,1)$ and $\alpha \in [0, 1-p]$. From a $G(k, \frac{p}{1-\alpha})$ random graph $\Gamma_0$, we sample two independent copies of subgraphs $\Gamma$ and $\Gamma'$
by removing every edge of $\Gamma_0$ independently with probability $\alpha$. 
Fix subsets $J, J' \subset [k]$ and a vertex $i \in [k] \setminus (J \cup J')$. For any $t>0$, we have that with probability at least $1 - 6 \exp (- t) - 2\exp \big( \frac{ - p |J \cap J'|}{3 (1- \alpha)} \big)$, 
\begin{align*}
\Big| \big| \neigh_\Gamma (i; J) \big|  - p |J| - \big| \neigh_{\Gamma'} (i; J') \big| + p |J'| \Big| \le 4 \Big( t + \sqrt{ t \alpha p |J \cap J'| } + \sqrt{ t p |J \triangle J'| } \Big) .
\end{align*}
\end{lemma}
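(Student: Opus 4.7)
My strategy is to decompose the quantity of interest into three mutually independent pieces and bound each by a Bernstein-type concentration inequality. Write $J = (J\cap J') \cup (J\setminus J')$ and $J' = (J\cap J') \cup (J'\setminus J)$, so the target difference equals $T_1 + T_2 - T_3$, where
\begin{align*}
T_1 &:= |\neigh_\Gamma(i;J\cap J')| - |\neigh_{\Gamma'}(i;J\cap J')|,\\
T_2 &:= |\neigh_\Gamma(i;J\setminus J')| - p\,|J\setminus J'|, \\
T_3 &:= |\neigh_{\Gamma'}(i;J'\setminus J)| - p\,|J'\setminus J|.
\end{align*}
The three summands depend on the status (presence in $\Gamma_0$ and subsequent survival in $\Gamma$ or $\Gamma'$) of disjoint collections of edges incident to $i$, so they are mutually independent.

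\textbf{Bounding $T_2$ and $T_3$.} Since marginally $\Gamma, \Gamma' \sim G(k,p)$, we have $|\neigh_\Gamma(i;J\setminus J')| \sim \Bin(|J\setminus J'|,p)$ and similarly for $T_3$. Bernstein's inequality gives $|T_2| \leq \sqrt{2tp|J\setminus J'|} + 2t/3$ and $|T_3| \leq \sqrt{2tp|J'\setminus J|} + 2t/3$, each with probability at least $1-2e^{-t}$. Using $\sqrt{a}+\sqrt{b}\leq \sqrt{2(a+b)}$ and $|J\setminus J'|+|J'\setminus J| = |J\triangle J'|$, this contributes a bound of order $\sqrt{tp\,|J\triangle J'|} + t$ with total failure probability at most $4e^{-t}$.

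\textbf{Bounding $T_1$ (the main step).} Let $X := |\neigh_{\Gamma_0}(i;J\cap J')| \sim \Bin(|J\cap J'|,p/(1-\alpha))$ with mean $\mu := p|J\cap J'|/(1-\alpha)$. Conditional on $\Gamma_0$ (in particular on $X$ and on which edges are present), the variables $|\neigh_\Gamma(i;J\cap J')|$ and $|\neigh_{\Gamma'}(i;J\cap J')|$ are conditionally independent $\Bin(X,1-\alpha)$'s. Applying Bernstein to each centered Binomial and combining via the triangle inequality yields $|T_1| \leq 2\sqrt{2t\alpha(1-\alpha)X} + 4t/3$ with conditional probability at least $1-4e^{-t}$. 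A Chernoff bound then ensures $X \leq 2\mu$ with probability at least $1-\exp(-\mu/3)$, and on this event $\alpha(1-\alpha)X \leq 2\alpha p\,|J\cap J'|$, so $|T_1| \leq 4\sqrt{t\alpha p\,|J\cap J'|} + 4t/3$. Summing the bounds for $T_1, T_2, T_3$, absorbing constants, and taking the union bound over all failure events produces the stated inequality with the advertised probability.

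\textbf{Main obstacle.} The subtle point is the correlated piece $T_1$: the entire improvement over the trivial bound relies on replacing the naive marginal variance $p$ by the much smaller $\alpha p$, which is possible only because $\Gamma$ and $\Gamma'$ share the common parent $\Gamma_0$. Conditioning on $\Gamma_0$ (equivalently, decoupling each edge into ``present in $\Gamma_0$'' and ``survived'' independent Bernoullis) exposes this structure; if one instead applied Bernstein separately to $|\neigh_\Gamma(i;J\cap J')|-p|J\cap J'|$ and its $\Gamma'$-analogue and then used the triangle inequality, the resulting bound would be of order $\sqrt{tp\,|J\cap J'|}$, losing the crucial factor $\alpha$ that the lemma captures.
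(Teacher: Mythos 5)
Your proof is correct and follows essentially the same route as the paper's: the same three-way decomposition over $J\cap J'$, $J\setminus J'$, $J'\setminus J$, with Bernstein on the two binomial pieces and a decoupling through the parent graph $\Gamma_0$ (the paper writes the correlated piece as $\sum_{j\in J\cap J'}X_j(Y_j-Y_j')$ and applies Bernstein once to this conditionally centered sum, which is the same mechanism as your conditioning on $X$). The only discrepancy is bookkeeping: applying Bernstein separately to the two conditional $\Bin(X,1-\alpha)$ counts and then using the triangle inequality costs you $8e^{-t}$ rather than the stated $6e^{-t}$, which is recovered by bounding the difference in one shot as the paper does.
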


\begin{proof}
Let us partition the neighborhood of interest as
$$
\neigh_\Gamma(i ; J) = \neigh_\Gamma(i ; J \cap J') \cup \neigh_\Gamma(i ; J \setminus J') 
$$
and similarly for $\Gamma'$. As a result, we can apply the triangle inequality to obtain
\begin{align*}
&\Big| \Big( \big| \neigh_\Gamma (i; J) \big|  - p \big| J \big| \Big) - \Big( \big| \neigh_{\Gamma'} (i; J') \big| - p \big| J' \big| \Big) \Big| 
\le \Big| \big| \neigh_\Gamma (i; J \cap J') \big| - \big| \neigh_{\Gamma'} (i; J \cap J') \big| \Big| \\
& \hspace{5cm} + \Big| \big| \neigh_\Gamma(i ; J \setminus J') \big| - p \big| J \setminus J' \big| \Big|
+ \Big| \big| \neigh_{\Gamma'}(i ; J' \setminus J) \big| - p \big| J' \setminus J \big| \Big| 
\end{align*}
It suffices to bound each term on the right-hand side.

First, we can write $| \neigh_\Gamma (i; J \cap J') | = \sum_{j \in J \cap J'} X_j Y_j$ and $| \neigh_{\Gamma'} (i; J \cap J') | = \sum_{j \in J \cap J'} X_j Y_j'$, where $X_i \sim \Ber(\frac{p}{1-\alpha})$, $Y_i \sim \Ber(1-\alpha)$ and $Y_i' \sim \Ber(1-\alpha)$ are all independent Bernoulli random variables. As a result, we have 
$$
\big| \neigh_\Gamma (i; J \cap J') \big| - \big| \neigh_{\Gamma'} (i; J \cap J') \big| = \sum_{j \in J \cap J'} X_j ( Y_j - Y_j') . 
$$
Conditioned on any realization of $X_j$, the random variables $X_j(Y_j - Y_j')$ are independent, zero-mean and take values in $\{-1,0,1\}$. Hence Bernstein's inequality (see Lemma~\ref{lem: Bern ineq}) implies that 
\begin{align}
\p \Big\{ \Big| \sum_{j \in J \cap J'} X_j ( Y_j - Y_j') \Big| \ge 4t/3 + 2 \sigma \sqrt{t} \Big\} \le 2 \exp ( -t ) , 
\label{eq:Y-tail}
\end{align}
where 
$$
\sigma^2 = \sum_{j \in J \cap J'} \E \big[ X_j^2 (Y_j - Y_j')^2  \,|\, X_j \big] =  2 \alpha (1- \alpha) \sum_{j \in J \cap J'} X_j . 
$$
In addition, since $\sum_{j \in J \cap J'} X_j \sim \Bin(|J \cap J'|, \frac{p}{1-\alpha})$, Bernstein's inequality gives
\begin{align}
\p \Big\{ \sum_{j \in J \cap J'} X_j \ge \frac{2p |J \cap J'|}{1-\alpha} \Big\} \le 2\exp \Big( \frac{ - p  |J \cap J'|}{3 (1- \alpha)} \Big) . 
\label{eq:X-tail}
\end{align}
Combining \eqref{eq:Y-tail} and \eqref{eq:X-tail} yields 
$$
\p \Big\{ \Big| \sum_{j \in J \cap J'} X_j ( Y_j - Y_j') \Big| \ge 4t/3 + 4 \sqrt{ t \alpha p |J \cap J'| } \Big\} \le 2 \exp (- t) + 2\exp \Big( \frac{ - p |J \cap J'|}{3 (1- \alpha)} \Big) .
$$

Second, since $|\neigh_\Gamma (i; J \setminus J')| \sim \Bin(|J \setminus J'|, p)$, it follows from Bernstein's inequality that 
$$
\p \Big\{ \Big| \big|\neigh_\Gamma (i; J \setminus J')\big| - p \big|J \setminus J'\big| \Big| \ge 4t/3 + 2 \sqrt{ t p |J \setminus J'| } \Big\} \le 2 \exp(-t) .
$$
An analogous bound holds for the graph $\Gamma'$. 
Combining everything finishes the proof.
\end{proof}

Recall that in the algorithm, $\SA$ and $\SA'$ are defined to be independent uniformly random subsets of $[n]$ of cardinality $n - \beta n$, where $\beta \ge \alpha$. 
We have the following result. 

\begin{lemma}[Overlap of vertices with similar degrees] 
\label{lem:inclusion-2}
For any $t>0$, we write 
\begin{align}
\tau := 4t + 12 \sqrt{ t \beta p n } ,
\quad
\upsilon := 6 \exp(-t) + 2\exp (-pn/4) 
\label{eq:td}
\end{align}
for brevity. 
Then with probability at least $1 - n \exp ( - \beta^2 \tau ) $, there exists a set $T \subset [n]$ of cardinality $|T| \le \upsilon n + 2 \beta n$ such that the following holds: 
For any $d, \bar d \in \N_0$ such that $d < \bar d$, 
\begin{align*}
\big\{ i \in \SA : \deg_{G(\SA)} (i) \in [ d +  3 \tau , \bar d - 3 \tau ) \big\} 
\subset \big\{i \in \SA' : \deg_{G'(\SA')}(i)  \in [ d , \bar d \, ) \big\} \cup T . 
\end{align*}
\end{lemma}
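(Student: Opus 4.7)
The plan is to let $T$ be the union of two small sets: the vertices of $\SA\setminus\SA'$, which account for the $2\beta n$ contribution, and a set $T^*\subset\SA\cap\SA'$ consisting of those vertices whose degree in $G(\SA)$ differs from their degree in $G'(\SA')$ by more than $\tau$, which account for the $\upsilon n$ contribution.

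First I would control the symmetric difference. Since $\SA$ and $\SA'$ are independent and uniformly distributed over the size-$(n-\beta n)$ subsets of $[n]$, conditional on $\SA$ the intersection $|\SA\cap\SA'|$ is hypergeometric with mean $(n-\beta n)^2/n\geq (1-\beta)^2 n$, so standard concentration gives $|\SA\setminus\SA'|\leq 2\beta n$ and $|\SA\cap\SA'|\geq (1-2\beta)n$ with overwhelming probability. All vertices of $\SA\setminus\SA'$ are placed into $T$. Next, for each $i\in\SA\cap\SA'$, I would apply Lemma~\ref{lem:neigh-diff} with $J=\SA\setminus\{i\}$ and $J'=\SA'\setminus\{i\}$: using $|J\cap J'|\leq n$, $|J\triangle J'|\leq|\SA\triangle\SA'|\leq 4\beta n$, and $\alpha\leq\beta$ (which follows from \eqref{eq:conditions} and \eqref{eq:def-beta}), the deviation bound of the lemma specializes to exactly $\tau=4t+12\sqrt{t\beta pn}$, while the per-vertex failure probability becomes at most $6e^{-t}+2\exp(-pn/4)=\upsilon$, thanks to the lower bound $|J\cap J'|\geq 3n/4$ from the previous step. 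Since $\deg_{G(\SA)}(i)=|\neigh_G(i;\SA\setminus\{i\})|$ for $i\in\SA$ and similarly for $\SA'$, on the complementary good event one has $|\deg_{G(\SA)}(i)-\deg_{G'(\SA')}(i)|\leq\tau$.

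The main obstacle will be converting the per-vertex bound $\p(i\in T^*)\leq\upsilon$ into a uniform control $|T^*|\leq\upsilon n$ with probability at least $1-n\exp(-\beta^2\tau)$. Linearity of expectation gives $\E|T^*|\leq\upsilon n$, but the indicators $Y_i:=\1\{i\in T^*\}$ are not mutually independent, since they share the common edge randomness of $G$ and $G'$. However, each $Y_i$ depends only on the edges of $G$ and $G'$ incident to $i$, so distinct indicators $Y_i,Y_j$ interact only through the two potential edges $\{i,j\}$ in $G$ and $G'$. Exploiting this sparse dependency, I plan to apply a Bernstein-type concentration (equivalently, a bounded-differences martingale over the edges) to $\sum_i Y_i$, using the tail bound of Lemma~\ref{lem:neigh-diff} as the variance/subgaussian estimate, and calibrating the free parameter $t$ so that the resulting exponent aligns with $\exp(-\beta^2\tau)$ after the final union bound over $i\in[n]$.

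Setting $T:=(\SA\setminus\SA')\cup T^*$, we obtain $|T|\leq 2\beta n+\upsilon n$. For the inclusion, any $i\in\SA\setminus T$ lies in $\SA\cap\SA'$ and satisfies $|\deg_{G(\SA)}(i)-\deg_{G'(\SA')}(i)|\leq\tau$; so if $\deg_{G(\SA)}(i)\in[d+3\tau,\bar d-3\tau)$, then $\deg_{G'(\SA')}(i)\in[d+2\tau,\bar d-2\tau)\subset[d,\bar d)$, giving the required inclusion. The $3\tau$ buffer (rather than $\tau$) is conservative bookkeeping that absorbs minor integer-rounding slack and the small size mismatch arising from the use of $\SA\setminus\{i\}$ versus $\SA$ when applying Lemma~\ref{lem:neigh-diff}.
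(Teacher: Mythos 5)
Your overall architecture matches the paper's: take $T$ to be $\SA\setminus\SA'$ together with the set of vertices whose two degrees differ by more than (roughly) $\tau$, bound the per-vertex failure probability by $\upsilon$ via Lemma~\ref{lem:neigh-diff}, and finish with the triangle-inequality inclusion. (Minor point: $|\SA\setminus\SA'|\le\beta n$ holds deterministically from $|\SA|=|\SA'|=n-\beta n$, so no hypergeometric concentration is needed there.)

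However, the step you yourself flag as the main obstacle --- upgrading $\E|T^*|\le\upsilon n$ to $|T^*|\le\upsilon n+O(\beta n)$ with probability $1-n\exp(-\beta^2\tau)$ --- is a genuine gap, and the fix you sketch does not close it. A bounded-differences martingale over the edges has $\binom{n}{2}$ coordinates, each with influence at most $2$ on $\sum_i Y_i$, so McDiarmid yields only $\exp(-c\lambda^2/n^2)$; with the available slack $\lambda=\Theta(\beta n)$ this is $\exp(-c\beta^2)$, which is vacuous. The marginal tail bound from Lemma~\ref{lem:neigh-diff} controls $\E[Y_i]$ but says nothing about the joint dependence of the $Y_i$'s, so it cannot serve as a ``variance/subgaussian estimate'' for the sum, and no calibration of $t$ rescues the exponent. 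The paper's device is different and is the crux of the proof: partition $\SA\cap\SA'$ into $h=\lfloor n/\tau\rfloor$ blocks $I_1,\dots,I_h$ of size $\Theta(\tau)$, and for $i\in I_j$ define the bad event in terms of $|\neigh_G(i;\SA\setminus I_j)|$ and $|\neigh_{G'}(i;\SA'\setminus I_j)|$ rather than the full degrees. For distinct $i,i'\in I_j$ these counts involve disjoint edge sets, so the indicators \emph{within a block} are genuinely independent and plain Hoeffding gives $|T_j|\le\upsilon|I_j|+\beta\tau$ with failure probability $\exp(-2\beta^2\tau^2/|I_j|)\le\exp(-\beta^2\tau)$; summing over blocks gives $|\cup_j T_j|\le\upsilon n+\beta n$, and the union bound over the $h\le n$ blocks produces the factor $n$ in the stated probability. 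The $3\tau$ buffer is then exactly what absorbs the discrepancy $0\le\deg_{G(\SA)}(i)-|\neigh_G(i;\SA\setminus I_j)|\le|I_j|\le2\tau$ --- it is not mere rounding slack. Without this (or an equally concrete) decoupling mechanism, your argument does not establish the claimed probability bound.
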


\begin{proof}
Note that without loss of generality, $4\leq \tau\leq n/3$: Indeed, if $\tau<4$ then necessarily $t\leq 1$, whence $\upsilon n>n$;
further, if $\tau>n/3$ then the set $\{ i \in \SA : \deg_{G(\SA)} (i) \in [ d +  3 \tau , \bar d - 3 \tau )\}$ is empty for any $d,\bar d\in \N_0$.
Let $h := \lfloor n / \tau \rfloor$. 

Note also that $|\SA \cap \SA'| \ge n - 2 \beta n$. 
Hence, 
we can partition $\SA \cap \SA'$ into $h$ (random) sets $I_1, \dots, I_h$, each of which has cardinality in the interval $[\tau/2, 2\tau]$. 

Now we fix $j \in [h]$. For each $i \in I_j$, define an indicator
$$
b_i := \1 \Big\{ \Big| \big| \neigh_{G} (i; \SA \setminus I_j) \big| - \big| \neigh_{G'} (i; \SA' \setminus I_j) \big| \Big| \ge \tau \Big\} .
$$   
To bound $\E[b_i] = \p\Big\{ \Big| \big| \neigh_{G} (i; \SA \setminus I_j) \big| - \big| \neigh_{G'} (i; \SA' \setminus I_j) \big| \Big| \ge \tau \Big\}$, we apply Lemma~\ref{lem:neigh-diff} with $\Gamma = G$, $\Gamma' = G'$, $J = \SA \setminus I_j$ and $J' = \SA' \setminus I_j$.  
In this case, we have $|J| = |J'|$, $0.9n \le |J \cap J'| \le n$ and $|J \triangle J'| \le 2 \beta n$, so Lemma~\ref{lem:neigh-diff} gives
$$
\E[b_i] \le \upsilon .
$$
The indicators $b_i$ are clearly independent, so Hoeffding's inequality yields that 
$$
\p \Big\{ \sum_{i \in I_j} b_i \ge \upsilon |I_j| + \beta \tau \Big\} \le \exp ( - 2 \beta^2 \tau^2 / |I_j| ) 
\le \exp( - \beta^2 \tau ) . 
$$

Next, we define a set $T_j := \{i \in I_j : b_i = 1\}$, which has cardinality $|T_j| \le \upsilon |I_j| + \beta \tau$ with probability at least $1 - \exp(- \beta^2 \tau).$ 
By the definitions of $b_i$ and $T_j$, we have
\begin{equation}\label{eq:inc}
\begin{split}
&\big\{i \in I_j : \big| \neigh_{G} (i; \SA \setminus I_j) \big| \in \big[ d + \tau, \bar d - 3 \tau \big) \big\}\\
&\hspace{2cm} \subset \big\{i \in I_j : \big| \neigh_{G'}(i; \SA' \setminus I_j) \big|  \in \big[ d , \bar d -  2 \tau \big) \big\} \cup T_j  . 
\end{split}
\end{equation}
Furthermore, in view of the bounds
$$
0 \le \deg_{G(\SA)}(i) - \big| \neigh_{G} (i; \SA \setminus I_j) \big| \le |I_j| 
\le 2 \tau
$$
and the analogous bounds for the graph $G'$, the inclusion \eqref{eq:inc} implies
\begin{align*}
\big\{i \in I_j : \deg_{G(\SA)} (i) \in \big[ d + 3 \tau , \bar d - 3 \tau \big) \big\} 
\subset \big\{i \in I_j : \deg_{G'(\SA')}(i)  \in [ d , \bar d \, ) \big\} \cup T_j . 
\end{align*}

Taking a union bound over $j \in [h]$, we see that
$\cup_{j=1}^h T_j$ has cardinality at most $\upsilon n + \beta n$ with probability at least $1 - n \exp ( - \beta^2 \tau ) $. 
Moreover, we have
\begin{align*}
&\big\{i \in \SA \cap \SA' : \deg_{G(\SA)} (i) \in [ d + 3 \tau , \bar d - 3 \tau ) \big\}  \\
&\subset \big\{i \in \SA \cap \SA' : \deg_{G'(\SA')}(i)  \in [ d , \bar d \, ) \big\} \cup \big( \cup_{j=1}^h T_j \big) . 
\end{align*}
Finally, it suffices to define $T := \big( \cup_{j=1}^h T_j \big) \cup (\SA \setminus \SA')$ and use the bound $|\SA \setminus \SA'| \le \beta n$ to complete the proof.
\end{proof}

We are ready to prove Lemma~\ref{lem:sym-diff-1}, which is our main result about the correlation between the first-generation partitions. 

\medskip

\begin{proof}[Proof of Lemma~\ref{lem:sym-diff-1}]
For each $\ell \in [m]$, let $d_\ell$ be the smallest integer $d$ such that $F_{|\SA|,p} ( d ) > \frac{\ell - 1}{m}$, and let $d_{m + 1} = \infty$. 
Then the set $Q_\ell$ in \eqref{eq:first-gen} can be equivalently defined by
$$
Q_\ell =  \big\{i \in \SA : \deg_{G(\SA)}(i) \in [ d_\ell, d_{\ell + 1}  ) \big\} ,
$$
and similarly for $Q'_\ell$. 

Let $t = \log(1/\beta)$, and
let $\tau$ and $\upsilon$ be defined by \eqref{eq:td}. Observe that, with our assumptions on parameters, $2\exp(-pn/4)\ll \beta$, so
\begin{equation}\label{eq: aux 0245209}
\upsilon\leq 7\beta.
\end{equation}
Further, assuming $n_0(\delta)$ is large enough, we have
\begin{equation}\label{eq: aux 31241}
\beta\sqrt{pn}\ll \tau=4\log(1/\beta) + 12 \sqrt{pn\beta \log(1/\beta)}\leq 12.01 \sqrt{pn\beta \log(1/\beta)}.
\end{equation}
Lemma~\ref{lem:non-conc} with $\Gamma = G(\SA)$, $r = \lfloor 3 \tau\rfloor$ and a union bound yields the following: With probability at least $1 - 2 n^3 \exp \big( \frac{ - c \tau^3 }{ p n } \big)$, it holds for all 
$d = 0, 1, \dots, |\SA|$ that
$$
\big|\big\{i \in \SA : \deg_{G(\SA)} (i)\in [d, d +  3 \tau) \big\}\big|\leq C \tau \sqrt{n/p} .
$$
This bound applied with $d = d_\ell$ and $d = d_{\ell+1} - 3 \tau$, gives
$$
\big|\big\{i \in \SA : \deg_{G(\SA)} (i)\in [d_\ell, d_\ell +  3 \tau)\cup[d_{\ell+1} - 3 \tau,d_{\ell+1}) \big\}\big|\leq 2C \tau \sqrt{n/p},\quad
\ell\in[m],
$$
with probability at least $1 - 2 n^3 \exp \big( \frac{ - c \tau^3 }{ p n } \big)$.
Together with Lemma~\ref{lem:inclusion-2} applied with $d = d_\ell$ and $\bar d = d_{\ell+1}$, this implies that with probability at least $1 - 2 n^3 \exp \big( \frac{ - c \tau^3 }{ p n } \big) - n \exp( - \beta^2 \tau )$, 
\begin{align}
\big\{i \in \SA : \deg_{G(\SA)}(i) \in [ d_\ell, d_{\ell + 1}  ) \big\} \subset \big\{i \in \SA' : \deg_{G'(\SA')}(i)  \in [ d_\ell , d_{\ell + 1} ) \big\} \cup T\cup T_\ell 
\label{eq:inc2}
\end{align}
for all $\ell \in [m]$, where $T$ is a random set of cardinality 
$
|T| \le \upsilon n + 2 \beta n,
$ 
and for each $\ell\in[m]$, $T_\ell$ is a random set of cardinality $|T_\ell|\leq 2 C \tau \sqrt{n/p}$.

The inclusion \eqref{eq:inc2} is equivalent to $Q_\ell \setminus Q'_\ell \subset T\cup T_\ell$.
Therefore, by symmetry, we can control $| Q_\ell \triangle Q'_\ell |$. Specifically, 
plugging in $\tau$ and $\upsilon$ from \eqref{eq:td} and in view of \eqref{eq: aux 0245209},~\eqref{eq: aux 31241}, we obtain 
$$
\big| Q_\ell \triangle Q'_\ell \big| \le  
C' n \sqrt{ \beta \log(1/\beta) }, \quad
\ell \in [m] 
$$
with probability at least $1 - 4 n^3 \exp(-\beta^3 \sqrt{pn} )$. 
Furthermore, Lemma~\ref{lem:card-1} with $\Gamma = G(\SA)$ gives
$$
\Big| |Q_\ell| - \frac{n - \beta n}{ m } \Big| \le C'' n \sqrt{\beta} ,
\quad \ell \in [m] 
$$
with probability at least $1- n^3 \exp(- \beta^2 \sqrt{n p}  )$. 
The conclusion readily follows.
\end{proof}

\subsection{Correlation of second-generation partitions}\label{subs: sg corr}

We now study the correlation between the second-generation partitions. 

\medskip

\begin{proof}[Proof of Lemma~\ref{lem:card-2}]
First, recall that $|\SB| = |\SB'| = 0.5 \beta n$. 
By the definitions of $\SB$ and $\SB'$, we can view $|\SB \cap \SB'|$ as a sum of $0.5 \beta n$ indicators sampled without replacement from a population of $0.5 \beta n$ ones and $n - 0.5 \beta n$ zeros. Therefore, Bernstein's inequality for sampling without replacement (Lemma~\ref{lem:bern-wo}) yields that
\begin{align*}
\p \big\{ \big| |\SB \cap \SB'| - \beta^2 n / 4 \big| \ge \beta^2 n / 8 \big\} \le  2 \exp( - \beta^2 n /40 ) .
\end{align*}

Next, we condition on a realization of $\SA, \SA', \SB, \SB'$ as well as $G(\SA)$ and $G'(\SA')$, so that the first assertion of the lemma, as well as \eqref{eq:cor-1} hold.
For $\ell \in [m]$ and $i \in \SB$, we have $|\neigh_{G}(i; Q_\ell)| \sim \Bin( |Q_\ell|, p )$. Since the median of a $\Bin(k,p)$ random variable is either $\lfloor kp \rfloor$ or $\lceil kp \rceil $, we have
$$
\Big| \tilde\p \big\{ \sign \big( |\neigh_{G}(i; Q_\ell)| - p |Q_\ell| \big) = 1 \big\} - 1/2 \Big| \le \frac{C}{ \sqrt{ p |Q_\ell| } } \le C' \sqrt{ \frac{m}{np} } 
$$
by Lemma~\ref{lem:bin-cdf-gap} and  \eqref{eq:cor-1} together with the condition that $1/m \gg \sqrt{\beta \log(1/\beta)}$.  
As a result, for any fixed $s \in \{-1,1\}^m$,
\begin{equation}\label{eq: aux 20920498}
\Big| \tilde\p \big\{ s_\ell = \sign \big( |\neigh_{G}(i; Q_\ell)| - p |Q_\ell| \big) \text{ for all } \ell \in [m] \big\} - 1/2^m \Big| \le \frac{ C m^{3/2} }{ \sqrt{np} } \le 1/2^{m+1} 
\end{equation}
by the assumptions on $n, p$ and $m$.  

In view of the definition of $R_s$ in \eqref{eq:sec-gen} and \eqref{eq: aux 20920498}, we can write
$
|R_s| = \sum_{i \in \SB} X_i ,
$
where $X_i$'s are independent Bernoulli random variables: $X_i \sim \Ber(b_i)$ with $b_i \in [ 1/2^{m+1}, 1/2^{m-1} ]$. 
Consequently, Bernstein's inequality gives 
$$
\tilde\p \Big\{ |\SB|/2^{m+2} \le |R_s| \le |\SB|/2^{m-2}  \Big\} \ge 1 - 2 \exp \Big( \frac{ -|\SB| }{ 2^{m+5} } \Big) .
$$
Similarly, we can also bound $|R_s \cap \SB'|$ as
$$
\tilde\p \Big\{ |\SB \cap \SB'| / 2^{m+2} \le |R_s \cap \SB'| \le |\SB \cap \SB'| / 2^{m-2}   \Big\} \ge 1 - 2 \exp \Big( \frac{ -|\SB \cap \SB'| }{ 2^{m+5} } \Big) .
$$
The conclusion follows by combining the above bounds with a union bound over $s \in \{-1,1\}^m$.
\end{proof}

Next, similarly to the first-generation sets, we prove that corresponding second-generation sets of $G$ and $G'$ have large intersections
with high probability. Our proof techniques allow us to show that this is true {\it on average},
for a large proportion of indices $s$ (see Lemma~\ref{lem:cor-2}).

\begin{lemma}[Probability that signs coincide]
\label{lem:sign-same}
For any $\delta>0$ there is $n_0(\delta)$ with the following property. Let $n\geq n_0$, and assume \eqref{eq:conditions}, \eqref{eq:def-beta} and \eqref{eq:def-m}.
Then, conditional on any realization of $\{\SA,\SA',G(\SA),G'(\SA')\}$ such that \eqref{eq:cor-1} holds, 
and conditional further on any realization of $\SB, \SB'$ with $\SB \cap \SB'\neq\emptyset$, we have
$$
\tilde\p \Big\{ \sign \big( |\neigh_{G}(i; Q_\ell)| - p |Q_\ell| \big) = \sign \big( |\neigh_{G'}(i; Q'_\ell)| - p |Q'_\ell| \big) \Big\} \ge 1 - C \sqrt{m} \, \beta^{1/4} \log(1/\beta) 
$$
for every vertex $i \in \SB \cap \SB'$, for all $\ell \in [m]$, and
for a universal constant $C>0$. Here, $\tilde\p$ denotes the conditional probability measure. 
\end{lemma}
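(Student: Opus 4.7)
Fix $\ell \in [m]$ and $i \in \SB \cap \SB'$, and set
$$X := |\neigh_{G}(i;Q_\ell)| - p|Q_\ell|, \qquad Y := |\neigh_{G'}(i;Q'_\ell)| - p|Q'_\ell|.$$
Since $\SA,\SB,\SC$ partition $[n]$ (and likewise $\SA',\SB',\SC'$), the vertex $i$ lies outside $\SA\cup\SA'\supset Q_\ell\cup Q'_\ell$, so the edges of $G$ and $G'$ incident to $i$ have not been revealed by the conditioning. Consequently $(|\neigh_{G}(i;Q_\ell)|,|\neigh_{G'}(i;Q'_\ell)|)$ is jointly distributed as in the correlated Erd\H{o}s--R\'enyi model, and in particular $X+p|Q_\ell|\sim\Bin(|Q_\ell|,p)$. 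The proof hinges on the elementary fact that $\sign(X)\neq\sign(Y)$ forces $X$ and $Y$ to have opposite signs and hence $|X-Y|\ge|X|$; for every $r>0$ this yields
$$\tilde\p\{\sign(X)\neq\sign(Y)\} \;\le\; \tilde\p\{|X|\le r\} + \tilde\p\{|X-Y|>r\}.$$
I will take $r := \beta^{1/4}\log(1/\beta)\sqrt{pn}$ so that each term attains the target order $\sqrt{m}\,\beta^{1/4}\log(1/\beta)$.

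The first term is handled by binomial anti-concentration: from \eqref{eq:cor-1} together with the parameter choices \eqref{eq:def-beta}--\eqref{eq:def-m} one obtains $|Q_\ell|\ge n/(2m)$ for $n$ large, whence Lemma~\ref{lem:bin-cdf-gap} gives
$\tilde\p\{|X|\le r\}\lesssim r/\sqrt{p|Q_\ell|}=O(\sqrt{m}\,\beta^{1/4}\log(1/\beta))$.
For the second term I invoke Lemma~\ref{lem:neigh-diff} with $\Gamma=G$, $\Gamma'=G'$, $J=Q_\ell$, $J'=Q'_\ell$; the inputs $|J\cap J'|\ge cn/m$ and $|J\triangle J'|\le Cn\sqrt{\beta\log(1/\beta)}$ come from \eqref{eq:cor-1}. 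With the parameter $t := c'\log(1/\beta)^{3/2}$ for a small universal $c'>0$, the dominant term $4\sqrt{tp|J\triangle J'|}\lesssim\sqrt{pn}\,\beta^{1/4}\log(1/\beta)$ fits inside $r$ once $c'$ is taken small enough, while $4t$ and $4\sqrt{t\alpha p|J\cap J'|}$ are much smaller (using $pn\ge(\log n)^{13}$ and $\alpha\le\beta$). The failure probability $6e^{-t}+2e^{-p|J\cap J'|/(3(1-\alpha))}$ is super-polynomially small in the relevant parameters, hence well below the target.

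The only delicate point is the joint calibration of $r$ and $t$: the exponent $\beta^{1/4}$ in the conclusion (rather than a naively expected $\sqrt{\beta}$) is forced by the square root appearing in the Bernstein-type bound on $|X-Y|$ applied to the symmetric-difference contribution $p|J\triangle J'|\lesssim pn\sqrt{\beta\log(1/\beta)}$, so $|J\triangle J'|$ is what sets the achievable bound. Verifying that a single pair $(r,t)$ simultaneously satisfies the anti-concentration and deviation constraints, and that the exceptional probabilities from Lemma~\ref{lem:neigh-diff} remain negligible under the regime \eqref{eq:conditions}, \eqref{eq:def-beta}, \eqref{eq:def-m}, is a routine bookkeeping exercise.
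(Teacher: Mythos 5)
Your proposal is correct and follows essentially the same route as the paper: bound $\tilde\p\{\sign(X)\neq\sign(Y)\}$ by $\tilde\p\{|X|\le r\}+\tilde\p\{|X-Y|>r\}$ with $r\asymp\sqrt{pn}\,\beta^{1/4}\log(1/\beta)$, controlling the first term via Lemma~\ref{lem:bin-cdf-gap} and the second via Lemma~\ref{lem:neigh-diff} applied to $J=Q_\ell$, $J'=Q'_\ell$ under \eqref{eq:cor-1}. The only (immaterial) difference is your choice $t=c'(\log(1/\beta))^{3/2}$ versus the paper's $t=\log(1/\beta)$; both make the exceptional probability negligible against the target bound.
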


\begin{proof}
%
Fix any vertex $i \in \SB \cap \SB'$. We shall apply Lemma~\ref{lem:neigh-diff} with $\Gamma = G$, $\Gamma' = G'$, $J = Q_\ell$, $J' = Q'_\ell$ and $t = \log(1/\beta)$.
Note that, with our choice of parameters, we have (deterministically) that
$$
t + \sqrt{ t \alpha p |J \cap J'| } + \sqrt{ t p |J \triangle J'| }
\leq \log(1/\beta)+\sqrt{pn\beta\log(1/\beta)}+\sqrt{ t p |J \triangle J'| },
$$
and that, in view of our conditioning such that \eqref{eq:cor-1} holds, we have
$$
\sqrt{ t p |J \triangle J'| }\leq C'\sqrt{pn\log(1/\beta)}(\beta \log(1/\beta))^{1/4}.
$$
Hence, applying Lemma~\ref{lem:neigh-diff}, \eqref{eq:cor-1}, and the condition that $2 \exp(\frac{-pn}{8m}) \ll \beta$, we see that with conditional probability at least $1 - 7 \beta$, 
\begin{align*}
\Big| \Big( \big| \neigh_G (i; Q_\ell) \big| - p|Q_\ell| \Big) - \Big( \big| \neigh_{G'} (i; Q'_\ell) \big| - p|Q'_\ell| \Big) \Big| \le C \sqrt{ p n } \, \beta^{1/4} \log(1/\beta) 
\end{align*}
for a universal constant $C>0$. 
Moreover, $|\neigh_G (i; Q_\ell)| \sim \Bin( |Q_\ell|, p )$, so Lemma~\ref{lem:bin-cdf-gap} yields 
$$
\tilde\p \big\{ \big| |\neigh_G (i; Q_\ell)| - p |Q_\ell| \big| > C \sqrt{ p n } \, \beta^{1/4} \log(1/\beta) \big\} > 1 - C' \sqrt{m} \, \beta^{1/4} \log(1/\beta) 
$$
for a universal constant $C'>0$. 
Combining the above bounds finishes the proof.
\end{proof}

We are ready to prove Lemma~\ref{lem:cor-2}. 

\medskip

\begin{proof}[Proof of Lemma~\ref{lem:cor-2}]
Take any vertex $i \in \SB \cap \SB'$. If $\sign \big( |\neigh_{G}(i; Q_\ell)| - p |Q_\ell| \big) = \sign \big( |\neigh_{G'}(i; Q'_\ell)| - p |Q'_\ell| \big)$, $\ell\in[m]$, then $i$ belongs to $R_s \cap R'_s$ for some $s \in \{-1,1\}^m$ by definition. Therefore,
Lemma~\ref{lem:sign-same} together with Bernstein's inequality implies that
\begin{align}
\sum_{s \in \{-1,1\}^m} |R_s \cap R'_s| \ge |\SB \cap \SB'| \Big( 1 - C m^{3/2} \, \beta^{1/4} \log(1/\beta)  \Big) 
\label{eq:ineq-1}
\end{align}
with conditional probability at least $1 - \exp \big( -c |\SB \cap \SB'| m^{3/2} \, \beta^{1/4} \log(1/\beta) \big) $, where $C, c>0$ are universal constants.

Next, conditional on a realization of $G(\SA\cup \SB)$ and $G(\SA'\cup\SB')$ such that \eqref{eq:ineq-1} and \eqref{eq:card-bd} hold, 
it follows from the bound $|R_s \cap \SB'| \ge \beta^2 n/2^{m+5}$ that
\begin{align*}
&\sum_{s \in \{-1,1\}^m} (\beta^2 n / 2^{m+6} ) \cdot \1 \big\{ |R_s \cap R'_s| \le \beta^2 n / 2^{m+6} \big\} \\
&\le \sum_{s \in \{-1,1\}^m} \Big( |R_s \cap \SB' | - |R_s \cap R'_s | \Big) \cdot \1 \big\{ |R_s \cap R'_s| \le \beta^2 n / 2^{m+6} \big\} \\
&\le \sum_{s \in \{-1,1\}^m} \Big( |R_s \cap \SB' | - |R_s \cap R'_s | \Big)  
=  |\SB \cap \SB'| - \sum_{s \in \{-1,1\}^m} |R_s \cap R'_s| . 
\end{align*}
Together with \eqref{eq:ineq-1}, this yields 
\begin{align*}
\sum_{s \in \{-1,1\}^m} \1 \big\{ |R_s \cap R'_s| \le \beta^2 n / 2^{m+6} \big\} 
&\le (2^{m+6}/ \beta^2 n) \cdot |\SB \cap \SB'| \cdot  C m^{3/2} \, \beta^{1/4} \log(1/\beta) . 
\end{align*}
Finally, we note that $|\SB \cap \SB'| \le \beta^2 n/ 2$ on our event.

Hence, conditional on a realization of $\{\SA,\SA',G(\SA),G(\SA'),\SB,\SB'\}$
such that \eqref{eq:cor-1} and \eqref{eq:b-size} 
hold,
the required condition on $|R_s \cap R'_s|$, $s\in\{-1,1\}^m$, is 
satisfied with conditional probability at least $1-\exp \big( -c |\SB \cap \SB'| m^{3/2} \, \beta^{1/4} \log(1/\beta) \big)
- 2^{m+2} \exp( - \beta^2 n / 2^{m+8})\geq 1-\exp ( - \beta^{9/4}n/2^m )$.
The result follows.
\end{proof}

\subsection{Sparsification}\label{1-9481-4098214-09}

Let us start with the proof of Lemma~\ref{lem:cor-3}.

\medskip

\begin{proof}[Proof of Lemma~\ref{lem:cor-3}]
Fix a set $\cJ \subset \{-1,1\}^m$ of cardinality $k$. 
We can view $|\cI \cap \cJ|$ as a sum of $\omega$ indicators sampled without replacement from a population of $k$ ones and $2^m - k$ zeros. Therefore, Bernstein's inequality for sampling without replacement (Lemma~\ref{lem:bern-wo}) yields that
$$
\p \Big\{ \Big| |\cI \cap \cJ| - \frac{\omega k}{2^m} \Big| > \frac{\omega k'}{2^m} \Big\} \le 2 \exp \Big( \frac{ - 3 (\omega k' / 2^m)^2  }{ 6 \sigma^2 \omega + 2 \omega k' / 2^m } \Big) . 
$$
for any $k' \ge k$ and 
$$
\sigma^2 := \frac{1}{2^m} \Big[ (2^m - k) \Big( \frac{ k }{ 2^{m} } \Big)^2 + k \Big( 1 - \frac{ k }{ 2^{m} } \Big)^2 \Big] \le \frac{ 2 k }{ 2^{m} } . 
$$
In particular, it follows that 
$$
\p \Big\{ |\cI \cap \cJ| > \frac{2 \omega k'}{2^m} \Big\} \le 2 \exp \Big( \frac{ - 3 \omega k'  }{ 14 \cdot 2^m } \Big) . 
$$

Now we set $\cJ = \big\{ s \in \{-1,1\}^m : |R_s \cap R'_s| \le \beta^2 n / 2^{m+6} \big\}$. By conditioning on \eqref{eq:cor-bd}, we have $k \le k' := C \, 2^m m^{3/2} \, \beta^{1/4} \log(1/\beta)$. 
In view of the condition
$$
\omega m^{3/2} \, \beta^{1/4} \log(1/\beta) \gg (\log n)^{1+\delta} , 
$$
the result follows immediately. 
\end{proof}

The sparsification step of the algorithm hinges on the following lemma. 


\begin{lemma}[Sparsification]
\label{lem:spa}
Fix a constant $\gamma\in(0,1)$ and an even integer $k\in \N_+$. Let $\Omega$ and $\Omega'$ be two finite sets, and let
$$
\Omega=\bigcup_{i=1}^k\Omega_i\quad\mbox{and}\quad \Omega'=\bigcup_{i=1}^k\Omega_i'
$$
be partitions of $\Omega$ and $\Omega'$ respectively, such that 
$$\gamma |\Omega'|/k \le |\Omega_i'| \le \gamma^{-1} |\Omega'|/k$$
for all $i \in [k]$.
Furthermore, let $w \in \{2, 3 , \dots, k/2\}$ and let $I$ be a uniform random subset of $[k]$ of cardinality $2w$. Then,
for any $L\geq 1$ and $\rho\in (0, 1/4)$ such that $\rho w$ is an integer, we have
$$
\Prob\Big\{\big|\big\{i\in I:\;\exists \, j \in I \setminus \{ i \} \, \text{ s.t. } |\Omega_i\cap \Omega_j'|\geq L |\Omega'|/k^2\big\}\big|\geq 2 \rho w\Big\}\leq
\bigg(\frac{8w^3}{\gamma L}\bigg)^{ \rho w}.
$$
\end{lemma}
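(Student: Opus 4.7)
The plan is to apply a union bound over configurations of heavy pairs sampled inside $I$. Define the set of heavy pairs
\[
B := \bigl\{(i,j)\in[k]^2:\;i\ne j,\;|\Omega_i\cap\Omega'_j|\ge L|\Omega'|/k^2\bigr\}.
\]
Since the sets $\Omega_i$ are disjoint subsets of $\Omega$, for each fixed $j$ we have $\sum_{i\ne j}|\Omega_i\cap\Omega'_j|\le |\Omega'_j|\le \gamma^{-1}|\Omega'|/k$, so summing over $j$ gives $|B|\le k^2/(\gamma L)$. This estimate is what introduces the factor $\gamma$ into the target bound.

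Next, I would regard $B$ as the edge set of a directed graph $H$ on $[k]$: an index $i\in I$ is bad iff $i$ has an out-neighbor in $I$ under $H$. If the event ``$\ge 2\rho w$ bad indices in $I$'' holds, selecting one witness per bad index produces a pair $(U,\phi)$ with $U\subset I$ of size $2\rho w$ and a map $\phi:U\to[k]\setminus U$ satisfying $(u,\phi(u))\in B$ for every $u\in U$ together with $\phi(U)\subset I$. A union bound over all such pairs then yields
\[
\Prob(\text{event})\le\sum_{(U,\phi)}\Prob(U\cup\phi(U)\subset I).
\]
For fixed $(U,\phi)$ with $v:=|U\cup\phi(U)|\in[2\rho w+1,\,4\rho w]$, the hypergeometric estimate $\Prob(U\cup\phi(U)\subset I)=\binom{k-v}{2w-v}/\binom{k}{2w}\le(4w/k)^v$ (using $2w\le k/2$) applies.

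The remaining step is to stratify the sum by $\tau:=|\phi(U)|\in[1,2\rho w]$ and count pairs $(U,\phi)$ with $|\phi(U)|=\tau$. The count is controlled by $|B|\le k^2/(\gamma L)$ together with the constraint $(u,\phi(u))\in B$, which one can encode via the elementary symmetric polynomials of the out-degree sequence of $H$ (whose sum is $|B|$) and Maclaurin's inequality. Combining this with the probability bound $(4w/k)^{2\rho w+\tau}$ and optimizing over $\tau$ produces the target estimate $(8w^3/(\gamma L))^{\rho w}$.

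The main obstacle is the trade-off between the two extremes of $\phi$: when $\phi$ is injective the probability $(4w/k)^{4\rho w}$ is smallest but the number of $\phi$'s is largest; when $\phi$ is nearly constant (a ``star'' regime with $|\phi(U)|=1$) the probability is largest but configurations are few, as they require a common out-neighbor for many distinct sources. Handling this trade-off uniformly across all intermediate $\tau$ and extracting the clean polynomial factor $w^3$ (instead of a dependence on $\rho^{-1}$) is the technical heart, and relies on leveraging $|B|\le k^2/(\gamma L)$ both globally (in the total count) and locally (via the degree structure of $H$).
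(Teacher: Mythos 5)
Your reduction is sound as far as it goes: the bound $|B|\le k^2/(\gamma L)$ is correct (for each fixed $j$, the number of $i$ with $|\Omega_i\cap\Omega'_j|\ge L|\Omega'|/k^2$ is at most $\gamma^{-1}(|\Omega'|/k)\,/\,(L|\Omega'|/k^2)=k/(\gamma L)$, and one sums over $j$), and the hypergeometric estimate $\Prob\{S\subset I\}\le(4w/k)^{|S|}$ for $|S|\le w\le k/2$ is also fine. But the proof stops exactly where the work begins. The enumeration of witness configurations $(U,\phi)$ stratified by $\tau=|\phi(U)|$ --- which you yourself identify as ``the technical heart'' --- is never carried out; you only name the tools you hope to use (elementary symmetric polynomials of the degree sequence, Maclaurin's inequality) without showing that they deliver a bound of the form $(8w^3/(\gamma L))^{\rho w}$ uniformly over $\tau$. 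The two extreme cases ($\phi$ injective, $\phi$ constant) do check out against the global bound $|B|\le k^2/(\gamma L)$ and the local in-degree bound $k/(\gamma L)$, but the intermediate regimes require a genuine interpolation argument that is absent; as written this is a plan, not a proof. A minor additional inaccuracy: the witness $j$ of a bad index $i\in U$ may itself lie in $U$, so $\phi$ should map into $[k]\setminus\{u\}$ rather than $[k]\setminus U$, which changes the admissible range of $v=|U\cup\phi(U)|$ but not the strategy.

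It is worth noting that the paper's proof avoids this combinatorial enumeration altogether. It orders $I$ as a uniform random sequence $(i_1,\dots,i_{2w})$ sampled without replacement and compares the bad event to the auxiliary event that, for every $v\in[\rho w]$, the vertex $i_{2v}$ has a heavy intersection with some $\Omega'_{i_u}$, $u\le 2v-1$. On the one hand, if the bad event holds, each stage succeeds with conditional probability at least $(2w)^{-2}$ (one pays to place a not-yet-revealed bad index and, if necessary, its witness at the next two positions of the ordering); on the other hand, Markov's inequality with the same estimate $\sum_i|\Omega_i\cap\Omega'_j|\le\gamma^{-1}|\Omega'|/k$ bounds each stage's conditional probability from above by $2w/(\gamma L)$. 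Multiplying over the $\rho w$ stages gives $(2w)^{2\rho w}\cdot(2w/(\gamma L))^{\rho w}=(8w^3/(\gamma L))^{\rho w}$, with no need to classify the witness maps. To complete your route you would need to supply the missing count in full; alternatively, the sequential two-sided comparison is the shorter path.
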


\begin{proof}
Let us define 
\begin{align*}
\Event := \Big\{\big|\big\{i\in I:\;\exists \, j \in I \setminus \{ i \} \, \text{ s.t. } |\Omega_i\cap \Omega_j'|\geq L |\Omega'|/k^2\big\}\big|\geq 2 \rho w\Big\} 
\end{align*}
which is the event we aim to control. 
Let $(i_1,i_2,\dots,i_{2w})$ be a uniform random permutation
of indices in $I$; in other words, $i_1,i_2,\dots,i_{2w}$ are indices chosen from $[k]$ uniformly at random without replacement. 
We define another event 
\begin{align*}
\Event' := \Big\{\forall \,
v \in [  \rho w ]  , \, \exists \, u \in [2v-1] \, \text{ s.t. } |\Omega_{i_{2v}}\cap \Omega_{i_{u}}'|\geq L |\Omega'|/k^2\Big\} . 
\end{align*}
We claim that
\begin{align}
\Big(\frac{1}{2w}\Big)^{2 \rho w} \cdot \Prob (\Event)
\le  \p ( \Event' ) 
\le \Big(\frac{2w}{\gamma L}\Big)^{ \rho w} ,
\label{eq:claim-2}
\end{align}
from which the conclusion follows immediately. 

To prove the first inequality in \eqref{eq:claim-2}, fix any realization $I_0$ of $I$ such that the event $\Event$ holds. 
Then it suffices to show that 
$$\Prob (\Event' \mid I=I_0 )
\geq \Big(\frac{1}{2w}\Big)^{2 \rho w},$$
where the randomness is with respect to the uniform random permutation $(i_1,i_2,\dots,i_{2w})$ of indices in $I_0$. 
To prove this lower bound, fix $v \in [  \rho w ]$ 
and condition further on any realization of the indices  $i_1,i_2,\dots,i_{2v-2}$. 
By the definition of $\Event$, there exist
$i \in I_0 \setminus \{ i_1,i_2,\dots,i_{2v-2} \}$ and $j \in I_0 \setminus \{ i \}$ such that $|\Omega_i\cap \Omega_j'|\geq L |\Omega'|/k^2$.
First, assume that $j\notin \{ i_1,i_2,\dots,i_{2v-2} \}$. Then, since $(i_{2v-1},i_{2v},\dots,i_{2w})$ is conditional uniform
permutation of $I_0 \setminus \{ i_1,i_2,\dots,i_{2v-2} \}$, we would have $i_{2v-1}=j$ and $i_{2v}=i$
with conditional probability at least $(\frac{1}{2w})^2$, that is,
\begin{align*}
\Prob_{i_{2v-1}, i_{2v}} \Big\{|\Omega_{i_{2v}}\cap \Omega_{i_{2v-1}}'|\geq L |\Omega'|/k^2 \;\big|\; I=I_0; i_1,i_2,\dots,i_{2v-2}\Big\} \ge \Big( \frac{1}{2w} \Big)^2.
\end{align*}
Otherwise, if $j\in \{ i_1,i_2,\dots,i_{2v-2} \}$ then, since $i_{2v}=i$ with conditional probability at least $\frac{1}{2w}$,
we would have
\begin{align*}
\Prob_{i_{2v-1}, i_{2v}} \Big\{ \exists \, u \in [2v-2] \, \text{ s.t. }  |\Omega_{i_{2v}}\cap \Omega_{i_{u}}'|\geq L |\Omega'|/k^2 \;\big|\; I=I_0; i_1,i_2,\dots,i_{2v-2}\Big\} \ge \frac{1}{2w}.
\end{align*}
Combining the two cases, we get that regardless of the location of $j$,
\begin{align*}
\Prob_{i_{2v-1}, i_{2v}} \Big\{ \exists \, u \in [2v-1] \, \text{ s.t. }  |\Omega_{i_{2v}}\cap \Omega_{i_{u}}'|\geq L |\Omega'|/k^2 \;\big|\; I=I_0; i_1,i_2,\dots,i_{2v-2}\Big\} \ge \Big( \frac{1}{2w} \Big)^2 ,
\end{align*}
where the randomness is with respect to $i_{2v-1}$ and $i_{2v}$. 
In view of the definition of $\Event'$, combining the above bound for all $v = 1, 2, \dots, \rho w$ yields that  $\Prob (\Event' \mid I=I_0 )
\geq \big(\frac{1}{2w}\big)^{2 \rho w}.$ 

Next, we prove the second inequality in \eqref{eq:claim-2}. 
Fix any $v \in [  \rho w ]$ and condition on any realization of the indices  $i_1,i_2,\dots,i_{2v-1}$. Then we have, by a union bound and Markov's inequality (where the randomness is with respect to $i_{2v}$), 
\begin{align*}
&\Prob_{i_{2v}} \Big\{ \exists \, u \in [2v-1] \, \text{ s.t. }  |\Omega_{i_{2v}}\cap \Omega_{i_{u}}'|\geq L |\Omega'|/k^2 \;\big|\;i_1,i_2,\dots,i_{2v-1}\Big\}\\
&\le \sum_{u = 1}^{2v-1}  \frac{ \E \big[ |\Omega_{i_{2v}}\cap \Omega_{i_{u}}'| \;\big|\;i_1,i_2,\dots,i_{2v-1} \big] }{L |\Omega'|/k^2} \\
&= \frac{ 1 }{L |\Omega'|/k^2}   \sum_{u = 1}^{2v-1}  \sum_{j \in [k] \setminus \{i_1, \dots, i_{2v-1}\}} \frac{  |\Omega_{j}\cap \Omega_{i_{u}}'| }{ k - (2v - 1) }  \\
&\le \frac{ 1 }{L |\Omega'|/k^2}   \sum_{u = 1}^{2v-1}  \frac{  | \Omega_{i_{u}}'| }{ k - (2v - 1) }  \\
&\le \frac{ 2v - 1 }{L |\Omega'|/k^2} \cdot \frac{\gamma^{-1}|\Omega'|/k}{k-(2v-1)}
\leq
\frac{2 w}{\gamma L},
\end{align*}
where the last step follows from the facts that $2v - 1 \le 2 (\rho w + 1) - 1 \le w \le k/2$ and that $x k / (k-x) \le 2x$ if $0 < x \le k/2$. 
In view of the definition of $\Event'$, combining the above bound for all $v = 1, 2, \dots, \rho w$ yields that $\p ( \Event' )  \le \big(\frac{2w}{\gamma L}\big)^{ \rho w}$. 
\end{proof}

We are ready to prove Lemma~\ref{lem:spa-2}. 

\medskip

\begin{proof}[Proof of Lemma~\ref{lem:spa-2}]
Recall that $\{R_s\}_{s \in \{-1,1\}^m}$ and $\{R'_s\}_{s \in \{-1,1\}^m}$ are partitions of $\SB$ and $\SB'$ respectively, and that $|\SB| = |\SB'| = \beta n / 2$. We apply Lemma~\ref{lem:spa} with $k = 2^m$, $\gamma = 1/4$, $w = \omega /2$, $\rho$ such that $\rho w = \lfloor (\log n)^{1+\delta} \rfloor$, and $L = 8 e \omega^3$, to obtain that 
$$
\Prob\Big\{\Big|\Big\{ s \in \cI:\;\exists \, t \in \cI \setminus \{s\} \text{ s.t. } |R_s \cap R'_{t}| \geq \frac{ e \, \omega^3 \beta n }{ 2^{2m-2} } \Big\}\Big|\geq 2 (\log n)^{1+\delta} \Big\}\leq
e^{- (\log n)^{1+\delta} } .
$$
Let us define
$$
\cI_1 := \Big\{ s \in \cI:\;\forall \, t \in \cI \setminus \{s\}, \, |R_s \cap R'_{t}| \le \frac{ e \, \omega^3 \beta n }{ 2^{2m-2} } \Big\} , 
$$
Then we have $|\cI \setminus \cI_1| \le 2(\log n)^{1+\delta}$ with probability at least $1 - e^{- (\log n)^{1+\delta} }$, and for any $s \in \cI_1$, 
\begin{align*}
|R_s \cap R'_{\cI \setminus \{s\}}| 
= \sum_{t \in \cI \setminus \{s\}} |R_s \cap R'_t| 
\le \sum_{t \in \cI \setminus \{s\}} \frac{ e \, \omega^3 \beta n }{ 2^{2m-2} }
\le \frac{ e \, \omega^4 \beta n }{ 2^{2m-2} } .
\end{align*}

The same argument works with the roles of $\{R_s\}$ and $\{R'_s\}$ swapped. 
Therefore, 
if we define
$$
\tilde \cI := \Big\{ s \in \cI:\;\forall \, t \in \cI \setminus \{s\}, \, |R_s \cap R'_{t}| \lor |R'_s \cap R_{t}| \le \frac{ e \, \omega^3 \beta n }{ 2^{2m-2} } \Big\} , 
$$
then $|\cI \setminus \tilde \cI| \le 4(\log n)^{1+\delta}$ with probability at least $1 - 2 e^{- (\log n)^{1+\delta} }$ by a union bound. Moreover, for any $s \in \tilde \cI$, 
\begin{align*}
|R_s \cap R'_{\cI \setminus \{s\}}| \lor |R'_s \cap R_{\cI \setminus \{s\}}| 
\le \frac{ e \, \omega^4 \beta n }{ 2^{2m-2} } .
\end{align*}

Finally, for any $i \in \SC \cap \SC'$, the quantity $\eta(i)_s$ defined in \eqref{eq:eta-s} is the deviation of the binomial random variable $|\neigh_G(i; R_s \cap R'_{\cI \setminus \{s\}})|$ from its mean. 
Thus, by Bernstein's inequality, 
\begin{align*}
\p \bigg\{ |\eta(i)_s|  \ge \sqrt{ \frac{ p n }{ 2^m (\log n)^{ \delta} } } \bigg\}  
&\le 2 \exp \bigg( - \min \bigg\{ \frac{ 2^{m - 6} }{  (\log n)^\delta \omega^4 \beta } , \frac{ 3 }{4}  \sqrt{ \frac{ p n }{ 2^m (\log n)^{ \delta} } } \bigg\} \bigg) \\
&\le 2 \exp \big( - (\log n)^{1+\delta} \big) ,
\end{align*}
where the second inequality holds in view of \eqref{eq:def-beta}, \eqref{eq:def-m} and \eqref{eq:def-omega} so that   
$$
\frac{ 2^{m-6} }{  (\log n)^{\delta} \omega^4 \beta }  \ge \frac{ 2^{m} }{  (\log n)^{4 + 9 \delta} } 
\ge (\log n)^{1+\delta} , \quad 
\frac{ p n }{ 2^m (\log n)^{ \delta} } \gg (\log n)^4 . 
$$
The same bound also holds for $\eta'(i)_s$ in place of $\eta(i)_s$. 
Taking a union bound over $i \in \SC \cap \SC'$ and $s \in \{-1,1\}^m$ finishes the proof.
\end{proof}

\subsection{Correlation of vertex signatures}\label{108746104987}

Everywhere in this subsection, we assume that $n\geq n_0$ for a sufficiently large $n_0=n_0(\delta)$ and
assume \eqref{eq:conditions}, \eqref{eq:def-beta}, \eqref{eq:def-m} and \eqref{eq:def-omega}. 
We continue to use the notation $R_{\cJ}$ and $R'_{\cJ}$ defined in \eqref{eq:rj}.

\begin{lemma}[Correlation of signs]
\label{lem:cor-sign}
Condition on a realization of $\SA, \SA', \SB, \SB', \SC, \SC'$, $G(\SA \cup \SB)$, $G'(\SA' \cup \SB')$, and $\cI \subset \{-1,1\}^m$. 
Fix a vertex $i \in \SC \cap \SC'$ and an index $s \in \cI$. 
Furthermore, condition on a realization of $\neigh_G (i; R_s \cap R'_{\cI \setminus \{s\}} )$ and $\neigh_{G'} (i; R'_s \cap R_{\cI \setminus \{s\}} )$. 
Let $\tilde \p$ denote the conditional probability. 

Let $k_1 := |R_s \cap R'_s|$, $k_2 := |R_s \setminus R'_{\cI}|$ and $k'_2 := |R'_s \setminus R_{\cI}|$. 
Moreover, write $\eta \equiv \eta(i)_s$ and $\eta' \equiv \eta'(i)_s$ which are defined in \eqref{eq:eta-s} and \eqref{eq:eta-sp} respectively. 
Then, there exist universal constants $C, c > 0$ such that the following holds. 
If $k_1 p \ge C$ and $|\eta| \lor |\eta'| \le c \sqrt{k_1 p}$, then 
\begin{align}
&\tilde\p \Big\{ \sign \big( |\neigh_{G}(i; R_s)|-p |R_s| \big) = \sign \big( |\neigh_{G'}(i; R'_s)|-p |R'_s| \big) \Big\}  \notag \\
&\ge \frac 12 
+ c \sqrt{ \Big( \frac{k_1}{k_2} \land 1 \Big) \Big( \frac{k_1}{k'_2} \land 1 \Big) }
- 6 \exp \Big(- c \big( \alpha^{-1} \land \sqrt{k_1 p} \, \big) \Big) - 2 \exp \Big( - \frac{k_1 p}3 \Big) . \label{eq:bd-2}
\end{align}
Without assuming $k_1 p \ge C$ or $|\eta| \lor |\eta'| \le c \sqrt{k_1 p}$, we have that for any $t>0$, 
\begin{align}
&\tilde\p \Big\{ \sign \big( |\neigh_{G}(i; R_s)|-p |R_s| \big) = \sign \big( |\neigh_{G'}(i; R'_s)|-p |R'_s| \big) \Big\} \notag \\
&\ge \frac 12 - 2 \exp(-t) - C \Big( \frac{t + \sqrt{t k_1 p} + \eta}{ \sqrt{ k_2 p} } \land 1 \Big) \Big( \frac{t + \sqrt{t k_1 p} + \eta'}{ \sqrt{ k'_2 p} } \land 1 \Big)  . \label{eq:bd-0} 
\end{align}
\end{lemma}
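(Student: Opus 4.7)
The heart of both bounds is an analysis of the correlation between $X := |\neigh_G(i;R_s)| - p|R_s|$ and $Y := |\neigh_{G'}(i;R'_s)| - p|R'_s|$. My first step is a clean disjoint decomposition: since the conditioning fixes the neighborhoods $\neigh_G(i; R_s \cap R'_t)$ and $\neigh_{G'}(i; R_s \cap R'_t)$ for $t \neq s$, I write
\begin{equation*}
X = A + B + \eta, \qquad Y = A' + B' + \eta',
\end{equation*}
where $A := |\neigh_G(i; R_s \cap R'_s)| - p k_1$, $B := |\neigh_G(i; R_s \setminus R'_\cI)| - p k_2$, and $A', B'$ are the analogues in $G'$ with vertex subsets $R_s \cap R'_s$ and $R'_s \setminus R_\cI$ respectively. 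The three vertex sets are pairwise disjoint, so conditionally $A$ and $A'$ share data only through the parent-graph edges from $i$ to $R_s \cap R'_s$, while $B$ and $B'$ are mutually independent and independent of $(A, A')$. Using $\tilde\p\{\sign(X)=\sign(Y)\} = \tfrac12 + \tfrac12 \E[\sign(X)\sign(Y)]$ and the tower property, the task reduces to lower-bounding $\E[\psi(A+\eta)\,\psi'(A'+\eta')]$, where $\psi(t) := \E[\sign(t+B)]$ and $\psi'(t) := \E[\sign(t+B')]$. By a local CLT / Berry--Esseen estimate for the centered binomial $B$ (using $k_2 p \ge C$), the function $\psi$ is increasing, nearly odd, with $|\psi(t)| \ge c(|t|/\sqrt{k_2 p} \wedge 1)$ having the sign of $t$ off a negligible neighborhood of the origin, and Lipschitz constant $O(1/\sqrt{k_2 p})$ near zero; analogously for $\psi'$.

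To prove \eqref{eq:bd-2}, I condition further on the parent-graph indicators $\{X_j := \1\{ij \in G_0\} : j \in R_s \cap R'_s\}$, and set $S := \sum_j X_j \sim \Bin(k_1, p/(1-\alpha))$. Under this enriched conditioning, $A$ and $A'$ are \emph{conditionally i.i.d.}\ centered binomials with common conditional mean $\mu_S := S(1-\alpha) - pk_1$ and conditional standard deviation of order $\sqrt{\alpha k_1 p}$. Since $\mu_S$ fluctuates on the unconditional scale $\sqrt{k_1 p}$ — which dominates $\sqrt{\alpha k_1 p}$ when $\alpha$ is small — both $A$ and $A'$ concentrate around the common random value $\mu_S$. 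The Lipschitz property of $\psi, \psi'$ gives
\begin{equation*}
\E[\psi(A+\eta)\,\psi'(A'+\eta') \mid S] \approx \psi(\mu_S+\eta)\,\psi'(\mu_S+\eta'),
\end{equation*}
with error controlled by $\sqrt{\alpha k_1/k_2}$ and $\sqrt{\alpha k_1/k'_2}$. Under the hypothesis $|\eta| \vee |\eta'| \le c\sqrt{k_1 p}$, on the constant-probability event $\{|\mu_S| \ge 2c\sqrt{k_1 p}\}$ the shifts $\mu_S + \eta$ and $\mu_S + \eta'$ have the same sign, and
\begin{equation*}
\psi(\mu_S+\eta)\,\psi'(\mu_S+\eta') \ge c'\, \big(|\mu_S|/\sqrt{k_2 p} \wedge 1\big)\big(|\mu_S|/\sqrt{k'_2 p} \wedge 1\big).
\end{equation*}
Taking expectation over $S$ via a Gaussian-type lower bound for $\mu_S$, and distinguishing whether $k_1$ is smaller or larger than $k_2, k'_2$ (the minima inside the square root encode this dichotomy), produces the bound $c\sqrt{(k_1/k_2 \wedge 1)(k_1/k'_2 \wedge 1)}$. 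Tail events for $S$ account for the $2\exp(-k_1 p/3)$ correction, and the Berry--Esseen / concentration losses collapse into $6\exp(-c(\alpha^{-1} \wedge \sqrt{k_1 p}))$.

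For the weaker bound \eqref{eq:bd-0}, neither $|\eta|$ is small nor $k_1 p$ is large, so the Gaussian heuristic for $\mu_S$ is unavailable. Instead I argue by inclusion--exclusion with thresholds $T := t + \sqrt{t k_1 p} + \eta$ and $T' := t + \sqrt{t k_1 p} + \eta'$. Bernstein applied to $A$ and $A'$ (which are sums of Bernoulli-type variables with variance at most of order $k_1 p$) gives $|A| \le T$ and $|A'| \le T'$ except on an event of probability $2\exp(-t)$. On the complement, $\sign(X) = \sign(B + \eta)$ unless $|B + \eta| \le T$, an event of conditional probability at most $C(T/\sqrt{k_2 p} \wedge 1)$ by anti-concentration of the binomial $B$; analogously for $\sign(Y)$. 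Conditioning on $(A, A')$ and $B$ before applying the $B'$-estimate yields the \emph{product} form of the correction — both anti-concentration events must fail simultaneously to spoil the agreement. On the resulting favorable event, $\sign(X)$ and $\sign(Y)$ are determined by independent, approximately symmetric fluctuations of $B$ and $B'$, producing the baseline $\tfrac12$.

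The main technical obstacle is the local CLT / Berry--Esseen estimate for $\psi$ and $\psi'$: I need a quantitative lower bound on $|\psi(t)|$ on the correct scale $\sqrt{k_2 p}$, together with a matching Lipschitz constant, to produce the clean $\sqrt{(k_1/k_2 \wedge 1)(k_1/k'_2 \wedge 1)}$ form. A secondary difficulty is uniformity: the argument must handle both $k_1 \ll k_2 \wedge k'_2$ (where the bound is small and sensitive to cancellations inside $\E[\psi\psi']$) and $k_1 \gg k_2 \vee k'_2$ (where a truncation argument must replace the Gaussian-limit heuristic), and the final bookkeeping of the three correction terms in \eqref{eq:bd-2} has to be done without introducing extraneous logarithmic factors.
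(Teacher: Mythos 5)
Your proposal follows essentially the same route as the paper: the same three-way decomposition of $X$ and $Y$, the same exploitation of the shared parent-graph edges to show that the contributions from $R_s\cap R_s'$ in the two graphs agree up to $O(\sqrt{\alpha k_1 p})$ (the paper packages this as its lemma on correlation of neighborhood sizes), the same constant-probability anti-concentration event on which that common contribution exceeds $c\sqrt{k_1 p}$ in magnitude, and the same product of binomial tilts $\delta_2\delta_3$ for the independent parts $B,B'$, with \eqref{eq:bd-0} handled identically via Bernstein plus anti-concentration. One small caution: when replacing $\psi(A+\eta)$ by $\psi(\mu_S+\eta)$ you should rely on sign preservation --- i.e., that $A+\eta$ retains the sign and order of magnitude of $\mu_S$ except on an event of probability $\exp\bigl(-c(\alpha^{-1}\land\sqrt{k_1p})\bigr)$ --- rather than the global Lipschitz bound, since the Lipschitz error $\sqrt{\alpha k_1/k_2}$ is not dominated by the claimed correction terms when $k_1\gg k_2$.
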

\begin{remark}
The lemma should be interpreted as follows. If the differences
$|\neigh_G(i; R_s \cap R'_{\cI \setminus \{s\}})| - p \cdot |R_s \cap R'_{\cI \setminus \{s\}}|$ and
$|\neigh_{G'}(i; R'_s \cap R_{\cI \setminus \{s\}})| - p \cdot |R'_s \cap R_{\cI \setminus \{s\}}|$ are small then
``a non-negligible part of the randomness'' of $|\neigh_{G}(i; R_s)|-p |R_s|$ and $|\neigh_{G'}(i; R'_s)|-p |R'_s|$ comes from the
values of $|\neigh_{G}(i; R_s\cap R_s')|$ and $|\neigh_{G'}(i; R_s\cap R_s')|$, which are strongly correlated because of the
correlations between the edges of $G$ and $G'$. Therefore, in that regime the signs of $|\neigh_{G}(i; R_s)|-p |R_s|$
and $|\neigh_{G'}(i; R'_s)|-p |R'_s|$ should coincide with (conditional) probability considerably greater than $1/2$.
On the other hand, if no bound on $|\eta| \lor |\eta'|$ is available, then the signs still agree with probability at least $(\frac{1}{2}-\mbox{a small number})$.
\end{remark}

\begin{proof}[Proof of Lemma~\ref{lem:cor-sign}]
In view of the decomposition
$$
R_s = \big( R_s \cap R'_s \big) \cup \big( R_s \setminus R'_{\cI} \big) \cup \big( R_s \cap R'_{\cI \setminus \{s\}} \big)  ,
$$
we have that 
\begin{align}
|\neigh_G(i; R_s)| - p |R_s| 
= \big( |\neigh_G(i; R_s \cap R'_s)| - k_1 p \big) + \big( |\neigh_G(i; R_s \setminus R'_{\cI})| - k_2 p \big) + \eta . 
\label{eq:decomp1}
\end{align}
Analogously, it holds that 
\begin{align}
|\neigh_{G'}(i; R'_s)| - p |R'_s| 
= \big( |\neigh_{G'}(i; R'_s \cap R_s)| - k_1 p \big) + \big( |\neigh_{G'}(i; R'_s \setminus R_{\cI})| - k'_2 p \big) + \eta' . 
\label{eq:decomp2}
\end{align}
Note that $\eta$ and $\eta'$ are constants conditional on $\neigh_G (i; R_s \cap R'_{\cI \setminus \{s\}} )$ and $\neigh_{G'} (i; R'_s \cap R_{\cI \setminus \{s\}} )$. 
Moreover, the neighborhoods of $i$ in $R_s \cap R'_s$, $R_s \setminus R'_{\cI}$ and $R'_s \setminus R_{\cI}$ are independent, allowing us to analyze them separately.  

Applying Lemma~\ref{lem:neigh-diff} with $\Gamma = G$, $\Gamma' = G'$, $J = J' = R_s \cap R'_s$ and $t = c_0 \big( \frac{1}{\alpha} \land \sqrt{ k_1 p } \, \big)$ for a universal constant $c_0 \in (0, 1)$ to be chosen later, we obtain\footnote{For notational simplicity, we will not necessarily substitute $c_0 \big( \frac{1}{\alpha} \land \sqrt{ k_1 p } \, \big)$ for $t$ in the proof.}
\begin{align*}
&\tilde\p ( \cE_1^c ) \le 6 \exp(- t ) + 2\exp( - k_1 p/3 ), \quad \text{where} \\ 
&\cE_1 := \Big\{ \big| |\neigh_G(i; R_s \cap R'_s)| - |\neigh_{G'}(i; R_s \cap R'_s)| \big| \le 8 \sqrt{ c_0 k_1 p }  \Big\}.
\end{align*}
Here, we have used that, under our assumptions on parameters,
$$
4t + 4 \sqrt{ t \alpha p |J \cap J'| } + 4 \sqrt{ t p |J \triangle J'| }
\leq 8 \sqrt{ c_0 k_1 p }.
$$

In addition, since $|\neigh_G (i; R_s \cap R'_s)| \sim \Bin( k_1, p )$, Lemma~\ref{lem:bin-cdf-gap} yields 
\begin{align*}
&\tilde\p ( \cE_2^c ) \le C_1 \sqrt{ c_0 } ,  \quad \text{where} \\
&\cE_2 := \Big\{ \big| |\neigh_G (i; R_s \cap R'_s)| - k_1 p \big| > 16 \sqrt{ c_0 k_1 p }  \Big\} ,
\end{align*}
for a universal constant $C_1>0$. 
The rest of the proof is structured as follows. First, we bound the probability in question conditional on the
event $\cE_1 \cap \cE_2$ and assuming that $|\eta| \lor |\eta'|$ is small, and that $k_1 p$ is bounded below by a large constant.
Next, we provide a lower bound for the probability conditional on the event $\cE_1 \cap \cE_2^c$, again under the assumption
on $|\eta| \lor |\eta'|$ and $k_1p$.
Third, we combine the two bounds to obtain \eqref{eq:bd-2}.
Finally, we get \eqref{eq:bd-0}.

\medskip
\noindent
{\bf On the event $\cE_1 \cap \cE_2$.} 
Let us define events
\begin{align*}
&\cE_+ := \Big\{ \Big( |\neigh_G (i; R_s \cap R'_s)| - k_1 p \Big) \land \Big( |\neigh_{G'} (i; R_s \cap R'_s)| - k_1 p \Big) > 8 \sqrt{ c_0 k_1 p } \Big\} , \\
&\cE_- := \Big\{ \Big( |\neigh_G (i; R_s \cap R'_s)| - k_1 p \Big) \lor \Big( |\neigh_{G'} (i; R_s \cap R'_s)| - k_1 p \Big) < - 8 \sqrt{ c_0 k_1 p } \Big\} . 
\end{align*}
Clearly, we have $\cE_1 \cap \cE_2 \subset \cE_+ \cup \cE_-$. 

Next, take $r_2, r_3 \ge 2$ (to be chosen later). 
Since $|\neigh_G(i; R_s \setminus R'_{\cI})|$ and $|\neigh_{G'}(i; R'_s \setminus R_{\cI})|$ are independent $\Bin(k_2, p)$ and $\Bin(k'_2, p)$ random variables respectively,  Lemma~\ref{lem:bin-lower} gives the estimates
\begin{align*}
&\tilde\p\{ |\neigh_G(i; R_s \setminus R'_{\cI})| - k_2 p + r_2 \ge 0 \} = 1/2 + \delta_2 , \quad
\mbox{ for some }\delta_2 \geq c_1 \Big( \frac{ r_2 }{ \sqrt{k_2 p} } \land 1 \Big) , \\
&\tilde\p\{ |\neigh_{G'}(i; R'_s \setminus R_{\cI})| - k'_2 p + r_3 \ge 0 \} = 1/2 + \delta_3 , \quad
\mbox{ for some }\delta_3\geq c_1 \Big( \frac{ r_3 }{ \sqrt{k'_2 p} } \land 1 \Big) ,
\end{align*}
for a universal constant $c_1>0$. It follows that
\begin{align}
&\tilde\p\Big\{ \sign \big( |\neigh_G(i; R_s \setminus R'_{\cI})| - k_2 p + r_2 \big) = \sign \big( |\neigh_{G'}(i; R'_s \setminus R_{\cI})| - k'_2 p + r_3 \big) \Big\} \notag \\
&= (1/2 + \delta_2) (1/2 + \delta_3) + (1/2 - \delta_2) (1/2 - \delta_3) \notag \\
&= 1/2 + 2 \delta_2 \delta_3 
\ge \frac 12 + 2 c_1^2 \Big( \frac{ r_2 }{ \sqrt{k_2 p} } \land 1 \Big) \Big( \frac{ r_3 }{ \sqrt{k'_2 p} } \land 1 \Big) .  \label{eq:sign-same}
\end{align}
A symmetric argument shows that, if $r_2, r_3 \le -2$, then \eqref{eq:sign-same} holds with $r_2$ and $r_3$ replaced by $-r_2$ and $-r_3$ respectively. 

Finally, in view of the equations \eqref{eq:decomp1} and \eqref{eq:decomp2}, we would like to set 
\begin{align}
r_2 = |\neigh_G(i; R_s \cap R'_s)| - k_1 p + \eta , \quad
r_3 = |\neigh_{G'}(i; R_s \cap R'_s)| - k_1 p + \eta' 
\label{eq:r2r3}
\end{align}
in \eqref{eq:sign-same} to derive a lower bound on the probability that
\begin{align}
\sign \big( |\neigh_G(i; R_s )| - p|R_s| ) = \sign \big( |\neigh_{G'}(i; R'_s)| - p|R'_s| \big)  . 
\label{eq:signeq}
\end{align}
By the assumption that $|\eta| \lor |\eta'| \le c \sqrt{k_1 p}$ for $c>0$ sufficiently small compared to $c_0$, on the event $\cE_+$, we have $r_1 \land r_2 > 4 \sqrt{ c_0 k_1 p }$, while on the event $\cE_-$, we have $r_1 \lor r_2 < -4 \sqrt{ c_0 k_1 p }$. 
Note that $4 \sqrt{ c_0 k_1 p} \ge 2$ by the assumption that $k_1 p$ exceeds a large constant. 
As a result, conditional on {\it any} realization of $\neigh_G(i; R_s \cap R'_s)$ and $\neigh_{G'}(i; R_s \cap R'_s)$ such that
$\cE_+ \cup \cE_-$ holds, we can set $r_2$ and $r_3$ as in \eqref{eq:r2r3} so that \eqref{eq:sign-same} applies to give
that \eqref{eq:signeq} holds with probability at least $\frac12 + c_2 \sqrt{ \big( \frac{ c_0 k_1 }{ k_2 } \land 1 \big) \big( \frac{ c_0 k_1 }{ k'_2 } \land 1 \big) }$
for a universal constant $c_2 \in(0,1/2]$, for which we only use the randomness of $\neigh_G(i; R_s \setminus R'_{\cI})$ and $\neigh_{G'}(i; R'_s \setminus R_{\cI})$. 
Using that $\cE_1 \cap \cE_2 \subset \cE_+ \cup \cE_-$, we obtain
\begin{align*}
&\tilde\p\Big\{ \sign \big( |\neigh_G(i; R_s )| - p|R_s| \big) = \sign \big( |\neigh_{G'}(i; R'_s)| - p|R'_s| \big) \, \Big| \, \cE_1 \cap \cE_2 \Big\} \\
&\ge \frac 12 + c_2 \sqrt{ \Big( \frac{ c_0 k_1 }{ k_2 } \land 1 \Big) \Big( \frac{ c_0 k_1 }{ k'_2 } \land 1 \Big) },
\quad\mbox{ assuming that $|\eta| \lor |\eta'| \le c \sqrt{k_1 p}$ and $k_1 p \ge C$}.
\end{align*}

\medskip
\noindent
{\bf On the event $\cE_1 \cap \cE_2^c$.} 
The analysis here is similar to the previous part, so we only sketch the proof. 
Let us define 
$$
\cE_3 := \Big\{ \Big| |\neigh_G (i; R_s \cap R'_s)| - k_1 p \Big| \lor \Big| |\neigh_{G'} (i; R_s \cap R'_s)| - k_1 p \Big| \le 24 \sqrt{ c_0 k_1 p } \Big\} .
$$
It is clear that $\cE_1 \cap \cE_2^c \subset \cE_3$. 
Again, we fix $r_2, r_3 \in \R$. In this case, Lemma~\ref{lem:bin-cdf-gap} implies
\begin{align*}
&\tilde\p\{ |\neigh_G(i; R_s \setminus R'_{\cI})| - k_2 p + r_2 \ge 0 \} = 1/2 + \delta_2 , \quad
|\delta_2| \le \frac{ C_1 (|r_2| + 1) }{ \sqrt{k_2 p} } \land \frac 12 , \\
&\tilde\p\{ |\neigh_{G'}(i; R'_s \setminus R_{\cI})| - k'_2 p + r_3 \ge 0 \} = 1/2 + \delta_3 , \quad
|\delta_3| \le \frac{ C_1 (|r_3| + 1) }{ \sqrt{k'_2 p} } \land \frac 12 .
\end{align*}
As a result, 
\begin{align}
&\tilde\p\Big\{ \sign \big( |\neigh_G(i; R_s \setminus R'_{\cI})| - k_2 p + r_2 \big) = \sign \big( |\neigh_{G'}(i; R'_s \setminus R_{\cI})| - k'_2 p + r_3 \big) \Big\} \notag \\
&= 1/2 + 2 \delta_2 \delta_3 
\ge \frac 12 -  2 \Big( \frac{ C_1 (|r_2| + 1) }{ \sqrt{k_2 p} } \land \frac 12 \Big) \Big( \frac{ C_1 (|r_3| + 1) }{ \sqrt{k'_2 p} } \land \frac 12 \Big) .  
\label{eq:triv}
\end{align}
Conditioning on the event $\cE_1 \cap \cE_2^c \subset \cE_3$, choosing $r_2$ and $r_3$ as in \eqref{eq:r2r3}, and using the fact that $|\eta| \lor |\eta'| \le c \sqrt{k_1 p}$, we conclude that 
\begin{align}
&\tilde\p\Big\{ \sign \big( |\neigh_G(i; R_s )| - p|R_s| \big) = \sign \big( |\neigh_{G'}(i; R'_s)| - p|R'_s| \big) \, \Big| \, \cE_1 \cap \cE_2^c \Big\} \notag \\
&\ge \frac 12 - C_2 \sqrt{ \Big( \frac{ c_0 k_1 }{ k_2 } \land 1 \Big) \Big( \frac{ c_0 k_1 }{ k'_2 } \land 1 \Big) },
\quad\mbox{ assuming that $|\eta| \lor |\eta'| \le c \sqrt{k_1 p}$ and $k_1 p \ge C$},
\label{eq:part-2}
\end{align}
for a universal constant $C_2 > 0$. 

\medskip
\noindent
{\bf Proving the bound \eqref{eq:bd-2}.} 
It follows from the bounds on $\tilde\p(\cE_1^c)$ and $\tilde\p(\cE_2^c)$ that 
\begin{align*}
&\tilde\p ( \cE_1 \cap \cE_2^c) \le \tilde\p ( \cE_2^c ) \le C_1 \sqrt{c_0} , \\
&\tilde\p ( \cE_1 \cap \cE_2) = \tilde\p ( \cE_1 ) - \tilde\p ( \cE_1 \cap \cE_2^c ) \ge 1- 6 \exp(-t) - 2\exp( - k_1 p/3 ) - \tilde\p ( \cE_1 \cap \cE_2^c ) . 
\end{align*}
Writing $\kappa := \sqrt{ \big( \frac{ c_0 k_1 }{ k_2 } \land 1 \big) \big( \frac{ c_0 k_1 }{ k'_2 } \land 1 \big) }$ for brevity and combining the above two parts of the proof, we obtain
\begin{align*}
&\tilde\p\Big\{ \sign \big( |\neigh_G(i; R_s )| - p|R_s| ) = \sign \big( |\neigh_{G'}(i; R'_s)| - p|R'_s| \big) \Big\} \\
&\ge \tilde\p ( \cE_1 \cap \cE_2) \Big(  \frac 12 + c_2 \kappa \Big) + \tilde\p ( \cE_1 \cap \cE_2^c) \Big( \frac 12 - C_2 \kappa \Big) \\
&\ge \frac 12 
+ c_2 \kappa 
- 3 \exp(-t) -  \exp( - k_1 p/3 ) \\ 
& \quad - c_2 \kappa \Big[ 6 \exp(-t) +  2\exp( - k_1 p/3 ) + \tilde\p ( \cE_1 \cap \cE_2^c ) + \frac{C_2}{c_2} \tilde\p ( \cE_1 \cap \cE_2^c ) \Big] \\
&\ge \frac 12 
+ c_2 \kappa
- 3 \exp \Big(- c_0 \big( \alpha^{-1} \land \sqrt{k_1 p} \, \big) \Big) - \exp( - k_1 p/3 ) \\ 
& \quad - c_2 \kappa  \Big[ 6 \exp \Big(- c_0 \big( \alpha^{-1} \land \sqrt{k_1 p} \, \big) \Big) +  2\exp( - k_1 p/3 ) +  (1+C_2/c_2) C_1 \sqrt{c_0} \Big] \\
&\ge \frac 12 
+ \frac{c_2}{2} \kappa
- 6 \exp \Big(- c_0 \big( \alpha^{-1} \land \sqrt{k_1 p} \, \big) \Big) - 2\exp( - k_1 p/3 ) , 
\end{align*}
where we recall $t = c_0 \big( \frac 1\alpha \land \sqrt{k_1 p} \, \big)$. It then suffices to choose $c_0 > 0$ sufficiently small depending on $C_1, C_2$ and $c_2$, and use the assumption $k_1 p \ge C$ for $C>0$ sufficiently large depending on $c_0$. 

\medskip
\noindent
{\bf Proving the bound \eqref{eq:bd-0}.} 
The argument is again very similar to the above, so we only sketch the proof. 
By Bernstein's inequality, we have that for every $t > 0$,
\begin{align*}
&\tilde\p ( \cE_4^c ) \le 4 \exp( -t) , \quad\mbox{where}\\
&\cE_4 := \Big\{ \Big| |\neigh_G (i; R_s \cap R'_s)| - k_1 p \Big| \lor \Big| |\neigh_{G'} (i; R_s \cap R'_s)| - k_1 p \Big| \le 4t/3 + 2 \sqrt{ t k_1 p } \Big\} .
\end{align*}
For fixed $r_2, r_3 \in \R$, we again have the bound \eqref{eq:triv} as above. 
Conditioning on the event $\cE_4$ and choosing $r_2$ and $r_3$ as in \eqref{eq:r2r3}, we obtain
\begin{align*}
&\tilde\p\Big\{ \sign \big( |\neigh_G(i; R_s )| - p|R_s| ) = \sign \big( |\neigh_{G'}(i; R'_s)| - p|R'_s| \big) \, \Big| \, \cE_4 \Big\} \\
&\ge \frac 12 -  2 \Big( \frac{ C_1 (|r_2| + 1) }{ \sqrt{k_2 p} } \land \frac 12 \Big) \Big( \frac{ C_1 (|r_3| + 1) }{ \sqrt{k'_2 p} } \land \frac 12 \Big) \\
&\ge \frac 12 - C_3 \Big( \frac{t + \sqrt{t k_1 p} + \eta}{ \sqrt{ k_2 p} } \land 1 \Big) \Big( \frac{t + \sqrt{t k_1 p} +\eta'}{ \sqrt{ k'_2 p} } \land 1 \Big)   
\end{align*}
for a universal constant $C_3 > 0$. It follows that 
\begin{align*}
&\tilde\p\Big\{ \sign \big( |\neigh_G(i; R_s )| - p|R_s| ) = \sign \big( |\neigh_{G'}(i; R'_s)| - p|R'_s| \big)  \Big\} \\
&\ge \Big( 1 - 4\exp(-t) \Big) \Big[ \frac 12 - C_3 \Big( \frac{t + \sqrt{t k_1 p} + \eta}{ \sqrt{ k_2 p} } \land 1 \Big) \Big( \frac{t + \sqrt{t k_1 p} +\eta'}{ \sqrt{ k'_2 p} } \land 1 \Big)  \Big] \\
&\ge \frac 12 - 2 \exp(-t) - C_3 \Big( \frac{t + \sqrt{t k_1 p} + \eta}{ \sqrt{ k_2 p} } \land 1 \Big) \Big( \frac{t + \sqrt{t k_1 p} + \eta'}{ \sqrt{ k'_2 p} } \land 1 \Big) ,
\end{align*}
which completes the proof.
\end{proof}

We now prove Lemmas~\ref{lem:cor-sig} and~\ref{lem:uncor-sig}. 

\medskip

\begin{proof}[Proof of Lemma~\ref{lem:cor-sig}]
%
%
Let $\tilde \p$ and $\tilde \E$ denote the conditional probability and expectation respectively.  
For each $s \in \cI$, we define 
\begin{align*}
\delta_s := \tilde\p \big\{ f(i)_s = f'(i)_s \big\} - 1/2 \in [-1/2, 1/2] . 
\end{align*}
It follows that 
$$
\tilde \E \Big[ \sum_{s \in \cI} \1 \big\{ f(i)_s = f'(i)_s \big\} \Big] = \frac{\omega}{2} + \sum_{s \in \cI} \delta_s . 
$$
With all the conditioning, the randomness of the pair of signatures $(f(i)_s, f'(i)_s)$ is only with respect to $\neigh_G (i; R_s \setminus R'_{\cI \setminus \{s\}} )$ and $\neigh_{G'} (i; R'_s \setminus R_{\cI \setminus \{s\}} )$.
At the same time, the sets $\neigh_G (i; R_s \setminus R'_{\cI \setminus \{s\}} )
\cup \neigh_{G'} (i; R'_s \setminus R_{\cI \setminus \{s\}} )$ are disjoint across $s \in \cI$.
 Thus, the pairs $(f(i)_s, f'(i)_s)$ are conditionally independent across $s \in \cI$. Hoeffding's inequality then yields that, for any $c_3 > 0$ to be chosen later, 
\begin{align}
\sum_{s \in \cI} \1 \big\{ f(i)_s = f'(i)_s \big\} \ge \frac{\omega}{2}  + \Big( \sum_{s \in \cI} \delta_s \Big) - c_3 \beta \omega
\label{eq:cor-3}
\end{align}
with probability at least $1 - \exp(- 2 c_3^2 \beta^2 \omega)$. 

To bound each $\delta_s$ from below, we apply Lemma~\ref{lem:cor-sign}. 
Let $C, c > 0$ be the constants in Lemma~\ref{lem:cor-sign}. We consider three separate cases in the sequel. 

\medskip
\noindent
{\bf Case 1: $s \in \cI \setminus \tilde \cI$.}
By \eqref{eq:c-bd}, we have $|\cI \setminus \tilde \cI| \le 4 (\log n)^{1+\delta}$, so $\sum_{s \in \cI \setminus \tilde \cI} \delta_s \ge -2 (\log n)^{1+\delta}$. 

\medskip
\noindent
{\bf Case 2: $s \in \tilde \cI$, $|R_s \cap R'_s| > \beta^2 n / 2^{m+6}$.}
Let $k_1 := |R_s \cap R'_s|$.  
To apply \eqref{eq:bd-2}, we check that $k_1 p \ge C$ and $|\eta(i)_s| \lor |\eta'(i)_s| \le c \sqrt{k_1 p}$ by virtue of \eqref{eq:def-beta} and \eqref{eq:eta-bd}. 
Moreover, we have $|R_s|, |R'_s| \le \beta n / 2^{m-1}$ by \eqref{eq:card-bd}. Therefore, \eqref{eq:bd-2} gives
\begin{align*}
\delta_s &\ge c \sqrt{ \Big( \frac{ |R_s \cap R'_s|}{ |R_s \setminus R'_{\cI}|} \land 1 \Big) \Big( \frac{ |R_s \cap R'_s|}{ |R'_s \setminus R_{\cI}|} \land 1 \Big) } \\
& \qquad - 6 \exp \Big(- c \big( \alpha^{-1} \land \sqrt{ p |R_s \cap R'_s| } \, \big) \Big) - 2 \exp \Big( - \frac{ p |R_s \cap R'_s| }3 \Big) \\
&\ge c  \frac{ \beta^2 n / 2^{m+6} }{ \beta n / 2^{m-1} } 
- 6 \exp \Big(- c \big( \alpha^{-1} \land \sqrt{p \beta^2 n / 2^{m+6} } \, \big) \Big) - 2 \exp \Big( - \frac{p \beta^2 n / 2^{m+6} }3 \Big) \\
&\ge c_4 \beta ,
\end{align*}
in view of our assumptions on $n, p, \alpha, \beta$ and $m$. 

\medskip
\noindent
{\bf Case 3: $s \in \tilde \cI$, $|R_s \cap R'_s| \le \beta^2 n / 2^{m+6}$.}
By \eqref{eq:card-bd} again, we have 
$$
|R_s \setminus R'_{\cI}| 
\ge |R_s \setminus \SB'| 
= |R_s| - |R_s \cap \SB'| 
\ge \frac{\beta n}{2^{m+3}} - \frac{ \beta^2 n }{2^{m-1} } 
\ge \frac{\beta n}{2^{m+4}} ,
$$
and similarly $|R'_s \setminus R_{\cI}| \ge \frac{\beta n}{2^{m+4}}$. 
Then it follows from \eqref{eq:bd-0} with $t = \log(1/\beta)$ that
\begin{align*}
\delta_s &\ge - 2 \beta - C \Big( \frac{\log(1/\beta) + \sqrt{ p \log(1/\beta) |R_s \cap R'_s| } + \eta(i)_s }{ \sqrt{ p |R_s \setminus R'_{\cI}| } } \land 1 \Big) \\
&\qquad \qquad \qquad \cdot \Big( \frac{\log(1/\beta) + \sqrt{p \log(1/\beta) |R_s \cap R'_s| } + \eta'(i)_s }{ \sqrt{ p |R'_s \setminus R_{\cI}| } } \land 1 \Big) \\
&\ge - 2 \beta - C_2 \frac{\big(\log(1/\beta) + \sqrt{ p \log(1/\beta) \beta^2 n / 2^m } + c_1 \sqrt{ \beta^2 p n / 2^m } \big)^2 }{ p \beta n / 2^{m} } \\
&\ge - C_3 \beta \log(1/\beta),
\end{align*}
by plugging the above bounds and using our assumptions on the parameters. 

\medskip

Combining the above three cases with \eqref{eq:cor-bd-3} yields
\begin{align*}
\sum_{s \in \cI} \delta_s 
&\ge \sum_{s \in \cI \setminus \tilde \cI} \delta_s + \sum_{s \in \tilde \cI} c_4 \beta \cdot \1 \Big\{ |R_s \cap R'_s| > \beta^2 n / 2^{m+6} \Big\} \\
& \quad + \sum_{s \in \tilde \cI} \big[ - C_3 \beta \log(1/\beta) \big] \cdot \1 \Big\{ |R_s \cap R'_s| \leq \beta^2 n / 2^{m+6} \Big\} \\
&\ge -2(\log n)^{1+\delta} + c_4 \beta \cdot \big[ |\tilde \cI| - C \omega m^{3/2} \, \beta^{1/4} \log(1/\beta) \big] \\
& \quad - C_3 \beta \log(1/\beta) \cdot C \omega m^{3/2} \, \beta^{1/4} \log(1/\beta) \\
&\ge c_5 \beta \omega ,
\end{align*}
where the last step follows from that $(\log n)^{1+\delta} \ll \beta \omega$ and $\omega m^{3/2} \, \beta^{1/4} ( \log \frac{1}{\beta} )^2\ll \omega/2 \le |\tilde \cI|$ which hold in view of \eqref{eq:def-beta}, \eqref{eq:def-m} and \eqref{eq:def-omega}. 
Plugging this bound into \eqref{eq:cor-3}, we see that it suffices to take $c_3>0$ sufficiently small depending on $c_5$ to finish the proof. 
\end{proof}

\begin{proof}[Proof of Lemma~\ref{lem:uncor-sig}]
Let $\tilde \p$ and $\tilde \E$ denote the conditional probability and expectation respectively.  
Note that, conditional on \eqref{eq:neigh-cond}, the randomness of the pair of signatures $(f(i)_s, f'(j)_s)$ is only with respect to $\neigh_G (i; R_s \setminus R'_{\cI \setminus \{s\}} )$ and $\neigh_{G'} (j; R'_s \setminus R_{\cI \setminus \{s\}} )$. 
Therefore, $f(i)_s$ and $f'(j)_s$ are conditionally independent for each fixed $s \in \cI$; moreover, the pairs $(f(i)_s, f'(j)_s)$ are conditionally independent across $s \in \cI$. 

By virtue of the decomposition $R_s = \big( R_s \setminus R'_{\cI \setminus \{s\}} \big) \cup \big( R_s \cap R'_{\cI \setminus \{s\}} \big)$, we have 
$$
|\neigh_{G}(i; R_s)|-p |R_s| 
= |\neigh_G (i; R_s \setminus R'_{\cI \setminus \{s\}} )| - p |R_s \setminus R'_{\cI \setminus \{s\}}| + \eta(i)_s 
$$
where $\eta(i)_s$ is defined in \eqref{eq:eta-s}. 
Moreover, since $|\neigh_G (i; R_s \setminus R'_{\cI \setminus \{s\}} )|$ is a $\Bin(|R_s \setminus R'_{\cI \setminus \{s\}}|, p)$ random variable, it follows from the above equation and Lemma~\ref{lem:bin-cdf-gap} that
\begin{align*}
\Big| \tilde \p \Big\{ f(i)_s = 1 \Big\} - \frac 12 \Big| 
&= \Big| \tilde \p \Big\{ |\neigh_{G}(i; R_s)|-p |R_s| \ge 0  \Big\} - \frac 12 \Big| \\
&= \Big| \tilde \p \Big\{ |\neigh_G (i; R_s \setminus R'_{\cI \setminus \{s\}} )| \ge p |R_s \setminus R'_{\cI \setminus \{s\}}| - \eta(i)_s  \Big\}  - \frac 12 \Big| \\
&\le \frac{ C | \eta(i)_s | }{ \sqrt{ p |R_s \setminus R'_{\cI \setminus \{s\}}| } }  
\le \frac{ C | \eta(i)_s | }{ \sqrt{ p (|R_s| - |R_s \cap \SB'| ) } } .
\end{align*}
Furthermore, applying \eqref{eq:eta-bd}, \eqref{eq:card-bd}, and assumptions on the parameters, we get 
$$
\Big| \tilde \p \Big\{ f(i)_s = 1 \Big\} - \frac 12 \Big|  
\le \frac{ C \sqrt{ pn / (2^m (\log n)^{\delta} ) } }{ \sqrt{ p ( \beta n / 2^{m+3} - \beta^2 n / 2^{m-1} ) } } 
\le (\log n)^{-\delta/3} . 
$$


Next, by the conditional independence of $f(i)_s$ and $f'(j)_s$, we obtain 
$$
\tilde\Exp \Big[ \sum_{s \in \cI} \1 \big\{ f(i)_s = f'(j)_s \big\}  \Big] 
\le  \frac{\omega}{2} \Big( 1 + 2 (\log n)^{-\delta/3} \Big) .
$$
Then, using the conditional independence of $(f(i)_s, f'(j)_s)$ across $s \in \cI$, we can apply Hoeffding's inequality to see that 
$$
\sum_{s \in \cI} \1 \big\{ f(i)_s = f'(j)_s \big\}
\le \frac{\omega}{2} \Big( 1 + 2 (\log n)^{-\delta/3} \Big) + \frac{\beta \omega }{4 \log \log n} 
$$
with probability at least $1 - \exp \big( \frac{ - \beta^2 \omega}{8 (\log \log n)^{2} } \big)$.
This completes the proof in view of \eqref{eq:def-beta}. 
\end{proof}

\section{Probability tools}

\subsection{Concentration inequalities}

\begin{lemma}[Bernstein's inequality]\label{lem: Bern ineq}
Let $X_1, \dots, X_n$ be independent random variables taking values in $[a,b]$. 
Define $\sigma^2 := \frac 1n \sum_{i=1}^n \E \big[ (X_i - \E[X_i] )^2 \big] .$ 
Then, for any $t>0$, we have
$$
\p \Big\{ \Big| \sum_{i=1}^n \big( X_i - \E [ X_i ] \big) \Big| \ge t \Big\}  \le 2 \exp \Big( \frac{ -t^2/2 }{ \sigma^2 n + (b-a) t / 3 } \Big) .
$$
\end{lemma}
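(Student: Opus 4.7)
The plan is to carry out the standard Chernoff-type proof of Bernstein's inequality: apply an exponential Markov inequality to the sum $S := \sum_{i=1}^n (X_i - \E X_i)$, control each individual moment generating function using a Taylor-type bound, and then optimize over the free parameter $\lambda > 0$. Since the right-hand side is symmetric in $S$ and $-S$, a union bound will take care of the absolute value at the cost of a factor of $2$, so it suffices to bound $\p\{S \ge t\}$.

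First I would write $Y_i := X_i - \E X_i$, observing that $Y_i$ is mean-zero, independent across $i$, and bounded: $|Y_i| \le b - a =: M$. By Markov's inequality applied to $e^{\lambda S}$ with $\lambda > 0$,
\[
\p\{S \ge t\} \le e^{-\lambda t} \prod_{i=1}^n \E[e^{\lambda Y_i}].
\]
The key lemma on the moment generating function is the following bound for a mean-zero random variable $Y$ with $|Y| \le M$ and $0 < \lambda M < 3$:
\[
\E[e^{\lambda Y}] \le 1 + \frac{\lambda^2 \E[Y^2]}{2(1-\lambda M/3)} \le \exp\!\Big(\frac{\lambda^2 \E[Y^2]}{2(1-\lambda M/3)}\Big).
\]
I would derive this by expanding $e^{\lambda Y} = 1 + \lambda Y + \sum_{k\ge 2}(\lambda Y)^k/k!$, taking expectations (the linear term vanishes), using $|\E[Y^k]| \le M^{k-2}\E[Y^2]$ for $k \ge 2$, and then summing the resulting geometric-type series via the estimate $k! \ge 2 \cdot 3^{k-2}$.

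Multiplying the MGF bounds across $i$ and using $\sum_i \E[Y_i^2] = n\sigma^2$ gives
\[
\p\{S \ge t\} \le \exp\!\Big(-\lambda t + \frac{\lambda^2 n\sigma^2}{2(1-\lambda M/3)}\Big).
\]
The final step is to optimize by taking $\lambda := t/(n\sigma^2 + Mt/3)$, which lies in $(0, 3/M)$. A direct calculation shows $1 - \lambda M/3 = n\sigma^2/(n\sigma^2 + Mt/3)$, so the quadratic term simplifies to $\lambda t/2$, and the exponent becomes $-\lambda t/2 = -(t^2/2)/(n\sigma^2 + Mt/3)$. Combining this with the symmetric bound for $-S$ via a union bound yields the stated two-sided estimate with $M = b - a$.

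No serious obstacles are expected; the only non-mechanical step is the MGF bound, and there the choice of the $k! \ge 2 \cdot 3^{k-2}$ estimate is exactly what produces the constant $1/3$ in the denominator $\sigma^2 n + (b-a)t/3$ of the stated inequality.
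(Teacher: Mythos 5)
Your proof is correct: this is the standard Chernoff--Cram\'er derivation of Bernstein's inequality, and all the steps check out (the bound $|Y_i|\le b-a$, the moment estimate $|\E[Y^k]|\le (b-a)^{k-2}\E[Y^2]$, the factorial bound $k!\ge 2\cdot 3^{k-2}$, and the choice $\lambda = t/(n\sigma^2+(b-a)t/3)$ which makes the exponent collapse to $-\lambda t/2$). The paper states this lemma as a classical tool without proof, so there is no alternative argument to compare against; your write-up supplies exactly the standard missing derivation.
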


\begin{lemma}[Bernstein's inequality for sampling without replacement] 
\label{lem:bern-wo}
Let $\{x_1, \dots, x_N\}$ be a population of $N$ real numbers. For $n \in [N]$, let $X_1, \dots, X_n$ be a uniformly random sample from $\{x_1, \dots, x_N\}$ without replacement. Define
$$
a := \min_{i \in [N]} x_i, \quad
b := \max_{i \in [N]} x_i, \quad
\bar x := \frac 1N \sum_{i=1}^N x_i, \quad 
\sigma^2 := \frac 1N \sum_{i=1}^N (x_i - \bar x)^2 .
$$
Then, for any $t>0$, we have
$$
\p \Big\{ \Big| \sum_{i=1}^n X_i - n \bar x \Big| \ge t \Big\} \le 2 \exp \Big( \frac{ - t^2/2 }{ \sigma^2 n + (b-a) t/3 } \Big) . 
$$
\end{lemma}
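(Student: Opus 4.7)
The plan is to reduce sampling without replacement to i.i.d.\ sampling with replacement, and then invoke Lemma~\ref{lem: Bern ineq} in the latter setting. Let $Y_1,\dots,Y_n$ be i.i.d.\ uniform samples from $\{x_1,\dots,x_N\}$ drawn \emph{with} replacement. Each $Y_i$ has mean $\bar x$, variance exactly $\sigma^2$, and lies in $[a,b]$, so Lemma~\ref{lem: Bern ineq} applied to $Y_1,\dots,Y_n$ yields precisely the right-hand side of the desired Bernstein bound, but for the with-replacement sum $\sum_{i=1}^n Y_i$.

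To transfer the tail bound to the without-replacement sum $\sum_{i=1}^n X_i$, I would invoke Hoeffding's classical comparison theorem: for every continuous convex function $\phi : \R \to \R$,
$$
\E \, \phi\Big(\sum_{i=1}^n X_i\Big) \;\le\; \E \, \phi\Big(\sum_{i=1}^n Y_i\Big).
$$
Specializing to $\phi(u) = e^{\lambda(u - n\bar x)}$ for each $\lambda \ge 0$ shows that the moment-generating function of $\sum_{i=1}^n (X_i - \bar x)$ is dominated pointwise by that of $\sum_{i=1}^n (Y_i - \bar x)$. A Chernoff bound followed by optimization over $\lambda$ then produces the same Bernstein estimate for the upper tail $\{\sum_i X_i - n\bar x \ge t\}$; applying the identical argument to the negated population $\{-x_1,\dots,-x_N\}$ controls the lower tail, and a union bound yields the factor of~$2$.

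The only non-routine ingredient is Hoeffding's comparison inequality itself, which I would cite rather than reprove. Its standard derivation represents the without-replacement sample as a conditional expectation of a uniformly random permutation of an i.i.d.\ sequence, after which Jensen's inequality delivers the bound; an alternative proof swaps one coordinate at a time between the two sampling schemes. Given this classical input, the remaining work is entirely routine, since the range, mean, and variance of the surrogate $Y_i$'s are read off directly from the definitions of $a$, $b$, $\bar x$ and $\sigma^2$, and Lemma~\ref{lem: Bern ineq} finishes the argument.
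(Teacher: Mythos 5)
The paper states Lemma~\ref{lem:bern-wo} without proof, treating it as a standard probability tool, so there is no in-paper argument to compare against. Your proposal is the classical derivation and it is correct: Hoeffding's convex-ordering comparison between sampling with and without replacement dominates the moment-generating function of $\sum_i(X_i-\bar x)$ by that of the i.i.d.\ surrogate, and the usual Chernoff optimization for bounded independent variables with variance $\sigma^2$ then yields exactly the stated bound (the two-sided version following from the negated population). The one point worth phrasing carefully is that you cannot literally transfer the \emph{tail bound} of Lemma~\ref{lem: Bern ineq}; what transfers is the MGF, so you must rerun the Chernoff optimization on the dominated MGF rather than cite the finished inequality for the $Y_i$'s --- which is what your second paragraph in fact does, so the argument is complete modulo the standard Bernstein MGF estimate.
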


\subsection{Binomial distributions}

%

Recall that $F_{k,p}$ denotes the cumulative distribution function of $\Bin(k,p)$. The following lemmas are elementary.

\begin{lemma}\label{l: comp of binom}
For any $p\in(0,1)$, and $a, b, t \in \N_0$ such that $t \le a \le b$, we have 
$$
F_{b,p}(t)\leq F_{a,p}(t)\leq F_{b,p}(t+(b-a)).
$$
\end{lemma}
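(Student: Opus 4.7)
The plan is to prove both inequalities via a single natural coupling. Let $X \sim \mathrm{Bin}(a,p)$ and $Y \sim \mathrm{Bin}(b-a,p)$ be independent. Since a sum of independent binomials with the same parameter $p$ is binomial, we have $X+Y \sim \mathrm{Bin}(b,p)$, so that $F_{a,p}(t) = \Prob\{X \le t\}$ and $F_{b,p}(s) = \Prob\{X+Y \le s\}$ for every $s \in \N_0$.

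For the first inequality $F_{b,p}(t) \le F_{a,p}(t)$, I would observe that $Y \ge 0$ deterministically, hence the event $\{X+Y \le t\}$ is contained in the event $\{X \le t\}$. Taking probabilities gives the claim immediately.

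For the second inequality $F_{a,p}(t) \le F_{b,p}(t+(b-a))$, I would use the complementary bound $Y \le b-a$ (which holds deterministically since $Y$ is a sum of $b-a$ Bernoulli indicators). Then $\{X \le t\} \subseteq \{X+Y \le t+(b-a)\}$, and taking probabilities finishes the proof.

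There is no substantial obstacle: the whole argument is the observation that one binomial can be coupled as a partial sum of another, so both inequalities become trivial set inclusions. The only thing to note is that the hypothesis $t \le a$ is not actually needed for the stated inequalities to hold (they are valid for all $t \in \N_0$), but it is harmless to keep it.
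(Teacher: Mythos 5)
Your coupling argument is correct and complete: writing a $\Bin(b,p)$ variable as $X+Y$ with $X\sim\Bin(a,p)$ and $Y\sim\Bin(b-a,p)$ independent, and using $0\le Y\le b-a$ to get the two set inclusions, is exactly the elementary argument the paper leaves implicit (it states the lemma without proof). Your side remark that the hypothesis $t\le a$ is not needed is also accurate.
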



\begin{lemma} \label{lem:bin-cdf-gap}
Let $X$ be a $\Bin(k, p)$ random variable where $k \in \N_0$ and $p \in (0,1)$. There exists a universal constant $C>0$ such that, for any $t, r > 0$, 
$$
\p \{ X \in [t, t+r] \} \le \frac{C r}{ \sqrt{ k p (1-p) } } .
$$
\end{lemma}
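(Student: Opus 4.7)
The plan is to bound $\p\{X \in [t,t+r]\}$ by the number of integers in $[t,t+r]$, which is at most $\lfloor r \rfloor + 1$, times the maximum value of the binomial probability mass function. It then suffices to prove the pointwise estimate
\[
\max_{j \in \{0,1,\dots,k\}} \p\{X=j\} \le \frac{C'}{\sqrt{kp(1-p)}}
\]
for a universal constant $C'>0$, since $\lfloor r \rfloor + 1 \le 2\max(r,1)$, and the regime $r < 1$ is absorbed into the final constant because the right-hand side $Cr/\sqrt{kp(1-p)}$ is $\Omega(1)$ precisely when $kp(1-p)$ is small, which is exactly the regime in which the pointwise bound itself is weak.

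To establish the pointwise bound, I would locate the mode of $\Bin(k,p)$ at $j^\ast = \lfloor (k+1)p \rfloor$ using the standard ratio identity $\p\{X=j\}/\p\{X=j-1\} = (k-j+1)p/(j(1-p))$, which crosses $1$ at $j=(k+1)p$, so that $\max_j \p\{X=j\} = \binom{k}{j^\ast}p^{j^\ast}(1-p)^{k-j^\ast}$. Applying Stirling's formula $m! = \sqrt{2\pi m}\,(m/e)^m(1+O(1/m))$ to each of the three factorials in this expression and using $j^\ast = kp + O(1)$, the factors $p^{j^\ast}(1-p)^{k-j^\ast}$ exactly balance $(j^\ast)^{-j^\ast}(k-j^\ast)^{-(k-j^\ast)}k^k$ up to $O(1)$ multiplicative terms, while the $\sqrt{2\pi m}$ prefactors combine to $1/\sqrt{2\pi k p(1-p)}$. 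This yields the desired bound whenever $\min(j^\ast, k-j^\ast) \ge 1$.

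The degenerate cases $j^\ast \in \{0, k\}$ or $kp(1-p) = O(1)$ are handled by the trivial bound $\p\{X=j\} \le 1$ together with the observation that $C'/\sqrt{kp(1-p)}$ is itself $\Omega(1)$ in those regimes, once $C'$ is taken sufficiently large. The sole nontrivial ingredient is the Stirling estimate, which is entirely classical; I do not anticipate any serious obstacles.
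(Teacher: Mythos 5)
The paper offers no proof of this lemma (it is listed among the ``elementary'' facts), so there is no in-paper argument to compare against. Your route --- bound the number of integers in $[t,t+r]$ by $\lfloor r\rfloor+1$ and multiply by the maximum of the binomial p.m.f., then bound that maximum by $C'/\sqrt{kp(1-p)}$ via the mode location $\lfloor (k+1)p\rfloor$ and Stirling --- is the standard one, and the Stirling computation together with your handling of the degenerate cases $j^\ast\in\{0,k\}$ and $kp(1-p)=O(1)$ is correct. For $r\ge 1$ the proof goes through as written.

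The one genuine flaw is the treatment of $r<1$. You claim this regime is ``absorbed into the final constant because the right-hand side is $\Omega(1)$ precisely when $kp(1-p)$ is small.'' That is false: take $k=100$, $p=1/2$, $t=50$; then $\p\{X\in[t,t+r]\}=\p\{X=50\}\approx 0.08$ for \emph{every} $r>0$, while $Cr/\sqrt{kp(1-p)}=Cr/5\to 0$ as $r\to 0$, so no universal constant covers small $r$. The reason your argument cannot close this case is that the lemma as literally stated is false for small $r$, not that you are missing an idea. What your argument actually establishes is $\p\{X\in[t,t+r]\}\le C(r+1)/\sqrt{kp(1-p)}$, i.e.\ the stated bound restricted to $r\ge1$, and that is the form the paper in fact uses everywhere (note, e.g., the bound $|\delta_2|\le C_1(|r_2|+1)/\sqrt{k_2p}$ invoked in the proof of Lemma~\ref{lem:cor-sign}, and that every other application has interval length at least $1$). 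You should state and prove the $(r+1)$ version, or add the hypothesis $r\ge 1$, rather than assert that the small-$r$ case is swept into the constant.
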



\begin{lemma}
\label{lem:bin-lower}
Let $X$ be a $\Bin(k, p)$ random variable where $k \in \N_0$ and $p \in (0,1)$. 
There exist universal constants $C, c>0$ such that the following holds: If $k p (1-p) \ge C$, then for any $r \ge 2$, we have
$$
\p \{ X > kp - r \} \ge \frac 12 + c \Big( \frac{ r}{ \sqrt{ k p (1-p) } } \land 1\Big) , \quad
\p \{ X < kp + r \} \ge \frac 12 +  c \Big( \frac{ r}{ \sqrt{ k p (1-p) } } \land 1\Big) .
$$
\end{lemma}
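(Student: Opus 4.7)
\medskip
\noindent
\textbf{Proof proposal for Lemma~\ref{lem:bin-lower}.}
First I would observe that the two inequalities are equivalent: replacing $X$ by $k-X$ (which is $\mathrm{Bin}(k,1-p)$) and $r$ by $r$ turns the ``right'' inequality into the ``left'' one while leaving $kp(1-p)=k(1-p)p$ invariant, so it suffices to prove $\Prob\{X>kp-r\}\ge \tfrac12+c(r/\sigma\wedge 1)$, where I write $\sigma:=\sqrt{kp(1-p)}$. The plan is to separate the regimes $r\ge \sigma$ and $2\le r<\sigma$, and handle the former by a one-shot tail bound and the latter by combining the classical fact that the median of $\Bin(k,p)$ is within $1$ of $kp$ with the local central limit theorem.

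\medskip
In the regime $r\ge \sigma$, the conclusion reduces to showing $\Prob\{X>kp-\sigma\}\ge \tfrac12+c$ for a universal $c>0$. This follows from the Berry--Esseen inequality applied to the standardized sum $\sigma^{-1}(X-kp)$: since $\Prob\{X\le kp-\sigma\}\le 1-\Phi(1)+C_0/\sigma$, choosing $C$ large enough in the hypothesis $kp(1-p)\ge C$ makes $C_0/\sigma$ negligible, and the bound $1-(1-\Phi(1))-C_0/\sigma\ge 0.8$ yields $\Prob\{X>kp-r\}\ge \tfrac12+c$.

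\medskip
In the regime $2\le r<\sigma$, I would use two ingredients. First, a classical result (Kaas--Buhrman) states that a median $M$ of $\Bin(k,p)$ lies in $\{\lfloor kp\rfloor,\lceil kp\rceil\}$, whence $\Prob\{X\ge M\}\ge \tfrac12$ and $|M-kp|\le 1\le r/2$. Second, the local central limit theorem for the binomial provides the lower bound $\Prob\{X=j\}\ge c_1/\sigma$ uniformly over integers $j$ with $|j-kp|\le \sigma$, valid once $\sigma$ is larger than a universal constant (which is again ensured by taking $C$ large in the hypothesis). Decomposing
$$
\Prob\{X>kp-r\}=\Prob\{X\ge M\}+\Prob\{kp-r<X<M\},
$$
the first term contributes $\tfrac12$. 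The second term is a sum of $\Prob\{X=j\}\ge c_1/\sigma$ over integers $j\in(kp-r,M)$, and since $M\ge kp-1$ and $r\ge 2$, the number of such integers is at least a constant multiple of $r$. Multiplying gives the required anti-concentration $\ge c\,r/\sigma$.

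\medskip
The main obstacle is constant management for the boundary case $r=2$, where the median offset of order $1$ is comparable to $r$ itself, so one must be slightly careful to avoid losing the gain coming from the local CLT. I would address this by using the pivot $\lceil kp-r/2\rceil$ in place of $M$ (so that the interval $(kp-r,\lceil kp-r/2\rceil]$ provably contains at least $\lfloor r/2\rfloor\ge 1$ integers at which the local CLT applies) and by invoking Berry--Esseen to certify that $\Prob\{X\ge \lceil kp-r/2\rceil\}\ge \tfrac12-C_0/\sigma$; taking $C$ in the hypothesis sufficiently large then absorbs the $C_0/\sigma$ correction into the $c_1 r/\sigma$ signal. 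All other steps are routine estimates on $\Phi$ and on binomial coefficients via Stirling's formula underlying the local CLT.
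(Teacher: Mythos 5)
Your proposal is correct in outline but follows a genuinely different route from the paper, and it contains one slip in the constant management that needs repair. The paper also reduces to $r\le\sigma:=\sqrt{kp(1-p)}$ (by monotonicity in $r$, with no Berry--Esseen needed for large $r$) and also starts from the fact that a median lies in $\{\lfloor kp\rfloor,\lceil kp\rceil\}$, but it extracts the $r/\sigma$ gain without any local CLT: combining the median fact, Bernstein's inequality, and the anti-concentration bound of Lemma~\ref{lem:bin-cdf-gap}, it shows that the interval $\big(kp-C'\sigma,\lfloor kp\rfloor\big)$ carries probability at least $0.3$, and then uses monotonicity of the binomial p.m.f.\ below the mode to conclude that the $r-1$ integers of that interval nearest $\lfloor kp\rfloor$ carry at least a $(r-1)/(C'\sigma)$ fraction of this mass. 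This keeps the proof within the tools already stated in the appendix. Your route instead uses pointwise lower bounds $\p\{X=j\}\ge c_1/\sigma$ for $|j-kp|\le\sigma$ from the local CLT, plus Berry--Esseen for the regime $r\ge\sigma$; that is a legitimate alternative at the cost of importing two external quantitative results.

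The slip is in your fix for the boundary case $r=2$. You replace the pivot $M$ by $\lceil kp-r/2\rceil$, invoke Berry--Esseen to get $\p\{X\ge\lceil kp-r/2\rceil\}\ge\frac12-C_0/\sigma$, and claim that ``taking $C$ in the hypothesis sufficiently large absorbs the $C_0/\sigma$ correction into the $c_1 r/\sigma$ signal.'' It cannot: both the correction and the signal scale as $1/\sigma$, so enlarging $\sigma$ leaves their ratio unchanged, and for $r=2$ one would need $c_1 r$ to exceed $C_0$ with room to spare, which fails for the standard Berry--Esseen and local-CLT constants (roughly $C_0\approx 0.47$ versus $c_1\approx 0.24$). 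Fortunately, the obstacle you were trying to circumvent is not actually there. With the original pivot $M\ge\lfloor kp\rfloor$, the open interval $(kp-r,M)$ contains the integers $\lfloor kp-r\rfloor+1,\dots,\lfloor kp\rfloor-1$, hence at least $\lfloor kp\rfloor-1-\lfloor kp-r\rfloor\ge\lfloor r\rfloor-1\ge 1$ of them for every $r\ge 2$ (using $\lfloor kp-r\rfloor+\lfloor r\rfloor\le\lfloor kp\rfloor$), and $\lfloor r\rfloor-1\ge r/4$ for all $r\ge 2$. Each such integer lies within distance $r\le\sigma$ of $kp$, so the local CLT gives the gap term at least $c_1 r/(4\sigma)$ on top of $\p\{X\ge M\}\ge\frac12$, with no Berry--Esseen correction to absorb. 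With that repair your argument is complete.
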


\begin{proof}
%
%
We focus on proving the first inequality, as the second is analogous. 
Note that we may assume $r \le \sqrt{ k p (1-p) }$ without loss of generality. 
Let $C'>0$ denote a sufficiently large constant throughout the proof. 

Since the median of a $\Bin(k,p)$ random variable is either $\lfloor kp \rfloor$ or $\lceil kp \rceil$, we have
$$
\p \{ X > \lceil kp \rceil  \} \le 1/2  .
$$
Moreover, it holds that 
$$
\p\{ X = \lfloor kp \rfloor \text{ or } \lceil kp \rceil \} \le C' / \sqrt{ kp(1-p) } \le 0.1 , $$
if $kp(1-p) \ge C \ge 100 (C')^2$. 
In addition, Bernstein's inequality gives 
$$
\p \Big\{ X \in \Big( kp - C' \sqrt{k p (1-p)} , kp + C' \sqrt{k p (1-p)} \Big) \Big\} \ge 0.9 .
$$
Combining these bounds, we easily obtain
$$
\p \Big\{ X \in \Big( kp - C' \sqrt{k p (1-p)} , \lfloor kp \rfloor \Big) \Big\} \ge 0.3 .
$$
Since the binomial p.m.f.\ is monotonically increasing on $\big( kp - C' \sqrt{k p (1-p)} , \lfloor kp \rfloor \big)$, we get 
$$
\p \Big\{ X \in \Big( kp - r , \lfloor kp \rfloor \Big) \Big\} 
\ge \frac{ 0.3 (r - 1) }{ C' \sqrt{k p (1-p)} } .
$$
Together with $\p \{ X \ge \lfloor kp \rfloor \} \ge 1/2 $, this completes the proof.
\end{proof}

\end{document}